\Crefname{equation}{Eq.}{Eqs.}
\Crefname{figure}{Fig.}{Figs.}
\Crefname{tabular}{Tab.}{Tabs.}
\newcommand\mlnode[1]{\fbox{\begin{tabular}{@{}c@{}}#1\end{tabular}}}
\theoremstyle{plain}
\newtheorem{thm}{Theorem}[section] % reset theorem numbering for each chapter
\newtheorem{lem}[thm]{Lemma}
\newtheorem{prop}[thm]{Proposition}
\newtheorem{cor}[thm]{Corollary}
\newtheorem{defi}[thm]{Definition}
\newtheorem*{defi*}{Definition}
\newtheorem{example}[thm]{Example}
\newtheorem*{thm*}{Theorem}
\theoremstyle{definition}
\newtheorem*{remark}{Remark}
\newcommand{\nn}{\mathbb{N}}
\newcommand{\zz}{\mathbb{Z}}
\newcommand{\rr}{\mathbb{R}}
\newcommand{\cc}{\mathbb{C}}
\newcommand{\Abb}{\mathbb{A}}
\newcommand{\dd}{\mathrm{\normalfont d}}
\newcommand{\GL}{\mathrm{GL}}
\newcommand{\Oo}{\mathrm{O}}
\newcommand{\Pp}{\mathrm{P}}
\newcommand{\PO}{\mathrm{PO}}
\newcommand{\gl}{\mathfrak{gl}}
\newcommand{\Ad}{\mathrm{Ad}}
\newcommand{\SL}{\mathrm{SL}}
\newcommand{\SO}{\mathrm{SO}}
\newcommand{\slfrak}{\mathfrak{sl}}
\newcommand{\Uu}{\mathrm{U}}
\newcommand{\diag}{\mathrm{diag}}
\newcommand{\lfrak}{\mathfrak{l}}
\newcommand{\nfrak}{\mathfrak{n}}
\newcommand{\zfrak}{\mathfrak{z}}
\newcommand{\pfrak}{\mathfrak{p}}
\newcommand{\hfrak}{\mathfrak{h}}
\newcommand{\sfrak}{\mathfrak{s}}
\newcommand{\tfrak}{\mathfrak{t}}
\newcommand{\ofrak}{\mathfrak{o}}
\newcommand{\pgl}{\mathfrak{pgl}}
\newcommand{\pofrak}{\mathfrak{po}}
\newcommand{\dS}{\mathrm{dS}}
\newcommand{\CP}{\mathbb{CP}}
\newcommand{\RP}{\mathbb{RP}}
\newcommand{\PGL}{\mathrm{PGL}}
\newcommand{\Xx}{\mathsf{X}}
\newcommand{\Gg}{\mathsf{G}}
\newcommand{\Hcal}{\mathcal{H}}
\newcommand{\Dcal}{\mathcal{D}}
\newcommand{\Ocal}{\mathcal{O}}
\newcommand{\Ucal}{\mathcal{U}}
\newcommand{\Afrak}{\mathfrak{A}}
\newcommand{\Extalt}{\mathchoice{{\textstyle\bigwedge}}%
    {{\bigwedge}}%
    {{\textstyle\wedge}}%
    {{\scriptstyle\wedge}}}
\newcommand{\sgn}{\mathrm{sgn}}
\newcommand{\glfrak}{\mathfrak{gl}}
\newcommand{\thisd}{{\mathrm{\normalfont d}}}
\newcommand{\totimes}{\,\tilde{\otimes}\,}
\newcommand{\twedge}{\,\tilde{\wedge}\,}
\newcommand{\todot}{\,\tilde{\odot}\,}
\newcommand{\vect}{\mathrm{Vec}}
\newcommand{\vs}{\vspace{0.2cm}}
\tikzset{
    labl/.style={anchor=north, rotate=90, inner sep=1.mm}
}
\begin{document}

\title{Quantum fields on projective geometries}
\author[1,2]{Daniel Spitz\footnote{\href{mailto:daniel.spitz@mis.mpg.de}{daniel.spitz@mis.mpg.de}}}
\affil[1]{Institut f\"ur theoretische Physik, Ruprecht-Karls-Universit\"at Heidelberg, \mbox{Philosophenweg 16}, 69120 Heidelberg, Germany}
\affil[2]{Max-Planck-Institut f\"ur Mathematik in den Naturwissenschaften, \mbox{Inselstraße 22, 04103 Leipzig}, Germany}

\date{\today}
\setcounter{Maxaffil}{0}
\renewcommand\Affilfont{\itshape\normalsize}

\maketitle 

\begin{abstract}
Considering homogeneous four-dimensional space-time geometries within real projective geometry provides a mathematically well-defined framework to discuss their deformations and limits without the appearance of coordinate singularities. 
On Lie algebra level the related conjugacy limits act isomorphically to concatenations of contractions.
We axiomatically introduce projective quantum fields on homogeneous space-time geometries, based on correspondingly generalized unitary transformation behavior and projectivization of the field operators.
Projective correlators and their expectation values remain well-defined in all geometry limits, which includes their ultraviolet and infrared limits.
They can degenerate with support on space-time boundaries and other lower-dimensional space-time subspaces.
We explore fermionic and bosonic superselection sectors as well as the irreducibility of projective quantum fields.
Dirac fermions appear, which obey spin-statistics as composite quantum fields.
The framework systematically formalizes and generalizes the ambient space techniques regularly employed in conformal field theory.
\end{abstract}

% ----------------------- new section ---------------------

\section{Introduction}
Space-time geometries set the stage for physical models and can themselves take part in their dynamics. 
We consider four-dimensional homogeneous (Klein) geometries $(\Xx,\Oo)$, which consist of a space-time manifold $\Xx$ and a structure group $\Oo$ acting transitively on $\Xx$.
Physically relevant examples include Poincaré, de Sitter and anti-de Sitter, non-relativistic Galilei and ultra-relativistic Carroll geometries~\cite{Bergshoeff:2022eog}.

Limits of deformed geometries can naturally appear for general-relativistic space-times, e.g.~degenerate near-horizon limits around black holes, or steadily diluted matter, for which curved geometries transform towards flat Poincaré geometry.
In physics, such limits are often taken by contracting the Lie algebras of structure groups, i.e., by consistently sending individual commutators to zero~\cite{inonu1953contraction,Bergshoeff:2022eog}. 
Given the setting of geometries $(\Xx,\Oo)$, commutators of contracted Lie algebras of structure groups are in general not matrix commutators anymore and space-time manifolds remain inconsistently unaltered.
Instead, many four-dimensional geometries $(\Xx,\Oo)$ can be considered within the real projective geometry $(\RP^4,\PGL_5\rr)$. 
Then a canonical framework exists to discuss their deformations and limits~\cite{cooper2018limits}, based on matrix products. 
In the projective setting, geometries remain well-defined in limits and merely degenerate.
Beyond the mentioned examples, this allows for the description of finite and infinite dilatations in specific space-time directions~\cite{Gutowski:2004ez, Kaul:2016lhx, Bagchi:2022nvj, Faedo:2022hle} and scale transformations as for renormalization group studies~\cite{salmhofer2007renormalization}. 
We show that for Lie algebras of many structure groups such limits are isomorphic to compositions of contractions.
Considering projective geometric structures, no infinite blow-ups occur for instance for projective tensor fields.

Holographic correspondences provide intriguing instances, where information of fields in the bulk of a higher-dimensional space-time can be represented in terms of (non-local) degrees of freedom on its lower-dimensional boundary.
Most prominently, in the anti-de Sitter/conformal field theory (AdS/CFT) correspondence~\cite{Maldacena:1997re} supergravity in the bulk of $\mathrm{AdS}^5\times S^5$ is linked to supersymmetric Yang-Mills theory on its boundary.
A number of works have explored flat limits of the AdS/CFT correspondence, see e.g.~\cite{Susskind:1998vk,Fitzpatrick:2010zm,Hijano:2020szl}.
Against this background, given a somewhat general quantum field theory (QFT), it would be beneficial to enhance the understanding of the form of correlation functions in geometry limits within a unified, mathematically concise framework.

This motivates the introduction of projective quantum fields on the ambient projective geometry $(\RP^4,\PGL_5\rr)$ in this work, which encompasses aspects of the more traditional cases of quantum fields on fixed geometries such as Poincaré or de Sitter geometry via geometry restriction.
For their description, we note that QFTs can be suitably formulated on curved space-time geometries in non-rigorous~\cite{Parker:2009uva} and in algebraic approaches~\cite{Wald:1995yp, Hollands:2014eia, Fredenhagen:2014lda}.
We focus on the involved representation-theoretic structures in an axiomatic approach by means of generalizing part of the Wightman axioms~\cite{Haag1996Local}.
To enhance the comparability with common, non-rigorous QFT formulations, the formulation of projective quantum fields is based on the existence of a Hilbert space, even if for a given set of field operators unitarily inequivalent Hilbert space formulations can exist~\cite{Wald:1995yp, Haag1996Local}.
These can all be considered equivalent for the derivations in this work, since at no point canonical commutation relations and an explicit form of the Hilbert space are employed.
Furthermore, we do not consider implications of global hyperbolicity of the space-time manifold, causality and the spectrum condition, which are therefore not assumed.
\vs

\begin{defi*}[Shortened \Cref{DefFieldOperator}]
A \emph{projective quantum field} $(U, \rho, \{[\hat{\Ocal}([x])]\allowbreak |\,[x]\in \RP^4\})$ consists of a projective unitary $\PGL_5\rr$ representation $U$ on a Hilbert space $\Hcal$ with adjoint~${}^\dagger$, a finite-dimensional complex $\overline{\PGL_5\rr}$ representation $\rho$, and an equivalence class of tuples of operator-valued tempered distributions $[\hat{\Ocal}_1([x]),\dots,\allowbreak \hat{\Ocal}_{\dim\rho}([x])]$ defined modulo $C^\infty(\RP^4,\rr_{\neq 0})$ prefactors, with domain a dense subspace of $\Hcal$ and their smeared representatives satisfying a boundedness criterion as specified in the main text, such that for all $[x]\in\RP^4$, $[g]\in \PGL_5\rr$, $\alpha=1,\dots,\dim\rho$:
\begin{equation*}
U([g])\hat{\Ocal}_\alpha([x])U^\dagger([g]) = \sum_{\beta=1}^{\dim\rho}\rho_{\alpha\beta}([g^{-1}]) \, \hat{\Ocal}_\beta([g\cdot x])\,.
\end{equation*}
On the geometry $(\Xx,\Oo) <  (\RP^4,\PGL_5\rr)$, the projective quantum field is given by restriction:
\begin{equation*}
\hat{\Ocal}|_{(\Xx,\Oo)}:= (U|_\Oo,\rho,\{[\hat{\Ocal}([x])]\,|\, [x]\in\Xx\})\,.
\end{equation*}
\end{defi*}
\vs

We show that such projective quantum fields transform naturally and smoothly under deformations and limits of geometries, but their support can shift to lower-dimensional space-time closure subspaces.

Given a projective quantum field, its projective correlators can be considered, i.e., projectivized tensor products of smeared field operator representatives, which form by construction bounded operators.
Concerning their behavior under geometry deformations and limits, we show the following.
\vs

\begin{thm*}[Shortened \Cref{ThmCorrelatorDegenerationInLimits}]
Projective correlators remain bounded in limits of geometries.
If the space-time geometry degenerates in the limit process, the limiting projective correlators are degenerate as finite-rank tensors with field operator components and have support only on space-time boundaries and other lower-dimensional space-time subspaces.
\end{thm*}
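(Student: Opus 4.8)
The plan is to reduce the assertion to the ambient projective picture, where a conjugacy limit of geometries is encoded, following \cite{cooper2018limits}, by a divergent path $[g_t]\in\PGL_5\rr$: the deformed geometry is $[g_t]\cdot(\Xx,\Oo)=([g_t]\cdot\Xx,\,[g_t]\,\Oo\,[g_t^{-1}])$, and it converges as $t\to\infty$ to $(\Xx_\infty,\Oo_\infty)$ with $\Xx_\infty$ the (subsequential) limit of $[g_t]\cdot\Xx$ inside $\RP^4$. By the defining covariance of a projective quantum field, the field restricted to $[g_t]\cdot(\Xx,\Oo)$ is, up to the constant change of the internal $\rho$-basis given by $\rho([g_t])$ and conjugation by the unitary $U([g_t])$, the field restricted to $(\Xx,\Oo)$: for $[y]=[g_t\cdot x]$ with $[x]\in\Xx$ one has $\hat{\Ocal}_\beta([y])=\sum_\alpha\rho_{\beta\alpha}([g_t])\,U([g_t])\,\hat{\Ocal}_\alpha([x])\,U^\dagger([g_t])$. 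Smearing a correlator of $n$ fields against test functions $\varphi_1,\dots,\varphi_n$ and changing variables, the deformed projective correlator at parameter $t$ is the class of $\sum \rho_{\beta_1\alpha_1}([g_t])\cdots\rho_{\beta_n\alpha_n}([g_t])\,U([g_t])^{\otimes n}\big(\hat{\Ocal}_{\alpha_1}(\psi_{1,t})\otimes\cdots\otimes\hat{\Ocal}_{\alpha_n}(\psi_{n,t})\big)U^\dagger([g_t])^{\otimes n}$, where $\psi_{i,t}([x])=\varphi_i([g_t\cdot x])\,|J_t([x])|$ is the pullback of $\varphi_i$ twisted by the Jacobian of $[x]\mapsto[g_t\cdot x]$; the $C^\infty(\RP^4,\rr_{\neq 0})$-prefactor freedom of the field, together with the $t$-wise freedom to choose representatives, lets me rescale this by any nowhere-vanishing smooth function, in particular by a scalar rate $\lambda(t)^{-n}$.

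First I would prove boundedness. Unitary conjugation is norm-preserving, so it suffices to control $\rho([g_t])^{\otimes n}$ together with the smeared operators $\hat{\Ocal}_{\alpha_i}(\psi_{i,t})$. I would choose the field-level prefactor $c_t\in C^\infty(\RP^4,\rr_{\neq0})$ and the rate $\lambda(t)$ so as to cancel the joint growth of the finite-dimensional matrices $\rho([g_t])$ and of the Jacobian-twisted densities $\psi_{i,t}$; with this normalization the Schwartz seminorms of the $\psi_{i,t}$ stay bounded, and the boundedness criterion on the smeared representatives yields a uniform-in-$t$ operator-norm bound, hence boundedness of the limiting projective correlator. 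Passing to a subsequence, $\lambda(t)^{-1}\rho([g_t])$ converges to $P:=\rho(\bar g)$ for the limiting element $\bar g\in\overline{\PGL_5\rr}$ of the path, a (generally non-invertible) operator on $\cc^{\dim\rho}$, and each $\psi_{i,t}$ converges as a tempered distribution supported in the limit of $[g_t]\cdot\Xx$, i.e.\ on $\Xx_\infty$.

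Next I would extract the degeneration. If the geometry degenerates, the path $[g_t]$ leaves every compact subset of $\PGL_5\rr$ modulo $\Oo$, so the limiting element $\bar g$ lies in $\overline{\PGL_5\rr}\setminus\PGL_5\rr$ and $P$ is singular; consequently the internal tensor of the limiting correlator lies in $(\im P)^{\otimes n}\subsetneq(\cc^{\dim\rho})^{\otimes n}$, which is exactly the assertion that it degenerates to a finite-rank tensor whose components are (smeared) field operators. For the support statement: since $[g_t]\cdot\Xx\to\Xx_\infty$, any $\varphi_i$ with $\supp\varphi_i\cap\Xx_\infty=\emptyset$ has $\psi_{i,t}\equiv 0$ for all large $t$, so the corresponding smeared deformed correlator vanishes identically for large $t$ — the vanishing of $\psi_{i,t}$ makes $\hat{\Ocal}_{\alpha_i}(\psi_{i,t})=0$, killing the whole product regardless of the $\rho([g_t])$-factor — hence the limiting projective correlator, as a tempered distribution, is supported on $\Xx_\infty$. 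As established in \cite{cooper2018limits}, in a degenerating limit $\Xx_\infty$ is contained in a proper (attracting) projective subspace of $\RP^4$, i.e.\ in space-time boundaries and other lower-dimensional space-time subspaces, which completes both claims.

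The hard part is the joint bookkeeping of three competing scalings — the divergence rate of $\rho([g_t])$, governed by the weights of $\rho$ as a representation of the \emph{completion} $\overline{\PGL_5\rr}$; the blow-up of the Jacobian of $[x]\mapsto[g_t\cdot x]$ restricted to each closure stratum of $\Xx_\infty$; and the $C^\infty(\RP^4,\rr_{\neq0})$-rescaling freedom — and establishing that a single choice of rate $\lambda(t)$ (and prefactor $c_t$) balances them into a finite, nonzero limit rather than forcing the correlator to $0$ or to $\infty$. Concretely one must match, stratum by stratum, the $\rho$-weights against the Jacobian weights; it is precisely here that $\rho$ being a representation of $\overline{\PGL_5\rr}$ rather than merely of $\PGL_5\rr$ is indispensable, since the singular limiting matrices $P$ — which carry the rank collapse and the support concentration — live in that completion and not in $\PGL_5\rr$ itself.
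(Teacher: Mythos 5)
There is a genuine gap, and it sits exactly where you yourself locate ``the hard part'': the boundedness claim is never actually established. You reduce it to a three-way balancing act between the growth of $\rho([g_t])$, the blow-up of the Jacobian of $[x]\mapsto[g_t\cdot x]$, and the $C^\infty(\RP^4,\rr_{\neq 0})$ rescaling freedom, and then assert that some rate $\lambda(t)$ and prefactor $c_t$ can be chosen ``stratum by stratum'' to produce a finite nonzero limit --- without proving that such a choice exists or that the resulting Schwartz-seminorm bounds on $\psi_{i,t}$ would even imply a uniform-in-$t$ operator-norm bound (the definition only gives boundedness of each individual smearing, not equicontinuity over a family of test functions). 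None of this bookkeeping is needed. The paper's proof observes that the deformed correlator is literally $\Ad_{U([b_n])}$ applied to the original one --- the $\rho([b_n^{-1}])$ factors and the change of variables are just a rewriting of that conjugation via the covariance law \eqref{EqGlobalCovariance}, with Jacobians trivial in the projective setting --- so unitarity gives $\|\Ad_{U([b_n])}\hat{C}\|_\Dcal=\|\hat{C}\|_\Dcal$ for every $n$, and the uniform bound is immediate. The degeneration then comes for free: $[\rho([b_n^{-1}])]$ is a sequence in the compact space $\Pp\mathrm{Mat}_{\dim\rho\times\dim\rho}\cc\cong\RP^{(\dim\rho)^2-1}$, hence converges (projectively) to some $[\rho_\infty]$ of possibly reduced rank. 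Relatedly, your statement that the singular limit $P$ ``lives in the completion $\overline{\PGL_5\rr}$'' misreads the notation: in this paper $\overline{\PGL_5\rr}$ is the universal cover (needed for fermionic $\rho$), not a matrix compactification; the singular limits live in the projective matrix space, and it is projectivization, not the cover, that makes them well defined.

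The support argument also has a hole. You conflate $\Xx_\infty$, the limit of the model spaces $[g_t]\cdot\Xx$ in the Chabauty/model-space topology (which is generically still four-dimensional, e.g.\ $\Abb^{3,1}$ in the de Sitter $\to$ Poincar\'e limit), with the set of \emph{pointwise} limits $\lim_{n\to\infty}[b_n\cdot x]$, which is what actually carries the support of the limiting correlator and which the paper shows (\Cref{PropPointwiseConvergenceModelSpacePoints}, via the $KAK$ and $KBH$ decompositions of $\PGL_5\rr$) to be contained in $\partial\overline{\Xx'}$ union a submanifold of $\Xx'$ of dimension strictly less than four. Citing \cite{cooper2018limits} for ``$\Xx_\infty$ is contained in a proper projective subspace'' does not substitute for this: that statement is false for the model-space limit and is precisely the content that has to be proved for the pointwise limit set. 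Your observation that test functions supported away from the limit set are killed for large $t$ is fine as far as it goes, but without identifying that limit set correctly the support claim of the theorem is not obtained.
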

\vs

Therefore, corresponding QFTs can reduce dimensionally in geometry limits. 
We describe geometry limits of projective correlators for a range of physically relevant examples.
This provides insights into the possible imprints of projective correlators after such limit processes, e.g.~the flat Poincaré limit of projective correlators on de Sitter and anti-de Sitter geometries.
We prove that projective correlation functions, i.e., expectation values of projective correlators, are well-defined for all states in $\Hcal$, also in geometry limits.
On a formal level, this implies that all projective correlation functions have finite infrared and ultraviolet limits, for which the corresponding QFT reduces to three space-time dimensions.

The usage of ambient space techniques for quantum fields and their correlation functions actually dates back to Dirac~\cite{Dirac:1935zz, Dirac:1936fq}, which can, at least in parts, be seen as a special case of our systematic framework.
Such methods have been employed in recent decades in AdS/CFT studies, where four-dimensional fields are constructed as projections of fields on a hypercone in six-dimensional projective space, see e.g.~\cite{Boulware:1970ty, Weinberg:2010fx}.
Amongst others, this can facilitate the computation of operator product expansions~\cite{Ferrara:1973yt, Ferrara:1973eg}, higher-spin conformal correlators~\cite{Costa:2011mg} and boundary values of AdS gauge fields~\cite{Bekaert:2012vt, Bekaert:2013zya}.
\vs

Concerning the classification of projective quantum fields, fermionic and bosonic superselection sectors can be extended to projective quantum fields on the ambient geometry $(\RP^4,\PGL_5\rr)$.
We construct composite projective quantum fields and characterize them according to irreducibility of the Lie algebra representation $\tilde{\rho}$ corresponding to $\rho$.
All irreducible representations of $\pgl_5\rr\cong \slfrak_5\rr$ are given by Schur modules, which leads to the conclusive classification of projective quantum fields.
This is analogous to the classification of quantum fields on Poincaré geometry according to spin.

Projective quantum fields can as well be characterized according to irreducibility under Poincaré transformations.
This leads to the following result.
\vs

\begin{thm*}[\Cref{ThmSpinStatIrredPoincareIrred}]
Fermionic, irreducible, Poincaré-irreducible projective quantum fields behave under Poincaré transformations as Dirac fermions and obey spin-statistics as composite projective quantum fields.
If they are bosonic, they behave under Poincaré transformations as scalar or vector bosons, but violate spin-statistics as composite projective quantum fields.
\end{thm*}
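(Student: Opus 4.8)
The plan is to reduce the statement to a representation-theoretic case analysis built on the Schur-module classification recalled above, and then to propagate the $\zz_2$-superselection grading through the composite-field construction. Since the field is \emph{irreducible}, its associated Lie algebra representation $\tilde{\rho}$ of $\pgl_5\rr\cong\slfrak_5\rr$ is a Schur module $S^\lambda(\cc^5)$; the bosonic versus fermionic alternative is then carried by the $\zz_2$-superselection grading together with the kind of Poincar\'e-type subgroup of $\PGL_5\rr$ (equivalently of $\overline{\PGL_5\rr}$) relative to which restriction is taken: the ``vector'' embedding with Lorentz part $\mathfrak{l}(1,3)\ltimes\rr^4$ in the bosonic case, and its spin cover $\slfrak_2\cc\ltimes\rr^4$ in the fermionic case. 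Everything thus reduces to two questions: which $\lambda$ are compatible with Poincar\'e-irreducibility in each sector, and what statistics the composite realization then forces.

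First I would fix these two embeddings explicitly and compute $S^\lambda(\cc^5)$ restricted to each, using the Littlewood--Richardson branching $\glfrak_5\rr\downarrow\glfrak_4\rr\times\glfrak_1\rr$ to peel off the one-dimensional ``radial'' coordinate direction that the $C^\infty(\RP^4,\rr_{\neq 0})$ prefactor freedom in the definition of a projective quantum field can absorb, and then branching the residual $\glfrak_4\rr$-content to the Lorentz, respectively spin-Lorentz, subalgebra. For $\lambda$ a single box this residual content is the four-vector $(\tfrac12,\tfrac12)$ in the bosonic case and the Dirac/Majorana four-spinor in the fermionic case, with the translation generators acting nilpotently on the gluing between the radial and residual pieces.

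Next I would make the notion of \emph{Poincar\'e-irreducibility} precise (indecomposability of the restricted field modulo the prefactor freedom) and translate it into the requirement that, after removing the radial direction, the residual Lorentz content is a single irreducible and the translations organize into a single Jordan-type tower; intersecting this with the horizontal-strip branching should cut $\lambda$ down to a very short list. I expect the bosonic survivors to realize exactly the scalar and the vector behavior asserted, and the fermionic survivor to realize exactly the Dirac-fermion behavior, consistently with the Dirac field arising here only as a composite.

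Finally, for the spin-statistics clause I would use the composite realization $S^\lambda(\cc^5)\subset(\cc^5)^{\otimes|\lambda|}$ to express each survivor as a composite of copies of the fundamental field, read off the statistics forced on it by the $\zz_2$-grading of its factors, and compare that with the Lorentz spin found above: the fermionic/Dirac branch should come out consistent, while the bosonic/scalar-or-vector branch should be inconsistent with its own grading, which is exactly the asserted spin-statistics violation. I expect the main obstacle to be precisely this last matching --- controlling the non-semisimple (Jordan) structure of the restricted Poincar\'e modules tightly enough that ``Poincar\'e-irreducible'' genuinely forces the short list of $\lambda$, and then aligning that structure with the composite-field statistics bookkeeping so that the statements ``behaves as a Dirac fermion'', respectively ``as a scalar or vector boson'', and the spin-statistics (in)compatibility become literal conclusions rather than heuristics.
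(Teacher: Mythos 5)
Your overall shape --- classify $\tilde\rho$ by Schur modules, branch to the Lorentz subalgebra, cut the list down by Poincar\'e-irreducibility, then check statistics against the composite realization --- matches the paper's strategy (its Lemma on column-only diagrams plus the proposition on the $\Extalt^n(\cc^2\oplus\overline{\cc}^2)$ decomposition). But two of your mechanisms are wrong in ways that would derail the case analysis. First, there are not two Poincar\'e-type embeddings in this framework: $\rho$ is a representation of $\overline{\PGL_5\rr}$ that factors through $\PGL_5\rr$, there is a single subgroup $\PO((1),(3,1))<\PGL_5\rr$, and the fermionic/bosonic alternative is carried entirely by the multiplier $\omega_\pm$ of the projective unitary representation $U$ (the superselection sectors), not by a choice of ``vector embedding versus spin cover''. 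The paper then shows that consistency of this multiplier with the classification of projective unitary irreps of the Poincar\'e group \emph{forces} the parity of $\#\lambda+\#\lambda'$: a single box is necessarily fermionic and gives the Dirac behavior, a column of two boxes is necessarily bosonic and gives the vector. Your two-embedding scheme would instead admit, e.g., a bosonic single-box field transforming as a four-vector, which the actual argument excludes; without the multiplier-versus-spin consistency step you cannot conclude that fermionic fields are \emph{exactly} the Dirac ones and bosonic ones \emph{exactly} scalar or vector.

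Second, the radial ($x_0$) component is not absorbed by the $C^\infty(\RP^4,\rr_{\neq 0})$ prefactor freedom --- that freedom rescales the whole tuple by a common nonvanishing function and cannot delete a component. In the paper it is Poincar\'e-irreducibility itself that kills it: since irreducible projective unitary Poincar\'e representations are induced from the Lorentz little group, the field must be translation-invariant with respect to $\rho$, and the nilpotent action of the translation generators then forces every tensor component carrying an index in the radial direction to vanish (this is how the fields $\hat\Psi_{(\lambda,\lambda')}$ arise). Relatedly, you never actually derive the key combinatorial output --- that Poincar\'e-irreducibility forces $\lambda,\lambda'$ to be single columns with one of them empty, via the $\GL_n\rr\downarrow\Oo(n)$ branching $\tilde\rho_{(\lambda,\lambda')}\mapsto\tilde\rho_{(\lambda/\Delta,\lambda'/\Delta)}$ being a single summand --- you only assert that the list should be short. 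Finally, note that the irreducibles of $\pgl_5\rr\cong\slfrak_5\rr$ are indexed here by \emph{pairs} $(\lambda,\lambda')$ (covariant and contragredient factors), and ruling out the mixed case $\lambda,\lambda'\neq\emptyset$ is a separate step in the argument that your single-$\lambda$ setup omits.
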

\vs

In the given framework, this singles out Dirac fermions.
Yet, projective quantum fields may as well give rise to higher-spin Poincaré group representations upon restriction to Poincaré geometry, if $\tilde{\rho}$ is not irreducible.
\vs

This manuscript is organized as follows.
In \Cref{SecGeometriesLimits} we introduce (homogeneous) geometries and their deformations and limits, discuss the relation between conjugacy limits and contractions, and consider projective tensor fields. 
\Cref{SecQFTOnDeformedGeometries} is devoted to projective quantum fields.
Projective correlator algebras, fermionic and bosonic superselection sectors, composite projective quantum fields as well as their $\pgl_5\rr$- and Poincaré-irreducibility are explored.
\Cref{SecConclusions} provides further questions.
Some of the proofs are deferred to a number of appendices, which are given in the end.

% ----------------------- new section ---------------------

\section{Geometries and limits}\label{SecGeometriesLimits}
Mathematically, a geometry in the sense of Klein is defined as a pair $(\Xx,\Oo)$ of a smooth, connected manifold $\Xx$ (the model space) and a Lie group $\Oo$ (the structure group) acting transitively by analytic maps on it, i.e., for any pair of points ${x,y\in \Xx}$ there exists $g\in \Oo$, such that $y=g\cdot x$. 
Transitivity identifies the model space $\Xx$ with the homogeneous space $\Oo/\Oo_x$, where $\Oo_x$ denotes the stabilizer of an arbitrary point $x\in \Xx$.
For more details on the category of such geometries we refer to~\cite{cooper2018limits}.

Similarly, the geometry of a homogeneous physical space-time includes both a space-time manifold and a Lie group of space-time symmetry transformations. 
Space-time symmetry groups often contain the indefinite orthogonal groups $\Oo(p,q)$, i.e., the groups of isometries of the metric tensor
\begin{equation}\label{EqMetric}
-(\dd x^1)^2 - \ldots - (\dd x^{p})^2 + (\dd x^{p+1})^2 + \ldots + (\dd x^{p+q})^2\,,
\end{equation}
with examples the Lorentz group $\Oo(3,1)$ or the de Sitter group $\Oo(4,1)$. 
Here, $\dd x^1,\dots,\allowbreak \dd x^{p+q}$ denote the canonical 1-forms on $\rr^{p+q}$, and $(\dd x^j)^2 := \dd x^j \odot \dd x^j$ with $\odot$ the symmetrized tensor product.
More generally, groups such as
\begin{equation*}
\Oo((p_0,q_0),\dots,(p_k,q_k)):=\left(\begin{matrix}
\Oo(p_0,q_0) & 0 & \cdots & 0 \\
\rr^{(p_1+q_1)\times (p_0+q_0)} & \Oo(p_1,q_1) &  & 0\\
\vdots & \vdots & \ddots & \vdots \\
\rr^{(p_k+q_k)\times (p_0+q_0)} & \rr^{(p_k + q_k)\times (p_1+q_1)} & \cdots & \Oo(p_k,q_k)
\end{matrix}\right)
\end{equation*}
can appear. 
To prevent singular behavior as discussed later, we consider projective geometries with projective structure groups $\PO((p_0,q_0),\dots,(p_k,q_k))$, defined modulo multiplication by non-zero constants. 
For introductions to projective geometry we refer to~\cite{ovsienko2004projective,Gallier2011}.
We note that $\PO((1),(3,1))$ is isomorphic to the Poincaré group in 3+1 space-time dimensions, the latter isomorphically embedded in $\Pp\GL_5\rr$ as follows.
To see this, let $\Lambda\in\Oo(3,1)$, $t\in \rr^4$.
Then the projective $5\times 5$ matrices
\begin{equation}\label{EqPoincareTrafo55}
[(\Lambda, t)]:=\Pp\left(\begin{matrix}
1 & \\
t & \Lambda
\end{matrix}\right)\in \PO((1),(3,1))
\end{equation}
together with (projective) matrix multiplication furnish a representation of the Poincaré group, which is isomorphic to the latter.
The elements $t\in \rr^4$, which appear in the lower-left corner of $[(\Lambda,t)]$, correspond to the usual space-time translations.

The group $\PO((p_0,q_0),\dots,(p_k,q_k))$ with $p_0+q_0+\ldots + p_k+q_k=m$ acts transitively on the model space
\begin{align*}
&\Xx((p_0,q_0),\dots,(p_k,q_k))\nonumber\\
& := \{[x_0,\dots,x_{m-1}]\in \RP^{m-1}\,|\, -x_0^2-\ldots-x_{p_0-1}^2+x_{p_0}^2+\ldots+x_{p_0+q_0-1}^2 < 0\}\,.
\end{align*}
Therefore, the pairs 
\begin{equation*}
\Gg((p_0,q_0),\dots,(p_k,q_k)):= (\Xx((p_0,q_0),\dots,(p_k,q_k)),\PO((p_0,q_0),\dots,(p_k,q_k)))
\end{equation*}
form geometries.
We mostly restrict to four-dimensional space-time geometries in this work ($m=5$).
For instance, $\Gg(4,1)$ with model space $\Xx(4,1)$ is the projective model of four-dimensional de Sitter geometry and $\Gg(3,2)$ with model space $\Xx(3,2)$ is the projective model of four-dimensional anti-de Sitter geometry, while $\Gg((1),(3,1))$ with
\begin{equation}\label{EqPoincareModelSpaceA31}
\Xx((1),(3,1)) = \{[x_0,\dots,x_4]\,|\, x_0\neq 0\} = \Abb^{3,1}
\end{equation}
is the projective model of Poincaré geometry, which can be identified with Poincaré geometry itself. 
Here, $\Abb^{p,q}$ denotes the $(p+q)$-dimensional affine space with the $(p,q)$-signature metric tensor \eqref{EqMetric}, and the identification with $\Abb^{3,1}$ in \Cref{EqPoincareModelSpaceA31} is via 
\begin{equation*}
[x_0,\dots,x_4]= [1,x_1/x_0,\dots,x_4/x_0]\mapsto (x_1/x_0,\dots,x_4/x_0)\,.
\end{equation*}
Poincaré geometry also comes with the metric tensor $-(\dd y^1)^2 - (\dd y^2)^2 - (\dd y^3)^2 + (\dd y^4)^2$, $y_\mu = x_\mu/x_0$ for $\mu=1,\dots,4$, which is the usual Minkowski metric tensor. 

The group $\PO((1),(1),(3))$ can be identified with the group of Galilei transformations, and $\PO((1),(3),(1))$ is isomorphic to the group of Carroll transformations.
Indeed, the matrix multiplications are (ignoring $\pm 1$ entries on the diagonals)
\begin{align*}
\Pp\left(\begin{matrix}
1 &&\\
s & 1 &\\
\mathbf{a} & \mathbf{v} & R
\end{matrix}\right)\cdot \Pp\left(\begin{matrix}
1 &&\\
s' & 1 &\\
\mathbf{a}' & \mathbf{v}' & R'
\end{matrix}\right) = &\;\, \Pp\left(\begin{matrix}
1 &&\\
s+s' & 1 &\\
\mathbf{a} + s'\mathbf{v} + R \mathbf{a}' & \mathbf{v} + R\mathbf{v}' & RR'
\end{matrix}\right)\,,\\
\Pp\left(\begin{matrix}
1 &&\\
\mathbf{a} & R & \\
s & \mathbf{b}^T & 1
\end{matrix}\right)\cdot \Pp\left(\begin{matrix}
1 &&\\
\mathbf{a}' & R' & \\
s' & \mathbf{b}'^T & 1
\end{matrix}\right)= &\;\, \Pp\left(\begin{matrix}
1 &&\\
\mathbf{a}+R\mathbf{a}' & RR' &\\
s+s'+\mathbf{b}^T\mathbf{a}' & \mathbf{b}^T R'+\mathbf{b}'^T & 1
\end{matrix}\right)\,,
\end{align*}
where $s,s'\in \rr$, $\mathbf{a},\mathbf{a}',\mathbf{b},\mathbf{b}',\mathbf{v},\mathbf{v}'\in\rr^3$, $R,R'\in \Oo(3)$ and which are matrix versions of the group multiplications of the Galilei and Carroll groups, respectively~\cite{Duval:2014uoa}.
Galilei geometry $\Gg((1),(1),(3))$ has the model space $\Xx((1),(1),(3)) = \Abb^1\times \Abb^{3}$ and comes with degenerate Galilean structures, while ultra-relativistic Carroll geometry $\Gg((1),(3),(1))$ has the model space $\Xx((1),(3),(1))= \Abb^3\times \Abb^1$ and comes with degenerate Carroll structures~\cite{Ciambelli:2019lap, Bergshoeff:2022eog}.%
\footnote{Actually, all geometries $\Gg((p_0,q_0),\dots,(p_k,q_k))$ come with certain affine bundle structures as detailed in~\cite{cooper2018limits}, thereby generalizing structures such as Galilean and Carroll structures.}

\subsection{Deformations and limits}
Geometry limits such as the Poincaré limit of de Sitter space-time, which incorporates the flattening of $\dS^4$ to $\Abb^{3,1}$, or the non-relativistic (Galilei) limit of Poincaré geometry can be described in terms of projective geometries. 
More specifically, consider four-dimensional geometries $\Gg=(\Xx,\Oo)$, which are subgeometries of the ambient geometry $(\RP^4,\PGL_5\rr)$, i.e., $\Xx$ is an open submanifold of $\RP^4$ and $\Oo$ is a subgroup of $\PGL_5\rr$ acting transitively on $\Xx$.
$\Gg(4,1)$, $\Gg((1),(3,1))$, $\Gg((1),(1),(3))$ and $\Gg((1),(3),(1))$ are all of this type, for instance.
Let $[b_n]\in \Pp\GL_5\rr$ be a sequence of group elements. 
It acts on points $[x]\in \Xx$ via (projective) matrix multiplication: $[x]\mapsto [b_n \cdot x]$. 
Group elements $[h]\in \Oo$ are conjugated by the $[b_n]$: $[h]\mapsto \Ad_{[b_n]}[h] = [b_n  h  b_n^{-1}]$, analogously to a change of basis acting on matrices. 
The $[b_n]$ thus act on a geometry $\Gg$ as
\begin{equation*}
[b_n]_*:\Gg=(\Xx,\Oo)\to [b_n]_*\Gg:=([b_n]\cdot \Xx,\Ad_{[b_n]}\Oo)
\end{equation*}
with $\Ad_{[b_n]}\Oo:=[b_n]\cdot \Oo \cdot [b_n^{-1}]$.
We call $[b_n]_*\Gg$ a geometry deformation of $\Gg$.

Following~\cite{cooper2018limits}, the sequence $\Ad_{[b_n]}\Oo$ of Lie groups converges geometrically to another Lie subgroup $\Oo'<\PGL_5\rr$ for $n\to\infty$, if every $[h']\in \Oo'$ is the limit of some sequence $[h_n]\in \Ad_{[b_n]}\Oo$ and if every accumulation point of every sequence $[h_n]\in \Ad_{[b_n]}\Oo$ lies in $\Oo'$.
Then $\Oo'$ is called the conjugacy limit of $\Oo$ via $[b_n]$. 
Conjugacy limits are limits on the space of closed subgroups of $\PGL_5\rr$ with respect to the so-called Chabauty topology; for details see e.g.~\cite{fisher2009space1,fisher2009space2,cooper2018limits,biringer2018metrizing}. 
The sequence $[b_n]_*\Gg\subset (\RP^4,\PGL_5\rr)$ converges to the geometry $\Gg'=(\Xx',\Oo')<(\RP^4,\PGL_5\rr)$, if $\Ad_{[b_n]}\Oo$ converges geometrically to $\Oo'$ and there exists%
\footnote{This condition exemplifies the \emph{hit-and-miss} character of the Chabauty topology~\cite{trettel2019families}.}
 $[z]\in \Xx'$, such that for all $n$ sufficiently large: $[z]\in [b_n]\cdot \Xx$. 
We also call the latter topology on the space of model spaces the model space topology.
%The geometry $\Gg'$ is then a limit of the geometry $\Gg$. 
All such limits of subgeometries $\Gg((p_0,q_0),\dots,(p_k,q_k))< (\RP^4,\PGL_5\rr)$ have been classified and are up to geometry deformations in $(\RP^4,\PGL_5\rr)$ of the form $\Gg((p_0',q_0'),\dots,(p_l',q_l'))$ for $l\geq k$, $p_0+\ldots+p_k = p_0'+\ldots+p_l'$ and $q_0+\ldots+q_k = q_0'+\ldots+q_l'$, potentially after exchanging some $(p_i,q_i)$ with $(q_i,p_i)$~\cite{cooper2018limits}. 
Such a limit geometry is also called a refinement or degeneration of $\Gg((p_0,q_0),\dots,(p_k,q_k))$.

Elements of the Lie algebra $\pgl_5\rr$ and thus also of the Lie algebra $\ofrak$ of $\Oo<\PGL_5\rr$ are defined modulo the addition of real multiples of the $5\times 5$ identity matrix.%
\footnote{We employ mathematical conventions for Lie algebras, such that a connected Lie group $G$ has Lie algebra $\mathfrak{g}$ and $G$ is generated by $\exp(\mathfrak{g})$.}
If a Lie algebra element $[X]\in\pgl_5\rr$ acts on a $\PGL_5\rr$ group element or vice versa, its unique trace-zero representative $X\in\slfrak_5\rr\cong \pgl_5\rr$, taken modulo non-zero real prefactors, is considered.
Then the sequence $[b_n]$ acts on Lie algebra elements analogously to their action on group elements: $[X]\mapsto \Ad_{[b_n]}[X]$ for $[X] \in \ofrak$. 
With $\Oo\to \Oo'$ the conjugacy limit in $\PGL_5\rr$ via $[b_n]$ and $\ofrak'=\mathrm{Lie}(\Oo')$, it has been shown in~\cite{trettel2019families} that
\begin{equation*}
\ofrak' = \lim_{n\to\infty} [b_n]\cdot \ofrak\cdot [b_n^{-1}]\,,
\end{equation*}
convergence defined as for the geometric convergence of Lie groups. 
\vs

\begin{example}\label{ExampleGeometryDeformations}
\begin{enumerate}[(i)]
\item To explore the deformation of projective Poincaré geometry $\Gg((1),\allowbreak (3,1))$ by
\begin{equation}\label{EqNonRelLimitMatrices}
[b_n] = \Pp\left(\begin{matrix}
e^{-n} & &\\
 & 1_{3\times 3} &\\
& & e^{-n}
\end{matrix}\right)\,,
\end{equation}
$1_{3\times 3}$ the $3\times 3$ unit matrix, focus on a boost generator of the Poincaré Lie algebra such as
\begin{equation}\label{EqLorentzBoost}
[K] = \Pp\left(\begin{matrix}
0 &&&&\\
 & 0 &   &   &  \\
 &   & 0 &   &  \\
 &   &   & 0 & 1\\
 &   &   & 1 & 0
\end{matrix}\right)\,.
\end{equation}
The $[b_n]$ act via conjugation on $[K]$, which yields for the non-trivial submatrix:
\begin{equation}\label{EqConjugationPoincareGalileiRewriting}
\Pp\left(\begin{matrix}
0 & 1\\
1 & 0 
\end{matrix}\right) \mapsto \Pp\left(\begin{matrix}
0 & e^{n}\\
e^{-n}& 0
\end{matrix}\right) = \Pp\left(\begin{matrix}
0 & 1\\
e^{-2n} & 0
\end{matrix}\right)\,.
\end{equation}
This has the well-defined $n\to\infty$ limit
\begin{equation}\label{EqGalileiBoost}
\Pp\left(\begin{matrix}
0 & 1 \\
0 & 0 
\end{matrix}\right)\,.
\end{equation}
The multiplication by $\exp(-n)$ employed in \Cref{EqConjugationPoincareGalileiRewriting} is an identity map in the projective setting. 
In a non-projective setting, the limit matrix would contain diverging elements.
Analogously taking the conjugacy limits of the other generators of $\pfrak\ofrak((1),(3,1))$ shows that the structure groups $\Ad_{[b_n]}\PO((1),(3,1))$ converge geometrically for $n\to\infty$ to
\begin{equation*}
\lim_{n\to\infty}\Ad_{[b_n]}\PO((1),(3,1)) = \Ad_{\tau}\PO((1),(1),(3))\,,
\end{equation*}
where $\tau=(0)\,(1\, 2\, 3\, 4)$ (in cycle notation) is a coordinate index permutation.
This is isomorphic to the Galilei group.
The limit process turns Lorentz boosts with generators such as \eqref{EqLorentzBoost} into Galilean velocity additions as generated by the projective $5\times 5$ matrix corresponding to \eqref{EqGalileiBoost}. 
The model space $\Xx((1),(3,1))$ is invariant under the action of the $[b_n]$ of \Cref{EqNonRelLimitMatrices}, such that for $n\to\infty$ projective Galilei geometry is retrieved up to a coordinate permutation:
\begin{equation*}
[b_n]_* \Gg((1),(3,1)) \to \tau_*\Gg((1),(1),(3)):=(\tau\cdot \Xx((1),(3,1)), \Ad_\tau \PO((1),(1),(3)))\,.
\end{equation*}
Note that permutations act on model spaces and structure groups inversely to how they permute the index arguments of geometries.
\item Consider
\begin{equation}\label{EqDeSitterToPoincarecn}
[b_n] = \Pp\left(\begin{matrix}
e^{-4n} &\\
& e^n\cdot 1_{4\times 4}
\end{matrix}\right)
\end{equation}
acting on projective de Sitter geometry $\Gg(4,1)$. 
Let $[x_0,\dots,x_4]\in [b_n]\cdot\Xx(4,1)$, i.e.,
\begin{equation*}
-e^{8n} x_0^2 - e^{-2n} (x_1^2 +x_2^2 + x_3^2 - x_4^2) < 0\,.
\end{equation*}
This is equivalent to
\begin{equation*}
- x_0^2 - e^{-10n}(x_1^2  + x_2^2 + x_3^2 -x_4^2) < 0\,,
\end{equation*}
yielding the well-defined limit constraint $x_0 \neq 0$ for $n\to\infty$. 
Thus, the deformed model space $[b_n]\cdot \Xx(4,1)$ changes in the $n\to\infty$ limit to $\Xx((1),(3,1))= \Abb^{3,1}$.
Explicit computation shows that the deformed structure groups $\Ad_{[b_n]} \PO(4,1)$ converge geometrically to $\PO((1),(3,1))$ for $n\to\infty$. 
Therefore, the limit process flattens the de Sitter model space and the structure group changes accordingly. 
The limit geometry of projective de Sitter geometry, deformed by the $[b_n]$ of \Cref{EqDeSitterToPoincarecn}, is projective Poincaré geometry $\Gg((1),(3,1))$. 
\item Similarly to (ii), consider
\begin{equation}\label{EqAntiDeSitterToPoincarecn}
[b_n] = \Pp\left(\begin{matrix}
e^n\cdot 1_{4\times 4} & \\
& e^{-4n}
\end{matrix}\right)
\end{equation}
acting on projective anti-de Sitter geometry $\Gg(3,2)$.
A point $[x_0,\ldots,x_4]$ is in $[b_n]\cdot \Xx(3,2)$, if and only if 
\begin{equation*}
e^{-10n} (-x_0^2-x_1^2-x_2^2+x_3^2) + x_4^2 < 0\,,
\end{equation*}
which yields the constraint $x_4\neq 0$ for $n\to \infty$.
The sequence $\Ad_{[b_n]}\PO(3,2)$ converges with respect to Chaubauty topology to
\begin{equation*}
\Pp\left(\begin{matrix}
\Oo(3,1) & \rr^4\\
& \pm 1
\end{matrix}\right)
\end{equation*}
as $n\to\infty$.
Therefore, up to the permutation $\sigma = (0\, 1\, 2\, 3\, 4)$ this yields projective Poincaré geometry for $n\to\infty$:
\begin{equation*}
[b_n]_*\Gg(3,2)\to \sigma_* \Gg((1),(3,1))\,.
\end{equation*}
\end{enumerate}
\end{example}

\begin{remark}
While the choice of $(\RP^4,\PGL_5\rr)$ as the ambient geometry is not unique, the results of this work are to some extent independent from the choice of ambient geometries of type $(\RP^{m-1},\PGL_m\rr)$ for $m\geq 5$, see \Cref{AppendixAmbientGeometryChoice}. 
Setting $m=5$ can be seen as a minimal choice.
A few works have considered conjugacy limits of subgroups within other groups than $\PGL_m\rr$ or $\GL_m\rr$, see e.g.~\cite{leitner2016conjugacy, leitner2016limits, lazarovich2021local}.
\end{remark}

\subsection{Conjugacy limits and contractions}\label{AppendixLimitsLieAlgebras}
In physics, contractions of the Lie algebras of structure groups are often considered instead of conjugacy limits.
Let $\ofrak'$ be a conjugacy limit of $\ofrak$ on Lie algebra level.
First, we give examples of physical interest, for which $\ofrak'$ is isomorphic to a contraction of~$\ofrak$.
We also provide an example of a contraction, which cannot be isomorphically described as a conjugacy limit.
Subsequently, we show that there are conjugacy limits, which are not isomorphic to single contractions on Lie algebra level.
A theorem demonstrates that the composition of contractions can always achieve this.
In this subsection, we sometimes consider geometries of general dimension.

We define a contraction as in~\cite{dooley1985contractions}. 
For this, consider a Lie subalgebra $\tfrak$ of a Lie algebra $\hfrak$, the latter with underlying vector space $\vect(\hfrak)$. 
Let $\tfrak^c$ denote the subspace of $\vect(\hfrak)$ complementary to $\tfrak$, such that $X\in \vect(\hfrak)$ can be uniquely written as $X=X_\tfrak + X_{\tfrak^c}$, $X_\tfrak \in \tfrak$, $X_{\tfrak^c}\in \tfrak^c$.
For $\varepsilon>0$ define $\phi_\varepsilon(X):= X_\tfrak + \varepsilon X_{\tfrak^c}$.
Then $[X,Y]':=\lim_{\varepsilon\to 0}\phi_\varepsilon^{-1}([\phi_{\varepsilon}(X),\phi_\varepsilon(Y)])$ exists for all $X,Y\in \vect(\hfrak)$.
The vector space $\vect(\hfrak)$ equipped with the commutator $[\cdot,\cdot]'$ is called the contraction of $\hfrak$ along $\tfrak$ and forms a Lie algebra.
We note that there exist other definitions in the literature, see e.g.~\cite{nesterenko2006contractions}.
\vs

\begin{example}\label{ExampleConjLimitsContractionsIso}
The Lie algebra $\ofrak(3)$ has generators
\begin{equation*}
X^1 = \left(\begin{matrix}
0 & 1 & 0\\
-1 & 0 & 0\\
0 & 0 & 0
\end{matrix}\right), \quad X^2 = \left(\begin{matrix}
0 & 0 & -1\\
0& 0 & 0\\
1 & 0 & 0
\end{matrix}\right),\quad X^3 = \left(\begin{matrix}
0 & 0 & 0\\
0& 0 & 1\\
0 & -1 & 0
\end{matrix}\right)
\end{equation*}
with commutators $[X^1,X^2] = X^3$, $[X^1,X^3]  =- X^2$, $[X^2,X^3] = X^1$. 
We consider its projective variant $\pofrak(3)$ and the conjugacy limit via
\begin{equation*}
[b_n] = \Pp\left(\begin{matrix}
e^{-2n} &&\\
 & e^n &\\
 && e^n
\end{matrix}\right)\,,
\end{equation*}
which yields the projective Euclidean motion Lie algebra in two dimensions, $\pofrak((1),(2))$.
The non-projective $\ofrak((1),(2))$ has generators
\begin{equation}\label{EqY1Y2ExampleContraction}
Y^1 = \left(\begin{matrix}
0 & 0 & 0\\
-1 & 0 & 0\\
0 & 0 & 0
\end{matrix}\right),\quad  Y^2 = \left(\begin{matrix}
0 & 0 & 0\\
0& 0 & 0\\
1 & 0 & 0
\end{matrix}\right)
\end{equation}
and $X^3$ along with matrix commutators. 
We contract $\ofrak(3)$ along $X^3$, which yields the contracted commutators $[X^1,X^2]'=0$, $[X^1,X^3]' = - X^2$ and $[X^2,X^3]' = X^1$. 
After descending to its projective variant, the contracted Lie algebra is isomorphic to the Lie algebra $\pofrak((1),(2))$ of the conjugacy limit.

We take a final conjugacy limit of $\pofrak((1),(2))$ via 
\begin{equation*}
[b_n] = \Pp\left(\begin{matrix}
e^{-n} &&\\
 & e^{-n} &\\
 && e^{2n}
\end{matrix}\right)\,,
\end{equation*}
which yields the projective Heisenberg Lie algebra $\pofrak((1),(1),(1))$. 
Its non-projective variant is generated by $Y^1,Y^2$ and
\begin{equation}\label{EqY3ExampleContraction}
Y^3  = \left(\begin{matrix}
0 & 0 & 0\\
0 & 0 & 0\\
0 & -1 & 0
\end{matrix}\right)\,,
\end{equation}
such that $[Y^1,Y^2] = 0$, $[Y^1,Y^3] = - Y^2$ and $[Y^2,Y^3] = 0$. 
We obtain a Lie algebra isomorphic to $\pofrak((1),(1),(1))$ via contraction of $\ofrak((1),(2))$ along $X^1$ and projectivization.
\end{example}
\vs

Continuing \Cref{ExampleConjLimitsContractionsIso} shows that not every contraction corresponds to a conjugacy limit.
\vs

\begin{example}\label{ExampleContractionNotIsoConj}
We contract the Heisenberg Lie algebra $\ofrak((1),(1),(1))$ with generators $Y_1,Y_2,Y_3$ as in \Cref{EqY1Y2ExampleContraction,EqY3ExampleContraction} along $\mathrm{Span}(Y_2)$, which is the center of $\ofrak((1),(1),(1))$. 
The contracted commutators are all trivial, such that the contracted Lie algebra is isomorphic to $\rr^3$.
The Lie algebra $\pofrak((1),(1),(1))$ is a final conjugacy limit of $\pofrak(3)$ up to conjugacy in $\PGL_3\rr$.
No further non-trivial refinement is possible~\cite{cooper2018limits}. 
Therefore, the contracted algebra $\rr^3$ does not correspond to a conjugacy limit of $\pofrak(3)$ via sequences in $\PGL_3\rr$.
\end{example}
\vs

Even if examples of isomorphisms between Lie algebras of conjugacy limits and corresponding contractions can be given, such isomorphisms do not generally exist, as the following lemma demonstrates.
\vs

\begin{lem}\label{LemmaNoContractionO3toO111}
For $m\geq 3$ no single contraction of the Lie algebra $\pofrak(m)$ yields a Lie algebra isomorphic to $\pofrak((1),\dots,(1))$ ($m$ copies of $(1)$ as arguments).
\end{lem}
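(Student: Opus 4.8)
The plan is to compute the dimensions of the relevant Lie algebras and their derived series (or nilpotency class) and argue by invariance of these quantities under contraction. Recall that $\pofrak(m)$ has dimension $\binom{m}{2}$, and it is simple for $m \geq 3$ (with the well-known exception $m = 4$, where $\ofrak(4) \cong \ofrak(3) \oplus \ofrak(3)$; one must treat this case separately, but $\ofrak(4)$ is still semisimple, which is all that is needed). On the other hand, $\pofrak((1),\dots,(1))$ ($m$ copies of $(1)$) is the Lie algebra of strictly lower-triangular $m \times m$ matrices modulo scalars, which has dimension $\binom{m}{2}$ as well, so a naive dimension count does not suffice — the point is that this algebra is \emph{nilpotent} (it is a maximal nilpotent subalgebra of $\slfrak_m\rr$, a "full flag" nilradical). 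So the two algebras have the same dimension but one is semisimple and the other is nilpotent.

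The key structural fact I would invoke is that a contraction cannot turn a semisimple (more generally, non-nilpotent, or even non-solvable) Lie algebra into a nilpotent one. Concretely: a contraction $\hfrak'$ of $\hfrak$ is a degeneration in the variety of Lie algebra structures on the fixed vector space $\vect(\hfrak)$, obtained as a limit $\lim_{\varepsilon \to 0} \phi_\varepsilon^{-1}\circ [\cdot,\cdot] \circ (\phi_\varepsilon \times \phi_\varepsilon)$. Several invariants can only increase (or stay equal) under such a limit; in particular, the dimension of the center is lower semicontinuous, so $\dim \zfrak(\hfrak') \geq \dim \zfrak(\hfrak)$, and similarly for the dimension of the nilradical and of the solvable radical. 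For $\hfrak = \pofrak(m)$ semisimple we have $\zfrak(\hfrak) = 0$, $\rad(\hfrak) = 0$, nilradical $= 0$; but $\pofrak((1),\dots,(1))$ is nilpotent, hence equals its own nilradical and solvable radical and has nonzero center. This is still consistent with the semicontinuity inequalities, so I need a \emph{decreasing} invariant to get a contradiction. The cleanest one: the dimension of the image of the Killing form, equivalently $\dim \hfrak - \dim \hfrak^\perp$ where $\perp$ is the Killing orthogonal, is upper semicontinuous under contraction (the Killing form of the contraction is a limit of pullbacks of the Killing form, so its rank can only drop). For $\pofrak(m)$, $m \geq 3$, $m \neq 4$, the Killing form is nondegenerate (rank $\binom{m}{2}$); for $m=4$ it is nondegenerate on the semisimple $\ofrak(4)$, hence on $\pofrak(4)$ after fixing the trace-zero representative, so rank $\binom{4}{2}=6$. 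For any nilpotent Lie algebra the Killing form vanishes identically (Cartan's criterion / Engel), so rank $0$. Hence a contraction $\pofrak(m) \rightsquigarrow \pofrak((1),\dots,(1))$ would force $0 = \mathrm{rank}\,\kappa_{\pofrak((1),\dots,(1))} \geq \mathrm{rank}\,\kappa_{\pofrak(m)} = \binom{m}{2} > 0$ for $m \geq 3$, a contradiction.

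There is a subtlety I would address carefully: the statement is about \emph{projective} Lie algebras $\pofrak$, not $\ofrak$, so strictly the contraction in the sense of the paper's definition is performed on $\ofrak(m) \cong \slfrak$-type representatives (trace-zero, modulo scalars), and one should check that passing to the projective quotient does not spoil the argument — but $\ofrak(m)$ already consists of trace-zero matrices and has trivial intersection with the scalars, so $\pofrak(m) \cong \ofrak(m)$ as abstract Lie algebras, and likewise the strictly-lower-triangular algebra is unchanged by the projectivization. Thus the computation may be carried out with honest matrix Lie algebras. The main obstacle — or rather the one place demanding genuine care rather than bookkeeping — is justifying the semicontinuity of the Killing-form rank under the specific one-parameter family $\phi_\varepsilon$ of the paper's contraction definition; I would handle this by writing the structure constants of the contracted bracket as $c'_{ijk} = \lim_{\varepsilon\to 0} c_{ijk}(\varepsilon)$ with $c_{ijk}(\varepsilon)$ rational in $\varepsilon$ (each entry of $\phi_\varepsilon$ and $\phi_\varepsilon^{-1}$ is a monomial in $\varepsilon$), noting the Killing form entries $\kappa_{ij} = \sum_{k,l} c_{ikl}c_{jlk}$ are then continuous in $\varepsilon$, and invoking that matrix rank is lower semicontinuous so the rank at the limit $\varepsilon = 0$ is at most the generic rank on $\varepsilon > 0$, which equals $\mathrm{rank}\,\kappa_{\pofrak(m)}$ since the bracket is genuinely isomorphic to $\pofrak(m)$ for every $\varepsilon > 0$.
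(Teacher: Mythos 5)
Your argument has a fatal gap: the ``key structural fact'' it rests on is false, and the semicontinuity you invoke runs in the wrong direction for your purposes. A single contraction \emph{can} turn a semisimple Lie algebra into a nilpotent one: contracting $\pofrak(m)$ along the trivial subalgebra $\tfrak=\{0\}$ yields the abelian algebra, and the paper's own \Cref{ExampleConjLimitsContractionsIso} contracts the simple $\pofrak(3)$ in one step to the solvable $\pofrak((1),(2))$, whose Killing form has rank $1$ rather than $3$. This is consistent with the semicontinuity statement you actually justify (``its rank can only drop''): the structure constants converge as $\varepsilon\to 0$, matrix rank is lower semicontinuous, hence $\mathrm{rank}\,\kappa_{\hfrak'}\leq \mathrm{rank}\,\kappa_{\pofrak(m)}$. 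Your final display asserts the reverse inequality $\mathrm{rank}\,\kappa_{\hfrak'}\geq\mathrm{rank}\,\kappa_{\pofrak(m)}$, contradicting your own justification; with the correct direction the conclusion reads $0\leq\binom{m}{2}$, which is no contradiction at all. More generally, no invariant of the kind you list (center, radical, nilradical, Killing rank) can prove this lemma, because the abelian algebra shares all of these degeneracies with $\pofrak((1),\dots,(1))$ and genuinely \emph{is} a single contraction of $\pofrak(m)$.

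The content of the lemma is finer: among the nilpotent algebras, the specific \emph{non-abelian} one $\pofrak((1),\dots,(1))$ is not reachable in a single step. The paper's proof exploits the rigid structure of one contraction along a subalgebra $\tfrak$: the contracted algebra is $\tfrak$ acting on the abelian ideal $\tfrak^c$, and $\tfrak$ survives as a subalgebra of both $\pofrak(m)$ and the target (\Cref{PropMaxLieSubalgebraContractionInv}); comparison with the conjugacy limit and the absence of suitable morphisms out of the radical then force $\tfrak$ to be trivial (\Cref{PropSubalgebraContraction}), so the contraction would have to be abelian, contradicting that $\pofrak((1),\dots,(1))$ is non-abelian for $m\geq 3$. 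Any repair of your approach must likewise use that the $\tfrak$-action on $\tfrak^c$ inherited from the compact algebra $\pofrak(m)$ is semisimple, whereas every adjoint operator of the nilpotent target is nilpotent — a dimension- or rank-count alone cannot distinguish $\pofrak((1),\dots,(1))$ from the abelian algebra of the same dimension.
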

\vs

The Lie algebra $\pofrak((1),\dots,(1))$ is the conjugacy limit of $\pofrak(m)<\pgl_m\rr$ via $[b_n] = [\diag(1,\exp(n),\dots,\exp((m-1)n))]$.
The proof of \Cref{LemmaNoContractionO3toO111} is postponed to \Cref{AppendixPfLemmaNoContractionO3toO111}.

It is clear from \Cref{ExampleConjLimitsContractionsIso} that the concatenation of multiple contractions of $\pofrak(m)$ can yield a Lie algebra isomorphic to $\pofrak((1),\dots,(1))$.
This exemplifies the following theorem.
\vs

\begin{thm}\label{ThmMultipleContrations}
Consider $\PO(p,q)<\PGL_{p+q}\rr$ and let $\Oo'$ be the conjugacy limit of $\PO(p,q)$ via a sequence $[b_n]\in \PGL_{p+q}\rr$ for $n\to\infty$.
Then the Lie algebra $\ofrak'$ of $\Oo'$ is isomorphic to the composition of a finite number of contractions of $\pofrak(p,q)$.
\end{thm}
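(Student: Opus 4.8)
The plan is to exploit the classification of conjugacy limits of $\PO(p,q)$ recalled above: up to a geometry deformation (i.e.\ up to conjugacy in $\PGL_{p+q}\rr$, which does not change the isomorphism type of the Lie algebra), every such limit $\Oo'$ is of the form $\PO((p_0',q_0'),\dots,(p_l',q_l'))$ with $\sum_i p_i' = p$ and $\sum_i q_i' = q$, possibly after swapping some pairs $(p_i',q_i')\leftrightarrow(q_i',p_i')$. Swapping signature pairs is realized by an honest element of $\PGL_{p+q}\rr$ and again only changes $\ofrak'$ by an isomorphism, so it suffices to show: for every admissible partition $((p_0',q_0'),\dots,(p_l',q_l'))$ of $(p,q)$, the Lie algebra $\pofrak((p_0',q_0'),\dots,(p_l',q_l'))$ is isomorphic to a finite composition of contractions of $\pofrak(p,q)$. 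Since a contraction of a projective Lie algebra is the projectivization of the corresponding contraction of its trace-free lift, and taking the trace-free representative commutes with contraction, it is enough to work with the linear algebras $\ofrak((p_0',q_0'),\dots,(p_l',q_l'))$ and $\ofrak(p,q)$ and to produce the contractions there.

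The key step is an explicit inductive construction. I claim that $\ofrak((p_0',q_0'),\dots,(p_l',q_l'))$ is obtained from $\ofrak(p,q)$ by $l$ successive contractions, each one ``splitting off'' the next block. Concretely, I would first show the one-step statement: for any signature $(P,Q)$ and any decomposition $(P,Q) = (P_0,Q_0) + (P_1,Q_1)$ into nonnegative pairs, the Lie algebra $\ofrak((P_0,Q_0),(P_1,Q_1))$ is a contraction of $\ofrak(P,Q)$. For this, embed $\ofrak(P,Q)$ as the matrices antisymmetric with respect to the block-diagonal form $\eta = \eta_0 \oplus \eta_1$, write a general element in block form $\begin{pmatrix} A & B \\ C & D\end{pmatrix}$ with $A\in\ofrak(P_0,Q_0)$, $D\in\ofrak(P_1,Q_1)$, and $C$ determined by $B$ via the antisymmetry constraint; take $\tfrak = \ofrak(P_0,Q_0)\oplus\ofrak(P_1,Q_1)$ (the block-diagonal subalgebra) and $\tfrak^c$ the off-diagonal part, and run the contraction $\phi_\varepsilon$ of the paper's definition. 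A short computation shows the off-diagonal/off-diagonal bracket, which is the only source of a ``diagonal'' term obstructing the limit, gets killed in the $\varepsilon\to 0$ limit, while the diagonal-acts-on-off-diagonal brackets survive; the resulting bracket is exactly that of $\ofrak((P_0,Q_0),(P_1,Q_1))$, i.e.\ the off-diagonal part becomes an abelian ideal on which $\tfrak$ acts by the evident representation. (This is the $m=3$ phenomenon $[X^1,X^2]'=0$ seen in Example~\ref{ExampleConjLimitsContractionsIso}, written in general.) Then the theorem follows by applying this one-step contraction repeatedly: contract $\ofrak(p,q)$ along $\ofrak(p_0',q_0')\oplus\ofrak(p-p_0',q-q_0')$ to split off the first block, then contract the subalgebra $\ofrak(p-p_0',q-q_0')$ sitting inside the result, and so on; since a contraction of $\hfrak$ ``along'' $\tfrak$ leaves $\tfrak$ and any of its further subalgebra/complement decompositions available for a subsequent contraction, after $l$ steps one reaches $\ofrak((p_0',q_0'),\dots,(p_l',q_l'))$. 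Projectivizing throughout gives the statement for $\pofrak$.

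The main obstacle I expect is the bookkeeping in the inductive step: after the first contraction the ambient Lie algebra is no longer a matrix algebra with the naive block brackets, so I must check carefully that the ``contract along a subalgebra of the already-contracted algebra'' operation is well-defined and yields the next geometry in the chain — in particular that the nilpotent lower-triangular blocks built up in earlier steps do not interfere with the bracket computed in a later step. Concretely one verifies that at each stage the relevant subalgebra $\tfrak$ (the remaining orthogonal block plus all previously-created off-diagonal abelian ideals) and its complement still satisfy the bracket pattern $[\tfrak,\tfrak]\subseteq\tfrak$, $[\tfrak,\tfrak^c]\subseteq\tfrak^c$, $[\tfrak^c,\tfrak^c]\subseteq\tfrak$, so that the $\phi_\varepsilon$-limit exists and simply annihilates the last summand. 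A secondary point to be careful about is that Lemma~\ref{LemmaNoContractionO3toO111} shows a \emph{single} contraction cannot in general do the job, so the finiteness of the composition ($l\le$ number of blocks $\le p+q-1$) must be explicitly noted rather than assumed away; this is automatic from the construction since each step strictly increases the number of blocks. Finally, one should record that conjugation in $\PGL_{p+q}\rr$ (absorbing both the initial geometry deformation and the signature swaps) intertwines contractions with contractions, so composing with these isomorphisms stays within the class ``finite composition of contractions up to isomorphism,'' which is exactly what the theorem asserts.
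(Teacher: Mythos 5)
Your proposal is correct, and its engine is the same as the paper's: an iterated ``split off one block per contraction'' argument, where each step contracts along the block-diagonal (plus previously created nilpotent) subalgebra and the resulting semidirect-sum bracket is matched to the block-lower-triangular algebra $\ofrak((p_0',q_0'),\dots)$ by the isomorphism that forgets the upper off-diagonal block. Where you differ is in the reduction: you invoke the classification of conjugacy limits of $\PO(p,q)$ from~\cite{cooper2018limits} to know in advance that $\ofrak'$ is, up to conjugation and signature swaps, some $\pofrak((p_0',q_0'),\dots,(p_l',q_l'))$, and then realize that target by contractions; the paper instead works directly with the sequence $[b_n]$, using the $KBH$ decomposition to reduce to a diagonal sequence, factoring it as $b_n=b_n^{(s)}\cdots b_n^{(1)}$ into elementary one-ratio degenerations, and pairing the $l$-th contraction with the $l$-th factor via explicit isomorphisms $\sigma_l$. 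Your route is shorter because it outsources the hard combinatorics to the classification theorem; the paper's route is self-contained and constructive, producing the specific chain of contractions attached to a given $[b_n]$ rather than only an abstract isomorphism with \emph{some} chain. The bookkeeping worry you flag (that the second and later contractions are performed on an already-contracted, non-matrix algebra, and that one must verify $[\tfrak,\tfrak^c]\subseteq\tfrak^c$ and $[\tfrak^c,\tfrak^c]\subseteq\tfrak$ at each stage) is exactly the content of the paper's inductive step with $\tfrak_l=\{X\in\hfrak_l\mid \Ad_{b_n^{(l+1)}}X=X\}$, so your plan closes in the same way the paper does.
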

\vs

The constructive proof of this theorem is deferred to \Cref{AppendixPfThmMultipleContrations}.

\subsection{Projective frames and tensor fields}\label{SecProjTensorFields}
Analogously to conjugacy limits of projective subgroups, projective variants of tensor fields do not exhibit singular behavior in limits of geometries, but merely degenerate as discussed now.
Specifically, we define projective vector fields on a four-dimensional model space $\Xx$ as equivalence classes $[W]$ of vector fields on $\Xx$, which are defined up to the equivalence relation $W\sim \tilde{W}$, if there exists $\lambda\in C^\infty(\Xx,\rr_{\neq 0})$ such that for all $[x]\in\Xx$: $\tilde{W}([x]) = \lambda([x])\, W([x])$, in short $\tilde{W}=\lambda W$.
We note that for connected model spaces $\Xx$, the functions $\lambda\in C^\infty(\Xx,\rr_{\neq 0})$ have either positive or negative values on all $\Xx$.
\vs

\begin{remark}
This definition of projective vector fields is consistent with the more common definition of projectively related, torsion-free connections. 
According to~\cite{whitehead1931representation, tanaka1957projective, Eastwood2008}, two torsion-free connections on $\Xx$ are projectively related, if they have the same geodesics as point sets, i.e., if their geodesics agree up to parametrization.
Consider a vector field $W$ on $\Xx$, let $\lambda\in C^\infty(\Xx,\rr_{\neq 0})$ and set $\tilde{W}=\lambda W$.
Given a torsion-free connection $\nabla$, for every $[x]\in \Xx$ there is a unique geodesic $\gamma:(-\epsilon,\epsilon)\to \Xx$, $\epsilon>0$, with $\gamma(0) = [x]$ and $\dd \gamma(s)/\dd s|_{s=0} = W([x])$.
Similarly, $\tilde{W}$ induces a geodesic $\tilde{\gamma}:(-\tilde{\epsilon},\tilde{\epsilon})\to \Xx$, $\tilde{\epsilon}>0$, with $\tilde{\gamma}(0) = [x]$ and
\begin{equation*}
\frac{ \thisd }{\thisd s}\bigg|_{s=0}\tilde{\gamma}(s) = \tilde{W}([x]) = \lambda([x]) W([x]) = \lambda([x])\frac{ \thisd }{\thisd s}\bigg|_{s=0}\gamma(s)\,,
\end{equation*}
so that the geodesics $\gamma$ and $\tilde{\gamma}$ are the same as sets.
Then there exists a torsion-free connection $\tilde{\nabla}$ with geodesic $\tilde{\gamma}$, such that $\nabla$, $\tilde{\nabla}$ are projectively related~\cite{Eastwood2008}.
Therefore, projectively equivalent vector fields are indeed consistent with the notion of projectively related connections.
\end{remark}
\vs

Locally, projective vector fields can be viewed as projective linear combinations of representatives of projective frame vector fields.
A projective frame on the four-dimensional space $\Xx$ is defined from a frame $(e_1,\dots, e_4)$ on $\Xx$ as 
\begin{equation*}
\{([x],\{[e_1([x])],\dots, [e_4([x])],[e_1([x])+\ldots+e_4([x])]\})\,|\, [x]\in \Xx\}\,,
\end{equation*}
which we also denote as $([e_1],\dots,[e_4],[e_1+\ldots+e_4])$.
\vs

\begin{remark}
Five projective vector fields are required to uniquely specify a projective frame on the four-dimensional model space $\Xx$, based on the following argument~\cite{Gallier2011}.
Assume that $[e_\mu] = [f_\mu]$ for all $\mu=1,\dots, 4$. 
Then $e_\mu = \lambda_\mu f_\mu$ for $\lambda_\mu \in C^\infty(\Xx,\rr_{\neq 0})$. 
With the fifth projective vector field $[e_1+\ldots+e_4] = [f_1+\ldots +f_4]$ we have $e_1+\ldots+e_4 = \lambda (f_1+\ldots+f_4)$ for some $\lambda\in C^\infty(\Xx,\rr_{\neq 0})$. 
Together this yields $0=(\lambda_1 -\lambda)f_1+\ldots+(\lambda_4-\lambda)f_4$, such that by linear independence of the $f_\mu$: $\lambda_\mu = \lambda$ for all $\mu = 1,\dots,4$. 
In the projective setting, the two frames agree. 
\end{remark}
\vs

The following example shows that projective vector fields decompose non-uniquely into projective linear combinations of representatives of projective frame vector fields, which is different from non-projective vector fields on $\Xx$.
\vs

\begin{example}\label{ExampleVectorFieldTrafo}
Consider Poincaré geometry $\Gg((1),(3,1))$.
The map $[x_0,x_1,\dots,\allowbreak x_4]\mapsto (x_1/x_0,\dots,x_4/x_0)$ identifies $\Xx((1),(3,1))$ with $\Abb^{3,1}$.
With $y_\mu:= x_\mu/x_0$, $\mu=1,\dots,4$, the standard vector fields on $\Abb^{3,1}$ are given by $\partial/\partial y_\mu = x_0\, \partial/\partial x_\mu$.
A projective vector field $[W]$ on $\Xx((1),(3,1))$ can be written as a linear combination
\begin{equation*}
[W([x])] = \bigg[\sum_{\mu=1}^4 W_\mu([x])\, \frac{\partial}{\partial y_\mu}  + W_5([x])  \sum_{\mu=1}^4 \frac{\partial}{\partial y_\mu}\bigg]\,.
\end{equation*}
The functions $W_1,\dots,W_5$ are unique up to common prefactors as well as functions $f\in C^\infty(\Xx((1),(3,1)))$ acting as $W_1,\dots,W_5 \mapsto W_1-f,\dots,W_4-f, W_5+f$.

Still, we write the (non-unique) projective components of $[W]$ as $[W_1,\dots,W_5]$, when we can make well-posed statements about them.
For instance, the components $[W_1,\dots,W_5]$ uniquely define the set of homogeneous coordinates of $[W]$ with respect to $([\partial/\partial_1],\dots,[\partial/\partial y_4],[\partial/\partial y_1+\ldots + \partial/\partial_4])$, which is
\begin{align*}
&\bigg\{(\lambda W_1',\dots,\lambda W_4')\in C^\infty(\Xx)^4\,\bigg|\, W_\mu'\in C^\infty(\Xx), \forall [x]\in\Xx\, \exists \,\mu\in\{1,\dots,4\}: W_\mu'([x])\neq 0,\nonumber\\
&\qquad\qquad\qquad\qquad\qquad\qquad\qquad \lambda\in C^\infty(\Xx,\rr_{\neq 0}), \bigg[\sum_{\nu=1}^4 W_\nu' \partial/\partial y_\nu\bigg] = [W]\bigg\}\,.
\end{align*}
This assumes $[W]\neq 0$.
\end{example}
\vs

To study the behavior of a projective vector field $[W]$ under deformations and limits of geometries, let $([e_1],\dots,[e_5])$ be a projective frame on $\Xx$ and $[W_1,\dots,W_5]$ be the components of $[W]$ with respect to $([e_1],\dots,[e_5])$.
A sequence $[b_n]\in\Pp\GL_5\rr$ acts on the components $W_A$ as
\begin{equation}\label{EqVectorFieldDeformationMap}
([x], [W_1([x]),\dots,W_5([x])])\mapsto ([b_n\cdot x], [b_n]\cdot [W_1([x]),\dots,W_5([x])])\,.
\end{equation}
If $\lim_{n\to\infty}[b_n]$ exists as a projective $5\times 5$ matrix, the $n\to\infty$ limit of \eqref{EqVectorFieldDeformationMap} is well-defined, since no infinite blow-ups of components appear in the projective formulation. 
Yet, the limit matrices $\lim_{n\to\infty} [b_n]$ can decrease in rank, see e.g.~the $n\to \infty$ limit of the matrix \eqref{EqNonRelLimitMatrices}, whose rank decreases from 5 to 3. 
Projective vector fields can thus degenerate in limits, such that they locally provide elements of certain strict subspaces of the projective tangent spaces of $\Xx$.

We define projective tensor fields on $\Xx$ analogously to projective vector fields modulo multiplication by $C^\infty(\Xx,\rr_{\neq 0})$-elements.
On projective tensor fields such as projective metrics or projective differential forms, deformations of geometries act as tensor contractions with the $[b_n]$ or $[b_n^{-1}]$, depending on the type of the projective tensor field under consideration. 
Analogously to projective vector fields, projective tensor fields can degenerate in limit processes, but remain well-defined if $\lim_{n\to\infty}[b_n]$ exists as a projective $5\times 5$ matrix.
For instance, no singularities appear for projective metrics, which can solely degenerate.%
\footnote{This is in contrast to standard general relativity, where locally singular behavior of metrics can render the theory inconsistent in their surrounding neighborhoods.}
\vs

\begin{remark}
Projective equivalence classes of tensor fields are similar to conformally equivalent tensor fields.
If a tensor field $Q$ on $\RP^4$ has non-zero conformal weight $\Delta$, its projective equivalence class $[Q]$ can be identified with the set generated by its Weyl transformations $Q\mapsto \exp(\Delta \omega)Q$ for $\omega\in C^\infty(\RP^4)$ together with the reflection $Q\mapsto - Q$.
\end{remark}

% ----------------------- new section ---------------------

\section{Quantum fields on projective geometries}\label{SecQFTOnDeformedGeometries}
Quantum fields are tied to space-time geometries by means of their behavior under space-time symmetry transformations, which involves projective unitary representations of the space-time symmetry groups acting on the Hilbert space~\cite{Weinberg:1995mt}.
Irreducibility of the representations characterizes fundamental quantum particles, and the Hilbert space of the theory is constructed from these. 
Geometries thus dictate the types of particles that can occur and at least partly the structure of the Hilbert space.
For instance, on Poincaré geometry particles are described by the well-known massive spin and massless helicity representations of the Poincaré group.

\subsection{Projective quantum fields}\label{SecQuantumFields}
We define projective quantum fields similarly to Wightman quantum fields~\cite{Haag1996Local}, thereby in a certain way generalizing the projective tensor fields considered in \Cref{SecProjTensorFields}.
We do not aim for their complete mathematical characterization, but restrict to the representa-tion-theoretic assumptions necessary for our geometry-related derivations. 
Let $\Hcal$ be a complex Hilbert space with inner product $\langle\cdot ,\cdot \rangle$ (anti-linear with respect to its first argument), on which a projective quantum field can be defined.
$\Ucal(\Hcal)$ denotes the set of unitary linear operators on $\Hcal$ and $\overline{\PGL_5\rr}$ the universal cover of $\PGL_5\rr$. 
\vs

\begin{defi}\label{DefFieldOperator}
A \emph{projective quantum field} is a tuple $\hat{\Ocal} = (U, \rho, \{[\hat{\Ocal}([x])]\,|\,[x]\in \RP^4\})$ consisting of:
\begin{enumerate}[(i)]
\item a projective unitary $\PGL_5\rr$ representation $U:\PGL_5\rr \to \Ucal(\Hcal)$ with adjoint ${}^\dagger$,
\item a finite-dimensional complex $\overline{\PGL_5\rr}$ representation $\rho$ with $\dim\rho\neq 0$, and
\item equivalence classes of tuples of non-zero linear operator-valued tempered distributions $[\hat{\Ocal}([x])] = [\hat{\Ocal}_1([x]),\dots,\hat{\Ocal}_{\dim\rho}([x])]$, $[x]\in\RP^4$, defined modulo $C^\infty(\RP^4,\allowbreak \rr_{\neq 0})$ prefactors, where all $\hat{\Ocal}_\alpha([x])$ have a common dense domain of definition $\Dcal\subset\Hcal$, $U([g])\Dcal \subset \Dcal$ and $\hat{\Ocal}_\alpha([x])\Dcal\subset\Dcal$ for all $[g]\in \PGL_5\rr$, $[x]\in\RP^4,\alpha=1,\dots,\dim\rho$, and for which the smeared representatives
\begin{equation*}
\hat{\Ocal}_\alpha(f,\RP^4):= \int_{\RP^4} \hat{\Ocal}_\alpha([x])\, f([x])\, \dd^4[x]
\end{equation*}
are bounded operators on $\Dcal$ for all $f\in C^\infty(\RP^4)$,
\end{enumerate}
such that for all $[x]\in\RP^4$, $[g]\in \PGL_5\rr$, $\alpha=1,\dots,\dim\rho$:
\begin{equation}\label{EqGlobalCovariance}
U([g])\hat{\Ocal}_\alpha([x])U^\dagger([g]) = \sum_{\beta=1}^{\dim\rho}\rho_{\alpha\beta}([g^{-1}]) \hat{\Ocal}_\beta([g\cdot x])\,.
\end{equation}
Often we omit the component indices $\alpha,\beta$ from notations.
When denoted as $\hat{\Ocal}$, $\hat{\Ocal} = (U, \rho, \{[\hat{\Ocal}([x])]\,|\,[x]\in\RP^4\})$ is understood.
The $[\hat{\Ocal}([x])]$ are called \emph{field operators}.
On the geometry $(\Xx,\Oo) <  (\RP^4,\PGL_5\rr)$ the projective quantum field is given by restriction:
\begin{equation*}
\hat{\Ocal}|_{(\Xx,\Oo)}:= (U|_\Oo,\rho,\{[\hat{\Ocal}([x])]\,|\, [x]\in\Xx\})\,.
\end{equation*}
We also call $\hat{\Ocal}|_{(\Xx,\Oo)}$ a \emph{restricted} projective quantum field.
It comes with smeared representatives
\begin{equation*}
\hat{\Ocal}_\alpha(f,\Xx) := \int_\Xx \hat{\Ocal}_\alpha([x])\,f([x])\, \dd^4[x]\,,
\end{equation*}
where $f\in C^\infty(\Xx)$.
\end{defi}
\vs

The smeared representatives are defined not for Schwartz functions, but for smooth functions, since $\RP^4$ is compact.
We note that the finite-dimensional representations of $\overline{\PGL_5\rr}$ all factor through representations of $\PGL_5\rr$ and are thus not faithful.
Upon restriction to the geometry $(\Xx,\Oo)$, no restriction of the representation $\rho$ to $\Oo$ is included, so that local changes of (projective) reference frames can be consistently described in an accompanying work~\cite{spitz2024similarities}.
No equal-time commutation relations and causality preservation are assumed, whose formulation for curved geometries would rest on an observer-specific, distinguished time direction as available e.g.~for globally hyperbolic space-times~\cite{oneill1983semiriemannian}.
%Implications of causality preservation for Lorentz geometries will be studied elsewhere~\cite{spitz2024similarities}.
As part of an algebraic characterization of projective quantum fields, \Cref{DefFieldOperator} could be incorporated in Haag-Kastler conditions for nets of von Neumann operator algebras~\cite{Haag1996Local, Borchers:2000pv, Witten:2018zxz}, if these are formulated for the ambient projective geometry $(\RP^4,\PGL_5\rr)$. 
\vs

\begin{remark}
\Cref{DefFieldOperator} rests upon $(\RP^4,\PGL_5\rr)$ as the ambient geometry.
This choice is at least in parts without loss of generality, see \Cref{AppendixAmbientGeometryChoice}.
\end{remark}
\vs

\begin{remark}
While $U$ is a projective unitary representation in \Cref{DefFieldOperator}, $\rho$ is non-projective. 
This is sensible, since if $\rho$ was projective, its multiplier would be equivalent to the trivial one, see \Cref{PropRhoHasTrivialMultiplier} in \Cref{AppendixAspectsProjectivity}.
The finite dimensionality of $\rho$ is part of established formal approaches~\cite{Haag1996Local}.
\end{remark}
\vs

\Cref{DefFieldOperator} indeed extends the unitary behavior of quantum fields on fixed geometries, as the following example demonstrates for Poincaré geometry.
\vs

\begin{example}\label{ExampleVectorPoincareRep}
Consider a projective quantum field $(U,\CP^4_{\PGL_5\rr},\{[\hat{\Ocal}([x])]\,|\,[x]\in\RP^4\})$, where $\CP^4_{\PGL_5\rr}$ denotes the fundamental complex representation of $\PGL_5\rr$ acting on $\CP^4$ by (projective) matrix multiplication.
$\CP^4_{\PGL_5\rr}$ provides a $\overline{\PGL_5\rr}$ representation by previously applying the cover projection $\overline{\PGL_5\rr}\to \PGL_5\rr$, which we often omit from notations.
This describes projective quantum fields, which resemble projective vector fields, see \Cref{SecProjTensorFields}.
Restricted to Poincaré geometry $\Gg((1),(3,1))$, the action of $U|_{\PO((1),(3,1))}$ is to be considered.
The generalized unitary transformation behavior \eqref{EqGlobalCovariance} of the field operators $[\hat{\Ocal}([x])]$ yields for a Poincaré transformation $[(\Lambda,t)]\in\PO((1),(3,1))$ (see \Cref{EqPoincareTrafo55}):
\begin{equation*}
U([(\Lambda,t)]) [\hat{\Ocal}([x])] U^\dagger([(\Lambda,t)]) = [(\Lambda,t)]^{-1}\cdot [\hat{\Ocal}([\Lambda\cdot x + t])]\,,
\end{equation*}
where $[(\Lambda,t)]^{-1} = [(\Lambda^{-1},-\Lambda^{-1}\cdot t)]$.
If $\hat{\Ocal}_1([x]) = 0$, we have for the other components of representative field operators $\hat{\Ocal}([x])$ of $[\hat{\Ocal}([x])] = [\hat{\Ocal}_1([x]),\dots,\hat{\Ocal}_5([x])]$:
\begin{equation}\label{EqLorentzTrafoVectorFieldBehavior}
U([(\Lambda,t)]) \hat{\Ocal}_{\mu+1}([x]) U^\dagger([(\Lambda,t)]) = \sum_{\nu=1}^4 (\Lambda^{-1})_{\mu\nu} \hat{\Ocal}_{\nu +1}([\Lambda\cdot  x + t])
\end{equation}
for all $\mu=1,\dots,4$.
This is the usual behavior of vector quantum fields under Poincaré transformations.
\end{example}
\vs

\begin{remark}
The massless, infinite spin representations of the Poincaré group~\cite{Weinberg:1995mt} cannot appear upon restriction of a projective quantum field to Poincaré geometry, since the representation $\rho$ is required to be finite-dimensional.
\end{remark}
\vs

Projective quantum fields transform naturally and smoothly under geometry deformations up to action of the finite-dimensional representation $\rho$ as specified by the following lemma.
Given a model space $\Xx$ and a sequence $[b_n]\in\PGL_5\rr$, for the treatment of geometry limits we denote the set of point-wise limits of the form $\lim_{n\to\infty} [b_n\cdot x]$ for $[x]\in \Xx$ by $\lim_{n\to\infty}[b_n]\cdot \Xx$, which in general is different from limits of model spaces with respect to the model space topology.
In fact, we show in the next subsection that the spaces $\lim_{n\to\infty}[b_n]\cdot \Xx$ can form dimensionally reduced subspaces of model space closures, see in particular \Cref{PropPointwiseConvergenceModelSpacePoints}.
\vs

\begin{lem}\label{LemmaQuantumFieldsTransformNaturallySmoothly}
Consider a projective quantum field $\hat{\Ocal}=(U,\rho,\{[\hat{\Ocal}([x])]\,|\,[x]\in\RP^4\})$ and the geometry $\Gg=(\Xx,\Oo)$, let $[b_n]\in\PGL_5\rr$ and $(\Xx',\Oo')=\lim_{n\to\infty} ([b_n]_* \Gg)$ be the limit geometry.
Assume $[b_\infty]:=\lim_{n\to\infty}[b_n]$ exists as a projective $5\times 5$ matrix.
The action of $[b_n]$ on $\hat{\Ocal}|_{(\Xx,\Oo)}$ yields the restricted projective quantum field
\begin{equation*}
(U|_{\Ad_{[b_n]\Oo}},\rho,\{[\rho([b_n^{-1}])\hat{\Ocal}([x])]\,|\,[x]\in[b_n]\cdot \Xx\})
\end{equation*}
with $n\to\infty$ limit
\begin{equation*}
\left(U|_{\Oo'},\rho,\left\{\left.[\rho_\infty]\cdot [\hat{\Ocal}([x])]\,\right|\, [x]\in\lim_{n\to\infty}[b_n]\cdot \Xx\right\}\right)\,,
\end{equation*}
where $[\rho_\infty] = \lim_{n\to \infty}[\rho([b_n^{-1}])]\in \Pp\mathrm{Mat}_{\dim\rho\times\dim\rho}\cc$, which is the space of projective, complex $\dim\rho\times\dim\rho$ matrices.
Further, $U([g]) [\hat{\Ocal}([x])] U^\dagger([g])$, $[\hat{\Ocal}([g\cdot x])]$ and all their representatives depend smoothly on $[g]\in\PGL_5\rr$ and $[x]\in\RP^4$, where convergence can be defined with regard to the operator norm on $\Dcal$ for the smeared operator variants.
\end{lem}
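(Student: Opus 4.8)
The plan is to establish the three assertions in sequence: the explicit form of the deformed field, its $n\to\infty$ limit, and the smoothness statement. For the deformed field, I would implement the action of $[b_n]$ on $\hat{\Ocal}|_{(\Xx,\Oo)}$ by conjugating the field operators with $U([b_n])$ and pushing their arguments forward along $[x]\mapsto[b_n\cdot x]$, so that at $[x]\in[b_n]\cdot\Xx$ the deformed operator is $U([b_n])\hat{\Ocal}_\alpha([b_n^{-1}\cdot x])U^\dagger([b_n])$. The key step is that the global covariance relation \eqref{EqGlobalCovariance}, applied with $[g]=[b_n]$ at the argument $[b_n^{-1}\cdot x]$, rewrites this operator as $\sum_\beta\rho_{\alpha\beta}([b_n^{-1}])\hat{\Ocal}_\beta([x])$, so its equivalence class is exactly $[\rho([b_n^{-1}])\hat{\Ocal}([x])]$. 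It then remains to verify that the resulting tuple is a restricted projective quantum field on $([b_n]\cdot\Xx,\Ad_{[b_n]}\Oo)$: the common dense domain $\Dcal$ is preserved since $U([g])\Dcal\subset\Dcal$ for all $[g]$ and $\rho([b_n^{-1}])$ acts only on the component index; the operator-valued tempered-distribution property and the boundedness of the smeared representatives are inherited, because conjugation by the fixed unitary $U([b_n])$ is isometric and the diffeomorphism $[x]\mapsto[b_n\cdot x]$ only transports a test function within $C^\infty$ through its smooth, nowhere-vanishing Jacobian; and the transformation law under $\Ad_{[b_n]}\Oo$ follows from \eqref{EqGlobalCovariance} together with the homomorphism properties of $U$ and $\rho$, with $\rho$ entering precomposed with the inner automorphism $\Ad_{[b_n^{-1}]}$ and hence isomorphic to $\rho$ through the intertwiner $\rho([b_n^{-1}])$.

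For the $n\to\infty$ limit I would combine three convergences. First, $\Ad_{[b_n]}\Oo\to\Oo'$ holds by construction of the limit geometry, so $U|_{\Ad_{[b_n]}\Oo}\to U|_{\Oo'}$ by strong continuity of $U$ and the definition of geometric convergence. Second, since every finite-dimensional representation of $\overline{\PGL_5\rr}$ factors through $\PGL_5\rr$ and, up to an overall scalar, depends on a matrix representative of its argument through homogeneous polynomials, the projective matrices $[\rho([b_n^{-1}])]$ lie in the compact space $\Pp\mathrm{Mat}_{\dim\rho\times\dim\rho}\cc$ and converge to a well-defined $[\rho_\infty]$ — possibly of strictly smaller rank than each $\rho([b_n^{-1}])$ — whenever $[b_n^{-1}]$ converges projectively, which follows from the hypothesis that both $[b_\infty]$ and the geometry limit exist (and is manifest once $[b_n]$ is put into a real-diagonalizable normal form, available for the classified geometry limits~\cite{cooper2018limits}, in which $[\rho([b_n^{-1}])]$ is diagonal in a weight basis and tends to the projector onto the extremal weight space). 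Third, the set $\lim_n[b_n]\cdot\Xx$ of point-wise limits is identified by \Cref{PropPointwiseConvergenceModelSpacePoints} and is generically a lower-dimensional subspace of the model-space closure. Passing to the limit in $[\rho([b_n^{-1}])\hat{\Ocal}([x])]$ then gives $[\rho_\infty]\cdot[\hat{\Ocal}([x])]$ for $[x]\in\lim_n[b_n]\cdot\Xx$; when $[\rho_\infty]$ is not invertible this family degenerates as a finite-rank tensor with field-operator components and supports only the lower-dimensional $\lim_n[b_n]\cdot\Xx$, which is the degeneration the statement records.

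For the smoothness assertion I would argue with the smeared operators, since the field operators are only distributions. Each $\hat{\Ocal}_\alpha(f,\RP^4)$ is a bounded operator depending smoothly on $f\in C^\infty(\RP^4)$ because $\hat{\Ocal}_\alpha$ is an operator-valued tempered distribution; rewriting $\int_{\RP^4}\hat{\Ocal}_\alpha([g\cdot x])f([x])\,\dd^4[x]$ by the substitution $[x]\mapsto[g^{-1}\cdot x]$ replaces $f$ by $f([g^{-1}\cdot x])\,|J_{[g]}([x])|$, which depends smoothly on $[g]\in\PGL_5\rr$ inside $C^\infty(\RP^4)$ because the $\PGL_5\rr$-action on $\RP^4$ and its Jacobian are smooth; and $A\mapsto U([g])AU^\dagger([g])$ is smooth in $[g]$ in the operator topology on $\Dcal$ specified in the statement, by (strong) continuity of $U$. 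Composing these dependences shows that $[g]\mapsto U([g])\hat{\Ocal}_\alpha(f,\RP^4)U^\dagger([g])$ and $([g],[x])\mapsto\hat{\Ocal}_\alpha([g\cdot x])$ are smooth, and the same holds for the equivalence classes since the permitted $C^\infty(\RP^4,\rr_{\neq 0})$-rescalings vary smoothly.

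The main obstacle I anticipate lies in the limit: showing that $[\rho([b_n^{-1}])]$ genuinely converges in $\Pp\mathrm{Mat}_{\dim\rho\times\dim\rho}\cc$ — extracting $[\rho_\infty]$ forces one to exploit the homogeneous-polynomial structure of $\rho$ and, in general, to reduce $[b_n]$ to a normal form — and then making precise the sense in which the family $\{[\rho([b_n^{-1}])\hat{\Ocal}([x])]\}$ of operator-valued distributions converges once its support collapses onto the lower-dimensional $\lim_n[b_n]\cdot\Xx$. The remaining verifications are routine, resting on isometry of unitary conjugation, smoothness of the $\PGL_5\rr$-action, and the distributional nature of the field operators.
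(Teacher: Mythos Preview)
Your overall strategy matches the paper's, but two steps are handled too lightly and constitute genuine gaps. First, when you verify the transformation law under $\Ad_{[b_n]}\Oo$ by invoking ``the homomorphism properties of $U$'', recall that $U$ is only a \emph{projective} representation with multiplier $\omega$; writing $U([b_n])U([g])=U([b_ngb_n^{-1}])U([b_n])$ is not free. The paper carries out the explicit multiplier computation, using multiplier cyclicity and the identity $\omega([b_n],[g^{-1}])=\omega([g],[b_n^{-1}])^{-1}$, to show that all $\omega$-factors cancel in
\[
U([b_n])U([g])\hat{\Ocal}([x])U^\dagger([g])U^\dagger([b_n]) = U([b_ngb_n^{-1}])\,U([b_n])\hat{\Ocal}([x])U^\dagger([b_n])\,U^\dagger([b_ngb_n^{-1}])\,.
\]
Without this, the deformed tuple is not manifestly a restricted projective quantum field.

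Second, your passage ``$\Ad_{[b_n]}\Oo\to\Oo'$ \dots\ so $U|_{\Ad_{[b_n]}\Oo}\to U|_{\Oo'}$ by strong continuity of $U$'' is not immediate. One must show that the conjugacy limit of $U(\Oo)$ inside $U(\PGL_5\rr)$ via $\Ad_{U([b_n])}$ coincides, modulo multiplier, with $U(\Oo')$; continuity of $U$ gives one inclusion, but the reverse (every accumulation point of $\Ad_{U([b_n])}U(\Oo)$ comes from $\Oo'$) requires the Hausdorff property of Chabauty topology and a separate argument for non-faithful $U$. The paper isolates this as a lemma proved in an appendix. Your treatment of $[\rho_\infty]$ via the homogeneous-polynomial structure of $\rho$ is in fact more careful than the paper's bare appeal to compactness of $\RP^{(\dim\rho)^2-1}$, and your smoothness argument through test-function pullback and Jacobians is a concrete variant of the paper's shorter appeal to smoothness of the representations $U$ and $\rho$.
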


\begin{proof}
Let $\omega$ be the (Schur) multiplier of the projective representation $U$.
A geometry deformation via $[b_n]\in\PGL_5\rr$ acts on representatives $\hat{\Ocal}([x])$ of the field operators $[\hat{\Ocal}([x])]$, transformed via $[g]\in\Oo$, as
\begin{align}
U([g]) \hat{\Ocal}([x])  U^\dagger([g])\mapsto &\; U([b_n]) U([g])\hat{\Ocal}([x]) U^\dagger([g]) U^\dagger([b_n]) \nonumber\\
& =\omega([b_n],[g]) \omega ( [b_ng],[b_n^{-1}])  U([b_n gb_n^{-1}])  U([b_n]) \hat{\Ocal}([x]) U^\dagger([b_n])\nonumber\\
&\qquad\qquad\qquad \times \omega([b_n],[g^{-1}]) \omega([b_ng^{-1}],[b_n^{-1}])  U^\dagger([b_n g b_n^{-1}])\nonumber\\
& = U([b_n g b_n^{-1}]) U([b_n]) \hat{\Ocal}([x]) U^\dagger([b_n]) U^\dagger([b_n g b_n^{-1}])\,.\label{EqUOUDecompMultiplier}
\end{align}
Here we used multiplier cyclicity, i.e.,
\begin{equation*}
\omega([b_n],[g]) \omega([b_n g],[b_n^{-1}]) = \omega([b_n],[gb_n^{-1}])\omega([g],[b_n^{-1}])\,,
\end{equation*}
and
\begin{align*}
\omega([b_n],[g^{-1}]) U([b_n g^{-1}]) =&\;  U([b_n]) U([g^{-1}]) \nonumber\\
= &\; (U([g]) U([b_n^{-1}]))^{-1} = \frac{1}{\omega([g],[b_n^{-1}])} U([b_n g^{-1}])\,,
\end{align*}
analogously for $\omega([b_ng^{-1}],[b_n^{-1}])$.
By means of \Cref{EqUOUDecompMultiplier}, the symmetry group $\Oo$ is conjugated by $[b_n]$ and the field operator $\hat{\Ocal}([x])$ is changed to 
\begin{equation*}
U([b_n])\hat{\Ocal}([x])U^\dagger([b_n]) = \rho([b_n^{-1}]) \hat{\Ocal}([b_n\cdot x])\,.
\end{equation*}

We prove in \Cref{AppendixCommutativeDiagramReps} that the conjugacy limit of $U(\Oo)$ in $U(\PGL_5\rr)$ via $U([b_n])$ is the same as $U(\Oo')$ up to multiplier, i.e., the diagram
\begin{equation*}
\begin{tikzcd}
\lim_{n\to\infty} \Ad_{[b_n]}: &  \Oo\arrow{r}\arrow{d}{U|_{\Oo}} &  \Oo' \arrow{d}{ [ U|_{\Oo'} ] }\\
\lim_{n\to\infty} [\Ad_{U([b_n])} ] : & U(\Oo) \arrow{r} & \left[U(\Oo')\right]
\end{tikzcd}
\end{equation*}
commutes. 
Here, $[U(\Oo')]$ denotes $U(\Oo')$ modulo $\Uu(1)$ prefactors, analogously for the maps in square brackets.
In fact, by the considerations of \Cref{AppendixCommutativeDiagramReps} multipliers of $U|_{\Oo'}$ are limits of multipliers of $U|_{\Oo}$, which can become trivial.
The matrix $[\rho_\infty]$ exists, since $[b_\infty]$ exists and $\rho([b_n^{-1}])$ can be viewed as a sequence in the compact space $\RP^{(\dim\rho)^2-1}$.
This shows that restricted projective quantum fields behave in the geometry limit as claimed.

Concerning smoothness, the generalized unitary transformation behavior \eqref{EqGlobalCovariance} yields
\begin{equation*}
[\hat{\Ocal}([g\cdot x])] = \Ad_{U([g])} [\rho([g]) \hat{\Ocal}([x])]\,.
\end{equation*}
By the definition of Lie group representations, $[\rho([g])]$ and $U([g])$ depend smoothly on $[g]$, thus also $\Ad_{U([g])}$.
Therefore, $[\hat{\Ocal}([g\cdot x])]$ and $U([g]) [\hat{\Ocal}([x])] U^\dagger([g])$ depend smoothly on $[g]$, where convergence can be defined with regard to the norm on $\Dcal$ for representatives of the smeared operator variants.
Smooth dependence on $[x]\in\RP^4$ follows from transitivity of the smooth group action.
\end{proof}

\subsection{Projective correlators}\label{SecProjCorrelators}
Given a four-dimensional geometry $(\Xx,\Oo)$, an algebra $\Afrak(\Xx)$ of projective correlators can be defined as
\begin{equation*}
\Afrak(\Xx):=\bigg\{\big[\hat{\Ocal}^{(\dagger)}(f_1,\Xx)\otimes \ldots \otimes \hat{\Ocal}^{(\dagger)}(f_\ell,\Xx)\big]\bigg|\, f_i\in C^\infty(\RP^4), \ell\in\nn\bigg\}\,,
\end{equation*}
where the involved $\hat{\Ocal}([x])$ are representatives of $[\hat{\Ocal}([x])]$ and the superscript $(\dagger)$ denotes for each tensor product factor individually, whether taking the adjoint or not.
The tensor products are defined with respect to the components $\hat{\Ocal}_\alpha(f_i,\Xx)$ of the smeared field operator representatives.
The involved functions $f_i$ are defined on $\RP^4$, so that the action of geometry deformations and limits of projective correlators can be consistently described.
We denote the analogous algebra constructed on the ambient geometry model space $\RP^4$ by $\Afrak(\RP^4)$.
Elements of $\Afrak(\Xx)$ and $\Afrak(\RP^4)$ form bounded operators on $\Dcal$, since they involve finitely many concatenations of operators, which are bounded on $\Dcal$.

Projective correlator algebras behave as follows under geometry deformations.
\vs

\begin{prop}\label{PropCanonicalTensorProductRepOperatorProducts}
Let $[g]\in\PGL_5\rr$ and $(\Xx,\Oo)< (\RP^4,\PGL_5\rr)$ be a geometry.
Deformations of $(\Xx,\Oo)$ via $[g]$ give the commutative diagram
\begin{equation*}
\begin{tikzcd}
 \Ad_{[g]}: &  \Oo\arrow{r}{\sim}\arrow{d}{\Ad_{U}} &  \Ad_{[g]}\Oo  \arrow{d}{\Ad_U}\\
 \Ad_{U([g])}: & \Afrak(\Xx) \arrow{r}{\sim} & \Afrak([g]\cdot \Xx)\,,
\end{tikzcd}
\end{equation*}
where the horizontal maps are isomorphisms of groups (top) and algebras (bottom).
The vertical map $\Ad_{U}$ acts for an element $[h]\in\Oo$ as $\Ad_{U([h])}$:
\begin{align}
&[\hat{\Ocal}^{(\dagger)}(f_1,\Xx)\otimes \ldots \otimes \hat{\Ocal}^{(\dagger)}(f_\ell,\Xx)]\nonumber\\
&\mapsto [(\rho([h^{-1}])\hat{\Ocal})^{(\dagger)}(f_1([h^{-1}\cdot]),[h]\cdot \Xx)\otimes \ldots \otimes (\rho([h^{-1}])\hat{\Ocal})^{(\dagger)}(f_\ell([h^{-1}\cdot]),[h]\cdot \Xx)]\,,\label{EqDeformationCorrelator}
\end{align}
and analogously horizontally for more general $[g]\in\PGL_5\rr$ instead of such $[h]$.
\end{prop}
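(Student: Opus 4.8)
The plan is to verify the two parts of the statement separately: first that the horizontal maps are well-defined and are isomorphisms (of groups on top, of algebras on the bottom), and second that the diagram commutes, i.e.\ the two composite maps $\Oo \to \Afrak([g]\cdot\Xx)$ agree. The top horizontal map $\Ad_{[g]}:\Oo \to \Ad_{[g]}\Oo$ is a group isomorphism essentially by definition of $\Ad_{[g]}\Oo := [g]\cdot\Oo\cdot[g^{-1}]$ (conjugation by a fixed invertible projective matrix is an inner automorphism of $\PGL_5\rr$ restricted to subgroups), so no work is needed there. The vertical maps $\Ad_U$ send a group element $[h]$ to the operator-algebra automorphism $\Ad_{U([h])}$ acting on $\Afrak(\Xx)$; I would first check that the displayed formula \eqref{EqDeformationCorrelator} is the correct description of this action. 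This is where \Cref{LemmaQuantumFieldsTransformNaturallySmoothly} (and specifically \Cref{EqUOUDecompMultiplier} together with the identity $U([b_n])\hat\Ocal([x])U^\dagger([b_n]) = \rho([b_n^{-1}])\hat\Ocal([b_n\cdot x])$, here with $[b_n]$ replaced by $[h]$) does the heavy lifting: conjugating a smeared operator $\hat\Ocal(f,\Xx) = \int_\Xx \hat\Ocal([x]) f([x])\,\dd^4[x]$ by $U([h])$ pulls $U([h])$ through the integral, turns $\hat\Ocal([x])$ into $\rho([h^{-1}])\hat\Ocal([h\cdot x])$, and then a change of integration variable $[x]\mapsto[h^{-1}\cdot x]$ (using $\PGL_5\rr$-invariance of the measure $\dd^4[x]$ on $\RP^4$, or at worst a smooth Jacobian that can be absorbed into the $C^\infty(\RP^4,\rr_{\neq 0})$ prefactor ambiguity defining the equivalence class) rewrites this as $(\rho([h^{-1}])\hat\Ocal)(f([h^{-1}\cdot]),[h]\cdot\Xx)$; the multiplier phases cancel exactly as in the proof of \Cref{LemmaQuantumFieldsTransformNaturallySmoothly} because $\Ad_{U([h])}$ is insensitive to the $\Uu(1)$ ambiguity in $U$. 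The tensor product structure is preserved factor-by-factor, and adjoints commute with the whole operation (conjugation by a unitary commutes with ${}^\dagger$), so \eqref{EqDeformationCorrelator} follows.

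Next I would establish that the bottom horizontal map $\Ad_{U([g])}:\Afrak(\Xx)\to\Afrak([g]\cdot\Xx)$ is an algebra isomorphism. Conjugation by the (projective) unitary $U([g])$ is an automorphism of the ambient operator algebra on $\Dcal$; the content is that it carries the subalgebra $\Afrak(\Xx)$ onto $\Afrak([g]\cdot\Xx)$. Surjectivity and injectivity both follow from the same computation applied to $[g]$ and to $[g^{-1}]$ — the formula analogous to \eqref{EqDeformationCorrelator} for general $[g]$ shows that a generating element $[\hat\Ocal^{(\dagger)}(f_1,\Xx)\otimes\cdots\otimes\hat\Ocal^{(\dagger)}(f_\ell,\Xx)]$ of $\Afrak(\Xx)$ maps to a generating element of $\Afrak([g]\cdot\Xx)$ with smearing functions $f_i([g^{-1}\cdot])\in C^\infty(\RP^4)$ and field-operator representatives $\rho([g^{-1}])\hat\Ocal$, and conversely; multiplicativity is automatic since $\Ad_{U([g])}$ is an algebra homomorphism, and it respects the projective equivalence (quotient by $C^\infty$-prefactors) because those prefactors are scalars. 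Boundedness of images on $\Dcal$ is inherited, as noted in the text preceding the proposition, since the image is again a finite concatenation of bounded operators.

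Finally, commutativity of the square: starting from $[h]\in\Oo$, going right-then-down gives the automorphism $\Ad_{U([h'])}$ of $\Afrak([g]\cdot\Xx)$ where $[h'] = [g]\cdot[h]\cdot[g^{-1}]$, while going down-then-right gives $\Ad_{U([g])}\circ\Ad_{U([h])}\circ\Ad_{U([g])}^{-1}$ on $\Afrak([g]\cdot\Xx)$. These agree because $\Ad_U$ is a projective representation: $\Ad_{U([g])}\Ad_{U([h])}\Ad_{U([g^{-1}])} = \Ad_{U([g])U([h])U([g^{-1}])} = \Ad_{U([ghg^{-1}])}$, the multiplier phases dropping out of the adjoint action — exactly the cancellation already carried out in \Cref{EqUOUDecompMultiplier}. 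The same argument with $[h]$ replaced by a general $[g']\in\PGL_5\rr$ gives the claimed horizontal compatibility statement. The main obstacle I anticipate is purely bookkeeping: making the change-of-variables step on $\RP^4$ fully rigorous at the level of operator-valued tempered distributions (rather than smeared operators) and confirming that any measure-transformation Jacobian is a strictly positive smooth function so that it is harmless modulo the defining $C^\infty(\RP^4,\rr_{\neq 0})$ ambiguity; all the representation-theoretic content is already supplied by \Cref{LemmaQuantumFieldsTransformNaturallySmoothly} and its proof.
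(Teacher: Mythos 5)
Your proposal is correct and follows essentially the same route as the paper's proof: conjugate each smeared tensor factor by $U([g])$, apply the covariance law \eqref{EqGlobalCovariance} to replace $U([g])\hat{\Ocal}([x])U^\dagger([g])$ by $\rho([g^{-1}])\hat{\Ocal}([g\cdot x])$, and change integration variables (the paper likewise dismisses the Jacobian as trivial in the projective setting), concluding that $\Ad_{U([g])}$ is an algebra isomorphism onto $\Afrak([g]\cdot\Xx)$. Your extra remarks on the commutativity of the square via multiplier cancellation and on invertibility via $[g^{-1}]$ are correct elaborations of points the paper leaves implicit, not a different argument.
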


\begin{proof}
Let $\hat{\Ocal}(f_i,\Xx)$ be representatives of the smeared field operators $[\hat{\Ocal}(f_i,\Xx)]$, $f_i\in C^\infty(\RP^4)$.
$\Ad_{U([g])}$ maps
\begin{equation*}
\hat{\Ocal}_{\beta_1}^{(\dagger)}(f_1,\Xx)\circ \cdots \circ\hat{\Ocal}_{\beta_\ell}^{(\dagger)}(f_\ell,\Xx)
\end{equation*}
to 
\begin{align}
&U([g])\hat{\Ocal}_{\beta_1}^{(\dagger)}(f_1,\Xx)\circ \cdots \circ \hat{\Ocal}_{\beta_\ell}^{(\dagger)}(f_\ell,\Xx) U^\dagger([g])\nonumber\\
&= \int_{\Xx^\ell} f_1([x_1])\cdots f_\ell([x_\ell]) \, \rho_{\beta_1\gamma_1}^{(\dagger)}([g^{-1}])\hat{\Ocal}^{(\dagger)}_{\gamma_1}([g\cdot x_1])\circ  \cdots\nonumber\\
&\qquad\qquad\qquad\qquad\qquad\qquad \circ\rho_{\beta_\ell\gamma_\ell}^{(\dagger)}([g^{-1}])\hat{\Ocal}^{(\dagger)}_{\gamma_\ell}([g\cdot x_\ell])\,\thisd^4 [x_1] \cdots \thisd^4 [x_\ell]\nonumber\\
& =\int_{([g]\cdot \Xx)^\ell} f_1([g^{-1} \cdot y_1])\cdots f_\ell([g^{-1} \cdot y_\ell]) \rho_{\beta_1\gamma_1}^{(\dagger)}([g^{-1}])\hat{\Ocal}^{(\dagger)}_{\gamma_1}([y_1])\circ  \cdots\nonumber\\
&\qquad\qquad\qquad\qquad\qquad\qquad\qquad \circ \rho_{\beta_\ell\gamma_\ell}^{(\dagger)}([g^{-1}])\hat{\Ocal}_{\gamma_\ell}^{(\dagger)}([y_\ell])\,\thisd^4 [y_1] \cdots \thisd^4 [y_\ell]\nonumber\\
& = (\rho([g^{-1}])\hat{\Ocal})^{(\dagger)}_{\beta_1}(f_1([g^{-1}\cdot]),[g]\cdot \Xx)\cdots(\rho([g^{-1}])\hat{\Ocal})^{(\dagger)}_{\beta_\ell}(f_\ell([g^{-1}\cdot]),[g]\cdot \Xx)\,,\label{EqTrafoDeformationCorrelator}
\end{align}
where $\circ$ denotes Hilbert space operator composition, and integration variables have been changed.
We used that Jacobi determinants are trivial in the projective setting.
The functions $f_i([g^{-1}\cdot ])$ are again in $C^\infty(\RP^4)$.
Therefore, $\Ad_{[U([g])]}$ provides an algebra isomorphism onto $\Afrak([g]\cdot X)$.
\end{proof}

Replacing the deformation via $[g]$ by an arbitrary sequence $[b_n]\in \PGL_5\rr$, \Cref{EqDeformationCorrelator} does not apply to its $n\to\infty$ limit, since integration variables cannot be changed for non-invertible $[b_\infty]$.
Instead, as indicated already for projective quantum fields, the support of projective correlators can shift to model space boundaries and other lower-dimensional model space subspaces in the $n\to\infty$ limit, and the projective correlators can degenerate with respect to their field operator representative components.
\vs

\begin{thm}\label{ThmCorrelatorDegenerationInLimits}
Let $[\hat{C}]\in\Afrak(\Xx((p_0,q_0),\ldots,(p_k,q_k))$ for $p_i,q_i\in \nn$ with $\sum_{i=0}^k p_i+q_i=5$ and $[b_n]\in\PGL_5\rr$, so that $\lim_{n\to\infty}[b_n]_*\Gg((p_0,q_0),\ldots,(p_k,q_k))=(\Xx',\Oo')$ is a well-defined geometry limit and $[b_\infty]$ exists as a projective $5\times 5$ matrix.
Then $\lim_{n\to\infty} [b_n]_* [\hat{C}]$ is a well-defined, bounded operator on $\Dcal$.
If $[b_\infty]$ has not full rank, $\lim_{n\to\infty} [b_n]_* [\hat{C}]$ is degenerate as a finite-rank tensor with field operator components and has model space support within $\lim_{n\to\infty}[b_n]\cdot \Xx$, which is generally the union of a subspace of the boundary $\partial \overline{\Xx'}$ and another subspace of $\Xx'$ of dimension strictly smaller than $4$.
\end{thm}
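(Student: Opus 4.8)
The plan is to reduce to the defining form $[\hat C]=[\hat{\Ocal}^{(\dagger)}(f_1,\Xx)\otimes\cdots\otimes\hat{\Ocal}^{(\dagger)}(f_\ell,\Xx)]$ with $\Xx=\Xx((p_0,q_0),\ldots,(p_k,q_k))$, $f_i\in C^\infty(\RP^4)$, and to analyse the deformed correlator $[b_n]_*[\hat C]$ by combining \Cref{PropCanonicalTensorProductRepOperatorProducts} with \Cref{LemmaQuantumFieldsTransformNaturallySmoothly}. The first step is uniform boundedness. The matrices $\rho^{(\dagger)}([b_n^{-1}])$ appearing in the transformation law \eqref{EqDeformationCorrelator} of \Cref{PropCanonicalTensorProductRepOperatorProducts} cancel the $\rho^{(\dagger)}([b_n])$ produced when $U([b_n])$ is moved through the smeared operators, so that each component of $[b_n]_*[\hat C]$ equals the Hilbert space composition $\prod_{i=1}^{\ell}\bigl(U([b_n])\,\hat{\Ocal}^{(\dagger)}_{\beta_i}(f_i,\Xx)\,U^\dagger([b_n])\bigr)$. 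Since $U([b_n])$ is unitary and both $U([b_n])$ and $U^\dagger([b_n])$ map $\Dcal$ into itself, conjugation by $U([b_n])$ preserves the operator norm on $\Dcal$, so this component has norm $\le\prod_i\|\hat{\Ocal}_{\beta_i}(f_i,\Xx)\|$, a bound independent of $n$. Hence, once the limit is shown to exist, it is automatically a bounded operator on $\Dcal$.

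For existence of the limit I would use the form of $[b_n]_*[\hat C]$ obtained in the proof of \Cref{PropCanonicalTensorProductRepOperatorProducts} \emph{before} the change of integration variables; since the matrices $\rho^{(\dagger)}([b_n^{-1}])$ do not depend on the integration variables they can be pulled out of the integral, giving
\begin{equation*}
[b_n]_*[\hat C]_{\beta_\bullet}=\Bigl[\bigl(\textstyle\bigotimes_{i=1}^\ell\rho^{(\dagger)}([b_n^{-1}])\bigr)_{\beta_\bullet\gamma_\bullet}\,(\hat C_n)_{\gamma_\bullet}\Bigr],\qquad (\hat C_n)_{\gamma_\bullet}=\int_{\Xx^\ell}\prod_{i=1}^\ell f_i([x_i])\,\hat{\Ocal}^{(\dagger)}_{\gamma_i}([b_n\cdot x_i])\,\dd^4[x_i].
\end{equation*}
By \Cref{LemmaQuantumFieldsTransformNaturallySmoothly} the projective class $[\rho([b_n^{-1}])]$ converges to $[\rho_\infty]$, and since the tensor product and complex conjugation are continuous, $\bigl[\bigotimes_{i=1}^\ell\rho^{(\dagger)}([b_n^{-1}])\bigr]$ converges to the projective class of $\bigotimes_{i=1}^\ell\rho^{(\dagger)}_\infty$. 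It then remains to show $\hat C_n\to\hat C_\infty$ in the operator norm on $\Dcal$, where $(\hat C_\infty)_{\gamma_\bullet}=\int_{\Xx^\ell}\prod_{i=1}^\ell f_i([x_i])\,\hat{\Ocal}^{(\dagger)}_{\gamma_i}([b_\infty\cdot x_i])\,\dd^4[x_i]$; together with the preceding convergence and the uniform bound this yields $\lim_{n\to\infty}[b_n]_*[\hat C]=\bigl[\bigl(\bigotimes_{i=1}^\ell\rho^{(\dagger)}_\infty\bigr)\cdot\hat C_\infty\bigr]$.

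To prove $\hat C_n\to\hat C_\infty$, observe that $[b_\infty]$ being a well-defined (in particular nonzero) projective matrix makes $\ker b_\infty$ a proper linear subspace, so $\Pp(\ker b_\infty)\cap\Xx$ is Lebesgue-null in the open set $\Xx$, the map $[x]\mapsto[b_\infty\cdot x]$ is defined almost everywhere, and $[b_n\cdot x]\to[b_\infty\cdot x]$ there, uniformly on compact subsets of $\Xx\setminus\Pp(\ker b_\infty)$. Writing $\hat C_n-\hat C_\infty$ as a telescoping sum of compositions in which exactly one factor is a difference of integrals over $[b_n\cdot x_i]$ and over $[b_\infty\cdot x_i]$ while the remaining factors are uniformly norm-bounded by the first step, the claim reduces to the single-factor statement that for $f\in C^\infty(\RP^4)$,
\begin{equation*}
\int_{\Xx} f([x])\,\hat{\Ocal}_\gamma([b_n\cdot x])\,\dd^4[x]\;\xrightarrow[n\to\infty]{}\;\int_{\Xx} f([x])\,\hat{\Ocal}_\gamma([b_\infty\cdot x])\,\dd^4[x]\qquad\text{in operator norm.}
\end{equation*}
Splitting $\Xx$ into an $\varepsilon$-neighbourhood $V_\varepsilon$ of $\Pp(\ker b_\infty)$ and its complement, the complementary contribution converges because the arguments converge uniformly there and the smeared field operator depends norm-continuously on the base point by the smoothness assertion of \Cref{LemmaQuantumFieldsTransformNaturallySmoothly}, while the $V_\varepsilon$-contribution is bounded by a constant times the measure $|V_\varepsilon|$ uniformly in $n$ via the boundedness criterion of \Cref{DefFieldOperator}, and $|V_\varepsilon|\to0$ as $\varepsilon\to0$. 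I expect this dominated-convergence argument---controlling the contribution near the kernel hyperplane of $b_\infty$ at the level of operator norms---to be the main technical obstacle.

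Finally, for the degeneracy statements: in $\bigl[\bigl(\bigotimes_i\rho^{(\dagger)}_\infty\bigr)\cdot\hat C_\infty\bigr]$ every field operator appears with an argument $[b_\infty\cdot x]$, $[x]\in\Xx$, so the model space support is contained in $\{[b_\infty\cdot x]\mid[x]\in\Xx\}\subseteq\lim_{n\to\infty}[b_n]\cdot\Xx$. If $[b_\infty]$ has rank $r<5$ this set is the image of $\Xx$ under a rank-$r$ projective linear map, hence of dimension at most $r-1<4$, and its description as the union of a subspace of the boundary $\partial\overline{\Xx'}$ and a lower-dimensional subspace of $\Xx'$ is exactly \Cref{PropPointwiseConvergenceModelSpacePoints}. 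This exhibits $\lim_n[b_n]_*[\hat C]$ as a finite-rank tensor in the $(\dim\rho)^{\otimes\ell}$ component indices with field operator entries supported on a subspace of dimension $<4$; in addition, whenever $[\rho_\infty]$ is rank-deficient---for instance for the fundamental representation $\CP^4_{\PGL_5\rr}$ and its tensor powers, where the identity $b_nb_n^{-1}=\mathrm{Id}$ forces $b_\infty\rho_\infty=0$ on representatives after normalization and hence $\rho_\infty$ singular---the factor $\bigotimes_i\rho^{(\dagger)}_\infty$ is itself rank-deficient, so $\lim_n[b_n]_*[\hat C]$ degenerates as a tensor in the $\rho$-indices as well.
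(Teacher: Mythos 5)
Your proposal is correct and follows essentially the same route as the paper's proof: the uniform operator-norm bound obtained from unitarity of $U([b_n])$, the factorization of the matrices $\rho([b_n^{-1}])$ whose limit $[\rho_\infty]$ is responsible for the tensor degeneration, and \Cref{PropPointwiseConvergenceModelSpacePoints} for the statement about the model space support. The paper is terser on the existence of the limit --- it passes from the uniform bound \eqref{EqUpperBoundCorrelatorDeformation} directly to finiteness, leaning implicitly on the convergence assertions of \Cref{LemmaQuantumFieldsTransformNaturallySmoothly} --- so your telescoping and dominated-convergence argument near $\Pp(\ker b_\infty)$ supplies detail that the paper leaves implicit rather than constituting a genuinely different approach.
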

\vs

\Cref{ThmCorrelatorDegenerationInLimits} is analogous to the behavior of projective tensor fields at individual space-time points under geometry deformations and limits, see e.g.~\Cref{EqVectorFieldDeformationMap}.
Before proving it, we show an auxiliary proposition regarding the point-wise convergence of certain sequences of points within model spaces, which form $\lim_{n\to\infty}[b_n]\cdot \Xx$.
\vs

\begin{prop}\label{PropPointwiseConvergenceModelSpacePoints}
Let $[x]\in\Xx((p_0,q_0),\ldots,(p_k,q_k))$, consider a sequence $[b_n]\in\PGL_5\rr$ and the geometry limit $(\Xx((p_0,q_0),\ldots,(p_k,q_k)),\PO((p_0,q_0),\ldots,(p_k,q_k)))\to (\Xx',\Oo')$.
Assume $[b_\infty]$ does not have full rank.
Then the $n\to \infty$ limit of $[b_n\cdot x]$ is either in $\partial\overline{\Xx'}$, where the boundary and the closure are defined with respect to point-wise convergence, or in another strictly lower-dimensional submanifold of $\Xx'$, which depends only on the sequence $[b_n]$.
\end{prop}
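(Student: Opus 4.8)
The plan is to compute the point-wise limit $\lim_{n\to\infty}[b_n\cdot x]$ by a recursive Cartan-type decomposition of the sequence $[b_n]$, to bound the dimension of the projective subspace containing it via the hypothesis that $[b_\infty]$ has not full rank, and to split into the two cases using the definition of $\partial\overline{\Xx'}$ through point-wise convergence.

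First I would set up the recursion. Passing to a subsequence (which changes neither $[b_\infty]$ nor the geometry limit and leaves any convergent $\lim_{n\to\infty}[b_n\cdot x]$ unchanged), fix a representative $b_\infty$ of $[b_\infty]$ and representatives $b_n\to b_\infty$, and set $c_n:=b_n-b_\infty\to 0$. Put $K_0:=\rr^5$, $\phi_0:=b_\infty$ and $K_1:=\ker b_\infty$; since $[b_\infty]$ has not full rank, $1\le\dim K_1\le 4$. For $x\in K_1$ one has $b_n x=c_n x$, so after a further subsequence let $\phi_1$ be the limit of $(c_n|_{K_1})/\|c_n|_{K_1}\|$ (well-defined since $c_n|_{K_1}=b_n|_{K_1}$ is injective, hence nonzero) and set $K_2:=\ker\phi_1\subsetneq K_1$. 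Iterating — at level $j$ replacing $c_n|_{K_j}$ by its restriction $\eta_n^{(j+1)}$ to $K_{j+1}$, renormalizing, and taking a subsequential limit $\phi_{j+1}$ — produces a strictly decreasing flag $\rr^5=K_0\supsetneq K_1\supsetneq\cdots\supsetneq\{0\}$ of length at most five (strict because each $\phi_j$ has unit norm, hence $K_{j+1}$ is a proper subspace of $K_j$; and $\|\eta_n^{(j+1)}\|/\|\eta_n^{(j)}\|\to\|\phi_j|_{K_{j+1}}\|=0$, so successive levels live at genuinely finer scales). For $x$ in the shell $K_j\setminus K_{j+1}$ one then reads off $\lim_{n\to\infty}[b_n\cdot x]=[\phi_j x]\in\Pp(\im\phi_j)$, since $[b_n x]=[(\eta_n^{(j)}/\|\eta_n^{(j)}\|)\,x]$ with $\eta_n^{(j)}/\|\eta_n^{(j)}\|\to\phi_j$ and $\phi_j x\neq 0$.

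Next comes the dimension count. The hypothesis enters exactly once: $\Pp(\im\phi_0)=\Pp(\im b_\infty)$ has projective dimension $\rk(b_\infty)-1\le 3$. For $j\ge 1$ the map $\phi_j$ is defined on $K_j\subseteq K_1=\ker b_\infty$ of dimension $\le 5-\rk(b_\infty)\le 4$, so $\Pp(\im\phi_j)$ likewise has projective dimension $\le 3$. Hence the finite union $Z:=\bigcup_{j\ge 0}\Pp(\im\phi_j)$ is a union of projective subspaces of $\RP^4$ of dimension $\le 3<4$, and $Z$ depends only on $[b_n]$ (and the subsequences chosen to make the $\phi_j$ converge), not on $[x]$; by the previous step $\lim_{n\to\infty}[b_n\cdot x]$ sits in the single subspace $\Pp(\im\phi_{j(x)})\subseteq Z$, with $j(x)$ determined by the position of $x$ in the flag $K_\bullet$.

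Finally I would dispatch the two cases. If $\lim_{n\to\infty}[b_n\cdot x]\in\Xx'$, then it lies in $\Pp(\im\phi_{j(x)})\cap\Xx'$, an open piece of a projective subspace inside the open set $\Xx'\subseteq\RP^4$ and hence a submanifold of $\Xx'$ of dimension $<4$ depending only on $[b_n]$ — the second alternative. Otherwise $\lim_{n\to\infty}[b_n\cdot x]\notin\Xx'$, and being a point-wise limit of the points $[b_n\cdot x]\in[b_n]\cdot\Xx$ it lies in the point-wise-convergence closure $\overline{\Xx'}$, hence in $\partial\overline{\Xx'}=\overline{\Xx'}\setminus\Xx'$ — the first alternative. (If one instead reads $\overline{\Xx'}$ as the ordinary closure in $\RP^4$, the same follows by writing $[b_n]\cdot\Xx=\{[y]:Q_n(y)<0\}$ with $Q_n:=Q\circ b_n^{-1}$ the deformed first-block quadratic form of $\Xx$, normalizing $Q_n$ to a convergent sequence of forms $Q_\infty$, using $Q_n(b_n x)=Q(x)<0$ to get $Q_\infty(\lim_{n\to\infty}[b_n\cdot x])\le 0$, and identifying $\{Q_\infty\le 0\}$ with $\overline{\Xx'}$ through the classification of limit geometries in~\cite{cooper2018limits}.) The main obstacle I expect is the bookkeeping in the recursion — checking at each level that $b_n x$ is, up to a scalar, the renormalized $\phi_{j(x)}$-term with lower-order corrections that vanish, and that the scales $\|\eta_n^{(j)}\|$ are properly nested; the rest is dimension counting and the definition of $\partial\overline{\Xx'}$.
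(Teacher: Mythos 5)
Your argument is correct, but it takes a genuinely different route from the paper. The paper proves this via the $KAK$ decomposition of $\PGL_5\rr$: it reduces $[b_n]$ to a diagonal sequence $[a_n]$ with nested scale ratios, tracks which coordinate block $\tilde z_{j_{\max}}$ of $[l\cdot x]$ survives, and then uses the $KBH$ decomposition together with the explicit block form of the limit group $\PO((p_0',q_0'),\dots,(p_{\ell-1}',q_{\ell-1}'))$ to check that the surviving point annihilates the defining quadratic form, hence lies in $\partial\overline{\Xx'}$ (with the degenerate case $j_{\min}=j_{\max}$ giving the lower-dimensional submanifold alternative). Your iterated-kernel flag $\rr^5=K_0\supsetneq K_1\supsetneq\cdots$ with rescaled limits $\phi_j$ is instead a coordinate-free description of $\lim_n[b_n\cdot]$ as a quasi-projective map $x\mapsto[\phi_{j(x)}x]$; it handles arbitrary sequences uniformly, makes the dimension drop an immediate consequence of $\rk(b_\infty)\leq 4$, and avoids the paper's restriction to $p_i=q_i=0$ for $i\geq 1$ and its sub-case analysis of whether $[h_n]\in\PO(p,q)\cap\PO(5)$. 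What you pay for this is the identification of the limit set with $\partial\overline{\Xx'}$: your first justification of the boundary alternative (``a point-wise limit of points of $[b_n]\cdot\Xx$ lies in $\overline{\Xx'}$'') does not follow from the definition of the geometry limit, which only guarantees a single common point $[z]\in\Xx'\cap[b_n]\cdot\Xx$ for large $n$, not that all point-wise limits land in $\overline{\Xx'}$. The parenthetical quadratic-form argument --- normalizing $Q\circ b_n^{-1}$ to a limit form $Q_\infty$, deducing $Q_\infty(\lim_n[b_n\cdot x])\leq 0$ from $Q_n(b_n x)=Q(x)<0$, and identifying $\{Q_\infty<0\}$ with $\Xx'$ via the classification of limit geometries --- is therefore not optional but is the actual proof of that case; it is essentially the computation the paper performs in block coordinates, so you should promote it from a parenthesis to the main line of the argument. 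With that adjustment the proof is sound and arguably cleaner than the paper's.
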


\begin{proof}
We provide the proof for $p_i,q_i=0$ for all $i\geq 1$.
The general case follows straight-forwardly with additional index book-keeping.
We employ the $KAK$ decomposition of $\PGL_5\rr$, where $K$ is the maximal compact subgroup $\PO(5)$ and $A$ is the subgroup of diagonal, invertible projective $5\times 5$ matrices, see e.g.~\cite{knapp2001representation}.
Hence, for all $n$ there exist $[k_n], [l_n]\in K, [a_n]\in B$, so that $[b_n] = [k_n a_n l_n]$, where $[a_n]$ can be chosen of the canonical form
\begin{equation}\label{EqanConvergenceBehavior}
a_{n,ii}/a_{n,(i+1)(i+1)}\to \begin{cases}
1 & \mathrm{for}\, i\notin I\,,\\
0 & \mathrm{for}\, i\in I\,,
\end{cases}
\end{equation}
as $n\to\infty$ for some index set $I\subset \{1,2,3,4\}$.
This choice can be made, since else we can multiply the sequence $[a_n]$ by another sequence $[a_n']\in A$, which remains in a compact subset of $A$ and can be incorporated by left multiplication of $[a_n]$ with $[a_n']$.
$I$ is non-empty, since $[b_\infty]$ has not full rank.
Let $i_1<i_2<\ldots < i_\ell$ be the elements of $I$ in increasing order.
The sequences $[k_n],[l_n]$ have subsequences $[k_{n_j}], [l_{n_j}]$, which converge to some $[k],[l]\in K$, respectively, since $K$ is compact.
We write 
\begin{equation*}
[\tilde{x}] := [l\cdot x] = [\tilde{z}_0,\tilde{z}_1,\ldots, \tilde{z}_{\ell-1},\tilde{z}_\ell]\,,
\end{equation*}
where $\tilde{z}_0 := (\tilde{x}_0,\ldots,\tilde{x}_{i_1-1})$, $\tilde{z}_j:=(\tilde{x}_{i_j},\ldots,\tilde{x}_{i_{j+1}-1})$ with $i_{\ell+1} =m$.
We set $j_{\min}:=\min \{j=0,\ldots,4\,|\, \tilde{z}_j\neq 0\}$ and $j_{\max}:=\max \{j=0,\ldots,4\,|\, \tilde{z}_j\neq 0\}$.
If $j_{\min} = j_{\max}$, the convergence behavior~\eqref{EqanConvergenceBehavior} of the sequence $[a_n]$ yields that $[a_n \tilde{x}]\to [\tilde{x}]$ as $n\to\infty$.
These points form a lower-dimensional submanifold of $\overline{\Xx'} \cap [l]\cdot \Xx(p,q)$.

The case $j_{\min} < j_{\max}$ yields $[a_n \tilde{x}]\to [0,\ldots,0,\tilde{z}_{j_{\max}},0,\ldots,0]$.
We note that similarly to its $KAK$ decomposition we can apply the $KBH$ decomposition of $\PGL_5\rr$~\cite{heckman1995harmonic}, so that $[b_n] = [k_n' b_n' h_n]$ for $[k_n']\in K=\PO(5)$, $[b_n'] \in B=A$, $[h_n] \in \PO(p,q)$.
Suppose that $[h_n]\in \PO(p,q)\cap \PO(5)$.
Then the $KAK$ and $KBH$ decompositions of $[b_n]$ can be chosen identically, so that $[k_n'] = [k_n]$, $[b_n'] = [a_n]$ and $[h_n] = [l_n]$.
In particular, 
\begin{equation}\label{EqLimitPOpqanspecialcase}
\lim_{n\to\infty} \Ad_{[a_n]}\Ad_{[l]}\PO(p,q) =  \lim_{n\to\infty} \Ad_{[a_n]} \PO(p,q) = \PO((p_0',q_0'),\ldots,(p_{\ell-1}',q_{\ell-1}'))
\end{equation}
for the integer pairs $(p_i',q_i')$ determined uniquely by splitting the sequence $-1,\ldots,-1,\allowbreak +1,\ldots,+1$ ($-1$ appearing $p$ and $+1$ appearing $q$ times) after $i_1$ elements, then after the next $i_2-i_1$ elements and so forth.
Counting the integers $-1$ in the $i$-th step yields $p_i'$ and counting $+1$ yields $q_i'$.
A point $[x']$ is in $\Xx((p_0',q_0'),\ldots,(p_{\ell-1}',q_{\ell-1}'))$, if and only if
\begin{equation*}
-(x'_0)^2-\ldots-(x'_{p_0'-1})^2+(x'_{p_0'})^2+\ldots+(x'_{p_0'+q_0'-1})^2<0\,.
\end{equation*}
For the point $[x'] = [0,\ldots,0,\tilde{z}_{j_{\max}},0,\ldots,0]$ the polynomial on the left-hand side evaluates to zero.
Thus, $\lim_{n\to\infty}[a_n \tilde{x}]\in \partial \overline{\Xx((p_0',q_0'),\ldots,(p_{\ell-1}',q_{\ell-1}'))}$ and $\lim_{n\to\infty}[b_n x] =[k]\lim_{n\to\infty}[a_n \tilde{x}]\in \partial\overline{\Xx'}$.

The case $h_n\in \PO(p,q)\setminus (\PO(5)\cap \PO(p,q))$ remains.
Still, we have then that $[k\cdot (0,\ldots,0,\tilde{z}_{j_{\max}},0,\ldots,0)]\in \overline{\Xx'}$ and such points form a submanifold of $\overline{\Xx'}$ of dimension strictly lower than 4.
This concludes the proof.
\end{proof}

We can now provide the proof of \Cref{ThmCorrelatorDegenerationInLimits}.

\begin{proof}[Proof of \Cref{ThmCorrelatorDegenerationInLimits}]
The projective correlator $[\hat{C}]$ is of the form
\begin{equation*}
[\hat{C}] = \big[\hat{\Ocal}^{(\dagger)}(f_1,\Xx)\otimes \ldots \otimes \hat{\Ocal}^{(\dagger)}(f_\ell,\Xx)\big]
\end{equation*}
for $f_i\in C^\infty(\RP^4)$.
We choose representatives $\hat{\Ocal}([x])$ of $[\hat{\Ocal}([x])]$ and let $v\in \Dcal$.
For all $[b_n]$ we have $||U([b_n])||_\Dcal=1$ by unitarity of $U$.
Thus, with $[\rho_\infty]=\lim_{n\to\infty}[\rho([b_n^{-1}])]$ and $\rho_\infty$ a representative of $[\rho_\infty]$, we have for all $\alpha_1,\ldots,\alpha_\ell$ and all $n$ the upper bound
\begin{align}
&\langle v,\int_{\Xx((p_0,q_0),\ldots,(p_k,q_k))}\dd^4[x_1] f_1([x_1])(\rho([b_n^{-1}])\hat{\Ocal}([b_n x_1]))_{\alpha_1}^{(\dagger)}\nonumber\\
&\qquad\qquad\qquad\qquad  \circ\ldots \circ \int_{\Xx((p_0,q_0),\ldots,(p_k,q_k))}\dd^4[x_\ell] f_1([x_\ell])(\rho([b_n^{-1}])\hat{\Ocal}([b_n x_\ell]))_{\alpha_\ell}^{(\dagger)}v\rangle\nonumber\\
&\qquad = \langle v, \Ad_{U([b_n])} (\hat{\Ocal}_{\alpha_1}^{(\dagger)}(f_1,\Xx((p_0,q_0),\ldots,(p_k,q_k))) \nonumber\\
&\qquad\qquad\qquad\qquad \circ\ldots\circ \hat{\Ocal}_{\alpha_\ell}^{(\dagger)}(f_1,\Xx((p_0,q_0),\ldots,(p_k,q_k)))v\rangle\nonumber\\
&\qquad \leq \prod_{i=1}^\ell ||\hat{\Ocal}_{\alpha_i}^{(\dagger)}(f_i,\Xx((p_0,q_0),\ldots,(p_k,q_k))||_\Dcal < \infty\,.\label{EqUpperBoundCorrelatorDeformation}
\end{align}
Therefore, the $n\to\infty$ limit is finite:
\begin{align*}
\lim_{n\to\infty} \langle v, \Ad_{U([b_n])}(\hat{\Ocal}^{(\dagger)}_{\alpha_1}(f_1,\Xx)\circ  \ldots \circ \hat{\Ocal}_{\alpha_\ell}^{(\dagger)}(f_\ell,\Xx))v\rangle < \infty\,,
\end{align*}
so that $\lim_{n\to\infty} [b_n]_*[\hat{C}]$ is indeed a well-defined and bounded operator on $\Dcal.$
If $[b_\infty]$ does not have full rank, the involved multiplication with $\rho_\infty$ is responsible for the degeneration of projective correlators with respect its field operator representative components.
\Cref{PropPointwiseConvergenceModelSpacePoints} implies $\lim_{n\to\infty}[b_n x_i]\in \partial \overline{\Xx'}$ or the inclusion in another lower-dimensional subspace of $\Xx'$.
\end{proof}

\Cref{ThmCorrelatorDegenerationInLimits} implies that projective correlation functions, i.e., expectation values of projective correlators, are well-defined for all states in $\Hcal$, as the following corollary shows.
\vs

\begin{cor}\label{CorCorrelatorsHaveWellDefinedExpec}
Given the setting of \Cref{ThmCorrelatorDegenerationInLimits}, expectation values $\langle v,[\hat{C}]v\rangle$ are well-defined for all $v\in\Dcal\subset \Hcal$ and can be uniquely extended to $\Hcal$.
The dependence of $\langle v, \Ad_{U([b_n])}[\hat{C}]v\rangle$ on $[b_n]$ is smooth for $v\in\Dcal$ and its $n\to\infty$ limit is well-defined.
Expectation values of the limiting projective correlator have a unique extension to states in $\Hcal$.
\end{cor}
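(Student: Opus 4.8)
The plan is to combine the boundedness established in Theorem \ref{ThmCorrelatorDegenerationInLimits} with a standard density-and-continuity extension argument. First I would observe that for any fixed $v\in\Dcal$, the operator $[\hat{C}]$ acting on $v$ produces a vector in $\Hcal$ (since, as noted in the discussion preceding Theorem \ref{ThmCorrelatorDegenerationInLimits}, every element of $\Afrak(\Xx)$ is a finite composition of operators bounded on $\Dcal$, hence bounded on $\Dcal$), so $\langle v,[\hat{C}]v\rangle$ is a finite complex number; this gives well-definedness on $\Dcal$. The inequality \eqref{EqUpperBoundCorrelatorDeformation} in the proof of Theorem \ref{ThmCorrelatorDegenerationInLimits} shows that the operator norm of $[\hat{C}]$ on $\Dcal$ is bounded by $\prod_{i=1}^\ell \|\hat{\Ocal}_{\alpha_i}^{(\dagger)}(f_i,\Xx)\|_\Dcal$, a bound uniform in the relevant data. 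Since $\Dcal$ is dense in $\Hcal$ and a bounded operator on a dense subspace extends uniquely to a bounded operator on the completion, $[\hat{C}]$ (for each choice of representative, and hence the projective class up to the usual scalar ambiguity) extends uniquely to $\Hcal$, and the sesquilinear form $\langle v,[\hat{C}]v\rangle$ extends uniquely by continuity of the inner product.

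Next I would treat the smooth dependence on $[b_n]$. By Proposition \ref{PropCanonicalTensorProductRepOperatorProducts}, $\Ad_{U([b_n])}[\hat{C}]$ is again of the form $[(\rho([b_n^{-1}])\hat{\Ocal})^{(\dagger)}(f_1([b_n^{-1}\cdot]),[b_n]\cdot\Xx)\otimes\cdots]$, and by Lemma \ref{LemmaQuantumFieldsTransformNaturallySmoothly} the smeared field operator representatives depend smoothly on the group element with respect to the operator norm on $\Dcal$; since $[\rho([b_n^{-1}])]$ and the pullback $f_i([b_n^{-1}\cdot])\in C^\infty(\RP^4)$ also depend smoothly on $[b_n]$, the finite product/composition depends smoothly on $[b_n]$, and pairing with the fixed vector $v\in\Dcal$ via the (continuous) inner product preserves smoothness. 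For the $n\to\infty$ limit I would simply invoke the computation already carried out in the proof of Theorem \ref{ThmCorrelatorDegenerationInLimits}: the bound \eqref{EqUpperBoundCorrelatorDeformation} is uniform in $n$, $[\rho([b_n^{-1}])]\to[\rho_\infty]$ in the compact space $\Pp\mathrm{Mat}_{\dim\rho\times\dim\rho}\cc$, and the limiting operator $\lim_{n\to\infty}[b_n]_*[\hat{C}]$ is bounded on $\Dcal$; hence $\lim_{n\to\infty}\langle v,\Ad_{U([b_n])}[\hat{C}]v\rangle$ exists and equals $\langle v,(\lim_{n\to\infty}[b_n]_*[\hat{C}])v\rangle$.

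Finally, the limiting projective correlator $\lim_{n\to\infty}[b_n]_*[\hat{C}]$ is, again by Theorem \ref{ThmCorrelatorDegenerationInLimits}, a bounded operator on $\Dcal$, so the same density argument as in the first step yields a unique extension of its expectation value to all states in $\Hcal$. I expect the only mildly delicate point to be bookkeeping of the projective equivalence class: one should check that the extension is compatible with the $C^\infty(\RP^4,\rr_{\neq 0})$ prefactor ambiguity in the representatives and the $\mathbb{C}^\times$ scalar ambiguity in $[\hat{C}]$, i.e.\ that rescaling a representative rescales the extended form consistently — this is immediate from linearity of the extension but worth stating. No genuine obstacle arises, since all the analytic content (uniform boundedness, compactness of the matrix spaces, smooth dependence) has already been supplied by Lemma \ref{LemmaQuantumFieldsTransformNaturallySmoothly}, Proposition \ref{PropCanonicalTensorProductRepOperatorProducts}, and Theorem \ref{ThmCorrelatorDegenerationInLimits}.
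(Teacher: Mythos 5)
Your proposal is correct and follows essentially the same route as the paper: boundedness of the (smeared, composed) correlator on $\Dcal$ via the uniform bound \eqref{EqUpperBoundCorrelatorDeformation}, density of $\Dcal$ in $\Hcal$ for the unique extension, and \Cref{LemmaQuantumFieldsTransformNaturallySmoothly} for smoothness in $[b_n]$. The only cosmetic difference is that you extend the bounded operator by the standard density argument and then take the quadratic form, whereas the paper extends the expectation values directly along normalized sequences $v_j\to v'$; the two are equivalent.
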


\begin{proof}
Let $\hat{C}$ be a representative of
\begin{equation*}
[\hat{C}] = [\hat{\Ocal}^{(\dagger)}(f_1,\Xx)\otimes \ldots \otimes \hat{\Ocal}^{(\dagger)}(f_\ell,\Xx)]\in \Afrak(\Xx((p_0,q_0),\ldots,(p_k,q_k))
\end{equation*}
for $f_i\in C^\infty(\RP^4)$.
Consider a sequence $v_j\in \Dcal$, $v_j\neq 0$.
\Cref{EqUpperBoundCorrelatorDeformation} yields
\begin{equation}\label{EqProjCorrThmPfIntequality2}
\frac{\langle v_j, \hat{C}_{\alpha_1\dots \alpha_\ell} v_j\rangle}{||v_j||^2_\Dcal} \leq ||\hat{C}_{\alpha_1\dots\alpha_\ell}||_\Dcal < \infty\,.
\end{equation}
Assume $v_j\to v'\in \Hcal = \overline{\Dcal}$ as $j\to \infty$.
The right-hand side of \eqref{EqProjCorrThmPfIntequality2} is independent from $v_j$, such that the $j\to\infty$ limit of $\langle v_j, \hat{C}_{\alpha_1\dots\alpha_\ell} v_j\rangle/||v_j||^2$ exists by continuity of the operator $\hat{C}_{\alpha_1\dots\alpha_\ell}$ on $\Dcal$ and is finite for all $\alpha_1,\dots,\alpha_\ell$.
Further, $||v'||^2 = \langle v',v'\rangle < \infty$. 
In the projective setting, the division by $||v_j||^2$ is an identity map, such that $\langle v_j,[\hat{C}] v_j\rangle / ||v_j||^2 = \langle v_j, [\hat{C}] v_j\rangle$ is also well-defined in the $j\to\infty$ limit.
The same argument applies to the $j\to\infty$ limit of $\lim_{n\to\infty}\langle v_j,\Ad_{U([b_n])} [\hat{C}] v_j\rangle$, using \Cref{ThmCorrelatorDegenerationInLimits} and elements from its proof.

The smooth dependence of the expectation values of $\Ad_{U([b_n])}[\hat{C}]$ on $[b_n]$ for states in $\Dcal$ follows from \Cref{LemmaQuantumFieldsTransformNaturallySmoothly}.
\end{proof}

\begin{example}\label{ExampleProjCorrelatorsIRUVLimits}
Consider the projective quantum field $(U,\CP^4_{\PGL_5\rr},\{[\hat{\Ocal}([x])]\,|\,[x]\in\RP^4\})$ for the geometry $\Gg(4,1)$ and the sequence
\begin{equation*}
[b_n] = \Pp\left(\begin{matrix}
e^{-4n} & \\
 & e^n\cdot 1_{4\times 4}
\end{matrix}\right)\,.
\end{equation*}
By \Cref{ExampleGeometryDeformations}(ii), $[b_n]_*\Gg(4,1)\to \Gg((1),(3,1))$ as $n\to\infty$. 
Acting via $\Ad_{U([b_n])}$ on projective correlators, the projective matrix $[b_n^{-1}]$ appears, which has the $n\to\infty$ limit
\begin{equation*}
[\rho_\infty] = \lim_{n\to\infty} [b_n^{-1}] = \Pp\left(\begin{matrix}
1 & \\
 & 0_{4\times  4}
\end{matrix}\right)\,.
\end{equation*}
Therefore, due to \Cref{ThmCorrelatorDegenerationInLimits} and its proof, the $n\to\infty$ limit of the sequence $\Ad_{U([b_n])}[\hat{C}]$, $[\hat{C}]\in\Afrak(\Xx(4,1))$, can only depend non-trivially on the component $\hat{\Ocal}_1([x])$ and not on $\hat{\Ocal}_2([x]),\dots,\allowbreak \hat{\Ocal}_{5}([x])$.
The limiting projective correlator $\lim_{n\to\infty}\Ad_{U([b_n])}[\hat{C}]$ has support on finitely many points in
\begin{equation*}
\partial \overline{\Xx((1),(3,1))}\cup\{[1,0,\ldots,0]\} = \{[0,x_1,\dots,x_4]\in \RP^4\}\cup\{[1,0,\ldots,0]\}=\RP^3\cup \mathrm{pt}\,.
\end{equation*}
We denote the algebra of such limiting projective correlators by $\Afrak_\mathrm{deg}(\partial \overline{\Xx((1),(3,1))}\cup\{[1,0,\ldots,0]\})$ and leave implicit the field operator components, which can appear non-trivially.

Similarly, the non-relativistic (Galilei) limit of $\Gg((1),(3,1))$, anti-de Sitter geometry $\Gg(3,2)$ and representations other than $\rho =\CP^4_{\PGL_5\rr}$ can be considered.
This leads to the examples described in brevity in \Cref{FigLimitingAlgebras}, from which one can infer that projective correlators of projective de Sitter and projective anti-de Sitter geometry yield similar degenerate projective correlators in Poincaré and Galilei limits.
\end{example}
\vs

\tikzset{
    labl/.style={anchor=north, rotate=90, inner sep=1.mm}
}

\begin{figure}
\begin{center}
\small
\begin{equation*}
\begin{tikzcd}[	row sep=0ex, column sep =10ex,
				/tikz/column 1/.append style={anchor=base west},
				/tikz/column 2/.append style={anchor=base west},
				/tikz/column 3/.append style={anchor=base west}]
\textrm{Space-time geometry:} & \textbf{de Sitter} & \textbf{Poincaré} \\
{}&&\\
& \Gg(4,1)\arrow{r}{[b_n]_*\textrm{ of \eqref{EqDeSitterToPoincarecn}}} & \Gg((1),(3,1)) \\
{}&&\\
\textrm{Proj. correl. algebra:} & \fbox{$\Afrak(\Xx(4,1))$}\arrow{r} & \mlnode{$\Afrak_{\textrm{deg}}(\partial\overline{\Xx((1),(3,1))}$ \\ $\qquad\qquad\cup \{[1,0,\ldots,0]\})$}\\
\rho = \CP^4_{\PGL_5\rr}: && \hat{\Ocal}_1\\
\rho = R(\CP^4_{\PGL_5\rr}): &&\hat{\Ocal}_2,\ldots,\hat{\Ocal}_5\\
{}&&\\
{}&&\\
\textrm{Space-time geometry:} & \textbf{anti-de Sitter} & \textbf{Poincaré} \\
{}&&\\
& \Gg(3,2)\arrow{r}{\sigma^{-1} [b_n]_*\textrm{ of \eqref{EqAntiDeSitterToPoincarecn}}} & \Gg((1),(3,1))\\
{}&&&\\
\textrm{Proj. correl. algebra:} & \fbox{$\Afrak(\Xx(3,2))$}\arrow{r} & \mlnode{$\Afrak_{\textrm{deg}}(\partial\overline{\Xx((1),(3,1))}$\\ $\qquad\qquad\cup \{[1,0,\ldots,0]\})$}\\
\rho = \CP^4_{\PGL_5\rr}: &&\hat{\Ocal}_1\\
\rho = R(\CP^4_{\PGL_5\rr}): && \hat{\Ocal}_2,\ldots,\hat{\Ocal}_5 \\
{}&&\\
{}&&\\
\textrm{Space-time geometry:} & \textbf{Poincaré} & \textbf{Galilei}\\
{}&&\\
&  \Gg((1),(3,1))\arrow{r}{(\tau^{-1} [b_n])_*\textrm{ of \eqref{EqNonRelLimitMatrices}}} & \Gg((1),(1),(3))\\
\textrm{Proj. correl. algebra:} & \fbox{$\Afrak(\Xx((1),(3,1)))$}\arrow{r} & \fbox{$\Afrak_{\textrm{deg}}(i^0_p\vee i^+_p\setminus [0])$}\\
\rho = \CP^4_{\PGL_5\rr}: && \hat{\Ocal}_1,\hat{\Ocal}_2\\
\rho = R(\CP^4_{\PGL_5\rr}): && \hat{\Ocal}_3,\hat{\Ocal}_4,\hat{\Ocal}_5
\end{tikzcd}
\end{equation*}
\normalfont
\caption{Limiting projective correlator algebras for a selection of geometry limits of projective de Sitter geometry $\Gg(4,1)$ and projective anti-de Sitter geometry $\Gg(3,2)$.
Convergence of algebras is defined as element-wise convergence.
$\Afrak_\textrm{deg}$ denotes an algebra of projective correlators, which are degenerate with respect to their field operator components.
The rows starting with representations $\rho=\ldots$ denote the non-trivial dependence on field operator components of the degenerate projective correlator algebra.
$R(\CP^4_{\PGL_5\rr})$ denotes the action of $\overline{\PGL_5\rr}$ on $\CP^4$ by applying the cover projection $\overline{\PGL_5\rr}\to\PGL_5\rr$ and inverse right multiplication.
The space $i^0_p:=\{[0,0,x_2,x_3,x_4]\in \RP^4\}\cong \RP^2$ is projective spatial infinity, $i^+_p:=\{[x_0,x_1,0,0,0]\in \RP^4\,|\, x_0\neq 0\}= \rr$ is projective time-like infinity and $i_p^0 \vee i^+_p\setminus [0]$ indicates their wedge product at $[0]$ with $[0]$ removed.}\label{FigLimitingAlgebras}
\end{center}
\end{figure}

\Cref{CorCorrelatorsHaveWellDefinedExpec} has shown that projective correlation functions remain well-defined in geometry limits.
In particular, this applies to the ultraviolet and infrared limits of projective Poincaré geometry $\Gg((1),(3,1))$, as we show now.
Again, the model space identification $\Xx((1),(3,1)) = \Abb^{3,1}$ is via the diffeomorphism 
\begin{equation*}
\xi([x_0,x_1,\dots,x_4]):=(y_1,\dots,y_4)\,,
\end{equation*}
where $y_\mu := x_\mu / x_0$ for $\mu=1,\dots,4$.
The metric tensor $-\dd y_1^2 - \dd y_2^2 - \dd y_3^2 + \dd y_4^2$ on $\Xx((1),(3,1))$ depends only on the image of $\xi$. 
A physical scale transformation acts on $[x]\in\Xx((1),(3,1))$ as $\xi([x])\mapsto s \xi([x])$ for $s > 0$. 
Acting with $\xi^{-1}$, this is equivalent to
\begin{equation*}
[x_0,x_1,\dots,x_4]\mapsto [x_0, s x_1,\dots, s x_4]\,.
\end{equation*}
The $\PGL_5\rr$-element implementing the scale transformation is
\begin{equation*}
[d(s)] = \Pp\left(\begin{matrix}
1 & \\
& s \cdot 1_{4\times 4}
\end{matrix}\right)\,.
\end{equation*}
Poincaré geometry remains invariant under the deformation by $[d(s)]$ for all $s\neq 0$.
Let $[\hat{C}]\in \Afrak(\Xx((1),(3,1)))$ be a projective correlator on Poincaré geometry.
The one-parameter family $\{\Ad_{U([d(s)])}[\hat{C}]\}_s$ describes the behavior of $[\hat{C}]$ under scale transformations.
The limit $s\to 0$ shrinks physical length scales to zero and, considering $\Ad_{U([d(s)])}[\hat{C}]$, describes the ultraviolet limit of the projective correlator.
The limit $s\to \infty$ stretches physical length scales infinitely and describes the infrared limit of the projective correlator.
\vs

\begin{cor}\label{CorCorrelatorsIRUVLimits}
The expectation values of projective correlators in $\Afrak(\Xx((1),(3,1)))$ have well-defined, finite ultraviolet and infrared limits for states in $\Hcal$.
\end{cor}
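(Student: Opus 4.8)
The plan is to realize the ultraviolet and infrared limits as special cases of the geometry-limit machinery of \Cref{ThmCorrelatorDegenerationInLimits} and \Cref{CorCorrelatorsHaveWellDefinedExpec}, applied to the one-parameter scaling family $[d(s)]$ on projective Poincaré geometry. For the ultraviolet limit I would fix an arbitrary sequence $s_n\to 0$, for the infrared limit an arbitrary sequence $s_n\to\infty$, and set $[b_n]:=[d(s_n)]$; the correlator $[\hat C]\in\Afrak(\Xx((1),(3,1)))$ already has the form required by \Cref{ThmCorrelatorDegenerationInLimits} with $(p_0,q_0)=(1,0)$, $(p_1,q_1)=(3,1)$. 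First I would check that $[b_\infty]:=\lim_{n\to\infty}[d(s_n)]$ exists as a projective $5\times 5$ matrix: writing $[d(s)]=\Pp\,\diag(1,s,s,s,s)=\Pp\,\diag(1/s,1,1,1,1)$ one reads off $\lim_{s\to 0}[d(s)]=\Pp\,\diag(1,0,0,0,0)$ (rank one) and $\lim_{s\to\infty}[d(s)]=\Pp\,\diag(0,1,1,1,1)$ (rank four), both non-zero, hence well-defined projective matrices of non-full rank.

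Next I would verify that the underlying geometry limits are well-defined. Since Poincaré geometry is invariant under $[d(s)]$ for every $s\neq 0$, the deformed geometries $[d(s_n)]_*\Gg((1),(3,1))$ form the constant sequence $\Gg((1),(3,1))$, so $\Ad_{[d(s_n)]}\PO((1),(3,1))=\PO((1),(3,1))$ converges trivially in the Chabauty topology and every point of $\Xx((1),(3,1))$ is eventually hit; thus $\lim_{n\to\infty}[d(s_n)]_*\Gg((1),(3,1))=\Gg((1),(3,1))$ is a well-defined geometry limit. The hypotheses of \Cref{ThmCorrelatorDegenerationInLimits} and \Cref{CorCorrelatorsHaveWellDefinedExpec} are therefore met, so $\lim_{n\to\infty}\langle v,\Ad_{U([d(s_n)])}[\hat C]\,v\rangle$ exists and is finite for every $v\in\Dcal$, with a unique extension to all of $\Hcal$; moreover by \Cref{ThmCorrelatorDegenerationInLimits} the limiting correlator degenerates as a finite-rank operator-valued tensor and is supported within the point-wise limit set $\lim_{n\to\infty}[d(s_n)]\cdot\Xx((1),(3,1))$, which equals $\{[1,0,0,0,0]\}$ for $s\to 0$ and $\partial\overline{\Xx((1),(3,1))}\cup\{[1,0,0,0,0]\}=\RP^3\cup\mathrm{pt}$ for $s\to\infty$.

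It then remains to argue that these limits do not depend on the chosen sequence $s_n$, so that they may legitimately be called \emph{the} ultraviolet and infrared limits. Here I would use the explicit description of $\Ad_{U([d(s)])}[\hat C]$ from \Cref{PropCanonicalTensorProductRepOperatorProducts} (formula \eqref{EqDeformationCorrelator}), with deformed smearing functions $f_i([d(s)^{-1}\cdot])\in C^\infty(\RP^4)$ and prefactors $\rho([d(s)^{-1}])$, together with the $s$-independent operator-norm bound \eqref{EqUpperBoundCorrelatorDeformation}: as $s\to 0$ (resp. $s\to\infty$) the functions $f_i([d(s)^{-1}\cdot])$ converge point-wise almost everywhere on $\Xx((1),(3,1))$, bounded uniformly by $\sup_{\RP^4}|f_i|$, and the prefactors $\rho([d(s)^{-1}])$ converge in the compact projective matrix space, so a dominated-convergence argument shows that $s\mapsto\langle v,\Ad_{U([d(s)])}[\hat C]\,v\rangle$ extends continuously from $(0,\infty)$ to $s\in\{0,\infty\}$; the continuous-extension values are the ultraviolet and infrared limits, and they extend uniquely to $\Hcal$ by \Cref{CorCorrelatorsHaveWellDefinedExpec}. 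The main obstacle is precisely this last step: compactness of the relevant projective matrix (and ray) spaces a priori supplies only convergent subsequences, and upgrading to a genuine one-parameter limit requires the uniform bound \eqref{EqUpperBoundCorrelatorDeformation} from the proof of \Cref{ThmCorrelatorDegenerationInLimits} to drive the dominated-convergence argument; a secondary subtlety is that the limiting correlators are degenerate and supported on lower-dimensional subspaces, so their expectation values must be read off from the limiting formula rather than inferred from unitarity of $U([d(s_n)])$ alone, although finiteness still follows from the same bound together with the smoothness statement of \Cref{LemmaQuantumFieldsTransformNaturallySmoothly}.
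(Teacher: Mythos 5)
Your proposal is correct and follows essentially the same route as the paper, which obtains the corollary directly from \Cref{ThmCorrelatorDegenerationInLimits} and \Cref{CorCorrelatorsHaveWellDefinedExpec} applied to the scaling family $[d(s)]$, using the preceding observations that Poincar\'e geometry is $[d(s)]$-invariant and that the limit matrices exist with reduced rank. The extra sequence-independence step you flag as the main obstacle is in fact immediate: $[d(s)]\to \Pp\,\diag(1,0,0,0,0)$ (resp.\ $\Pp\,\diag(0,1,1,1,1)$) as a genuine one-parameter limit in the compact space of projective matrices, so every sequence $s_n\to 0$ (resp.\ $s_n\to\infty$) produces the same $[b_\infty]$ and, by the continuity supplied by \Cref{LemmaQuantumFieldsTransformNaturallySmoothly}, the same limiting expectation values, without needing a separate dominated-convergence argument.
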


\begin{proof}
The statement is a direct consequence of \Cref{ThmCorrelatorDegenerationInLimits} and \Cref{CorCorrelatorsHaveWellDefinedExpec}, along with the preceding considerations.
\end{proof}

\Cref{PropPointwiseConvergenceModelSpacePoints} and its proof actually yield that the model space support of projective correlators on $\Xx((1),(3,1))$ shrinks in the infrared limit to finitely many points in the union of the boundary $\partial\overline{\Xx((1),(3,1))} = \{[0,x_1,\ldots,x_4]\in \RP^4\} = \RP^3$ and the zero-dimensional subspace $\{[1,0,\ldots,0]\}\in \Xx((1),(3,1))$.
In the ultraviolet limit the support converges as well to points in $\partial \overline{\Xx((1),(3,1))}\cup \{[1,0,\ldots,0]\}$.
Therefore, projective correlators on Poincaré geometry behave in their infrared and ultraviolet limits as for a three-dimensional QFT.
This is qualitatively consistent with many quantum field theories, considering e.g.~the infinite temperature limit of thermal field theories~\cite{Appelquist:1981vg, Jourjine:1983hi}.
The relation to renormalizability of projective correlation functions is to be discussed in future work.

\subsection{Superselection sectors}\label{SecSuperSelectionSectors}
Inequivalent multipliers of the projective representation $U$ yield different superselection sectors for the projective quantum field $\hat{\Ocal}$.
Restricted to Poincaré geometry, this provides the usual bosonic and fermionic projective representations. 
This classification can be uniquely extended to the ambient geometry $(\RP^4,\PGL_5\rr)$, as we discuss now.

For this we explicitly describe the inequivalent multipliers.
$\PGL_5\rr$ is connected, so for any $[g],[h]\in\PGL_5\rr$ we can construct a path $\gamma([g],[h])$ as follows.
Polar decomposition yields $[g] = [u_g \exp(Y_g)]$ and $[h] = [u_h \exp(Y_h)]$ for unique $[u_g],[u_h]\in \PO(5)$ and unique symmetric $[Y_g],[Y_h]\in \pgl_5\rr$.
The group $\PO(5)\cong \SO(5)$ (since $-1_{5\times 5}\notin \SO(5)$) is connected and compact, so $\exp:\pofrak(5)\to \PO(5)$ is surjective.
Therefore, there exists $[X_h] \in \pofrak(5)$, such that $[u_h]=[\exp(X_h)]$.
We define the loop $\gamma([g],[h]):\rr\to \PO(5)\hookrightarrow \PGL_5\rr$ as
\begin{equation*}
\gamma([g],[h])(s):=[\exp(s X_h) u_g] \,.
\end{equation*}
We note that such one-parameter subgroups of $\PO(5)$ are periodic, so that $\gamma([g],[h])$ can be equally defined with domain a compact interval.
With this the possible inequivalent multipliers of $U$ are described by the following proposition.
\vs

\begin{prop}\label{PropCohomologyGeneratorsPGL5}
$H^2(\PGL_5\rr,\Uu(1))\cong \zz_2$, generated by the two inequivalent cocycles 
%\begin{subequations}
\begin{equation*}
\omega_+([g],[h])=+1
\end{equation*}
and
\begin{equation*}
\omega_-([g],[h])=\begin{cases}
+1\,, & \text{if $\gamma([g],[h])$ is contractible}\,,\\
-1\,, & \text{if $\gamma([g],[h])$ is not contractible}\,,
\end{cases}
\end{equation*}
%\end{subequations}
for all $[g],[h]\in \PGL_5\rr$.
The multiplier $\omega_-$ is well-defined, since contractibility of $\gamma([g],[h])$ is independent from the choice of $[X_h]\in \pofrak(5)$ in its construction.
\end{prop}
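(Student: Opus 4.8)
The plan is to compute $H^2(\PGL_5\rr,\Uu(1))$ --- in the Borel (measurable) sense relevant to projective unitary representations --- from the topology and Lie algebra of $\PGL_5\rr$, and then to exhibit $\omega_+$ and $\omega_-$ as representatives of its two classes. The structural inputs, some recorded already in the text preceding the proposition, are: $\pgl_5\rr\cong\slfrak_5\rr$ is semisimple; $\PO(5)\cong\SO(5)$; and the polar decomposition writing each $[g]$ uniquely as $[u_g\exp(Y_g)]$ with $u_g\in\PO(5)$, $Y_g$ symmetric, exhibits $\PO(5)$ as a deformation retract of $\PGL_5\rr$, the second factor ranging over a contractible set; hence $\pi_1(\PGL_5\rr)\cong\pi_1(\SO(5))\cong\zz_2$, so the universal cover is a connected double cover, i.e.\ a central extension $1\to\zz_2\to\overline{\PGL_5\rr}\xrightarrow{p}\PGL_5\rr\to 1$. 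Semisimplicity gives $H^2(\slfrak_5\rr,\rr)=0$ by Whitehead's second lemma, and it forces $\PGL_5\rr$ and $\overline{\PGL_5\rr}$ to have only trivial continuous characters (any homomorphism from a semisimple Lie algebra to the abelian $\rr$ vanishes, as $[\mathfrak g,\mathfrak g]=\mathfrak g$).

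For the bound $\#H^2(\PGL_5\rr,\Uu(1))\le 2$: pull back a Borel $2$-cocycle $\omega$ on $\PGL_5\rr$ along $p$. Since $\overline{\PGL_5\rr}$ is simply connected with $H^2$-trivial Lie algebra, Bargmann's theorem makes $p^{*}\omega=\delta\lambda$ for a Borel $\lambda\colon\overline{\PGL_5\rr}\to\Uu(1)$, unique because $\overline{\PGL_5\rr}$ has no nontrivial characters. A short computation (using centrality of $\ker p$ and the cocycle identity) shows $\lambda(\zeta x)=c(\zeta)\lambda(x)$ with $c(\zeta)\in\Uu(1)$ depending only on $\zeta\in\ker p$, and $\zeta\mapsto c(\zeta)$ a homomorphism $\ker p\cong\zz_2\to\Uu(1)$; it is trivial exactly when $\lambda$ descends through $p$, exactly when $\omega$ is a coboundary. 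Hence $[\omega]\mapsto c$ embeds $H^2(\PGL_5\rr,\Uu(1))$ into $\Hom(\zz_2,\Uu(1))\cong\zz_2$. (This is a special case of Moore's identity $H^2_{\mathrm{Borel}}(G,\Uu(1))\cong\Hom(\pi_1 G,\Uu(1))$ for connected $G$ with $H^2(\mathfrak g,\rr)=0$, equivalently of the five-term inflation--restriction sequence of the central extension, whose outer terms all vanish.)

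It then remains to place $\omega_\pm$. Clearly $\omega_+\equiv 1$ is the trivial class. For $\omega_-$, the polar decomposition furnishes a Borel section $s$ of $p$: lift the factor $[\exp(Y_g)]$ canonically (possible since these elements form a simply connected set) and lift $[u_g]$ via a choice $[u_g]=[\exp X]$ followed by $X\mapsto\exp_{\mathrm{Spin}(5)}(X)$. Unwinding the construction shows that $\gamma([g],[h])$ is --- after restriction to a compact interval, respectively after closing it up --- a loop whose class in $\pi_1(\PGL_5\rr)\cong\ker p$ equals $s(g)s(h)s(gh)^{-1}$, so that $\omega_-([g],[h])=\chi_0\!\bigl(s(g)s(h)s(gh)^{-1}\bigr)$ with $\chi_0\colon\ker p\xrightarrow{\sim}\zz_2\hookrightarrow\Uu(1)$. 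Being of this shape, $\omega_-$ is automatically a $2$-cocycle, and it represents the class of the central extension $p$ (pushed forward along $\zz_2\hookrightarrow\Uu(1)$); since $p$ is a connected nontrivial double cover it does not split, so $\omega_-$ is not a coboundary. With the upper bound this yields $H^2(\PGL_5\rr,\Uu(1))=\zz_2$, generated by $[\omega_+]$ and $[\omega_-]$. Finally, well-definedness of $\omega_-$ reduces to showing that the contractibility of $\gamma([g],[h])$ depends on $X_h$ only through $u_h=\exp X_h$: any two admissible choices give periodic one-parameter-subgroup loops through $u_h$, whose images in $\pi_1(\PO(5))=\zz_2$ are pinned down by $u_h$ through the map $\pi_1(T)\to\pi_1(\SO(5))$ of a maximal torus, so the two loops are freely homotopic and yield the same verdict.

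I expect the main obstacle to be exactly this last identification: making precise that $\gamma([g],[h])$ is (or can be completed to) a loop --- one must choose the one-parameter subgroup with value $u_h$ periodic, or else supply a canonical return path --- and relating $s(g)s(h)s(gh)^{-1}$ to it through the interaction of the two lift factors, which is delicate because polar decomposition is not multiplicative. Establishing the independence from $X_h$ by the torus argument is the other fiddly point the proposition explicitly flags. By contrast, the cohomological backbone ($\pi_1\cong\zz_2$, Whitehead, Bargmann, the descent computation) is routine once the Borel framework is fixed.
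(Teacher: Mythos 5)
Your cohomological backbone is correct and takes a genuinely different route from the paper's. The paper identifies $\PGL_5\rr\cong\SL_5\rr$, invokes the universal central extension of this perfect group together with $\pi_1(\SL_5\rr)\cong\zz_2$, and applies the universal coefficient theorem to get $H^2(\PGL_5\rr,\Uu(1))\cong\Hom_\zz(\zz_2,\Uu(1))\cong\zz_2$; you instead pull a cocycle back to the simply connected cover, kill it there by Whitehead--Bargmann, and read off the descent obstruction in $\Hom(\ker p,\Uu(1))\cong\Hom(\pi_1(\PGL_5\rr),\Uu(1))$ (Moore's identity). Both routes produce the count $|H^2|=2$, and yours is the more standard and arguably more watertight argument in the Borel-cohomology setting relevant to projective unitary representations. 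For well-definedness of $\omega_-$ the paper does something different from your maximal-torus argument: it writes down the explicit interpolation $\gamma_t([g],[h])(s)=[\exp(s(tX_h'+(1-t)X_h))u_g]$ and declares it a homotopy of loops.

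There is, however, a genuine gap at exactly the two points you flag and then assert past. First, the identity $[\gamma([g],[h])]=s(g)s(h)s(gh)^{-1}$ in $\ker p$ is claimed to follow from ``unwinding the construction,'' but it is never derived, and as stated it cannot hold: $\gamma([g],[h])(s)=[\exp(sX_h)u_g]$ is the right translate by the fixed element $[u_g]$ of the periodic loop $s\mapsto[\exp(sX_h)]$, and right translation in the connected group $\PO(5)$ preserves free homotopy classes of loops (a path from $e$ to $u_g$ gives the free homotopy); since $\pi_1(\PO(5))\cong\zz_2$ is abelian, contractibility is a free-homotopy notion, so the verdict depends on $[h]$ alone. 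A $\Uu(1)$-valued function of the second argument only satisfies the $2$-cocycle identity precisely when it is constant, so it cannot coincide with the genuinely two-variable extension cocycle $s(g)s(h)s(gh)^{-1}$; establishing nontriviality of $\omega_-$ requires an argument that actually interlaces lifts of $[g]$ and $[h]$ (the standard $p_g*(g\cdot p_h)*\overline{p_{gh}}$ construction), not the translate of a single one-parameter loop. Second, your torus argument for well-definedness presupposes that the homotopy class of the periodic loop generated by $[X_h]$ is pinned down by $[u_h]=[\exp(X_h)]$; this is false, since $X_h=0$ and the generator of a full $2\pi$-rotation in a coordinate $2$-plane both exponentiate to the identity yet generate a contractible and a non-contractible loop, respectively. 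So the two steps you yourself call the ``fiddly points'' are not merely delicate but remain unproved in your proposal.
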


\begin{proof}
$\PGL_5\rr\cong\SL_5\rr$ since $-1_{5\times 5}\notin \SL_5\rr$. 
Since $\SL_5\rr$ is perfect, the universal central extension yields the short exact sequence of groups~\cite{weibel1994introduction, roger1995extensions}
\begin{equation*}
1\to H_2(\PGL_5\rr,\zz)\cong \pi_1(\PGL_5\rr) \to \overline{\PGL_5\rr}\to \PGL_5\rr\to 1\,,
\end{equation*}
with $\overline{\PGL_5\rr}$ the universal cover of $\PGL_5\rr$.
The fundamental group is $\pi_1(\PGL_5\rr)\cong \pi_1(\SL_5\rr) \cong \zz_2$. 
With $H_1(\SL_5\rr,\zz)\cong 0$ since $\SL_5\rr$ is perfect, we have by the universal coefficient theorem 
\begin{equation*}
H^2(\PGL_5\rr,\Uu(1))\cong \mathrm{Hom}_\zz(H_2(\PGL_5\rr,\zz),\Uu(1))\cong\mathrm{Hom}_\zz(\zz_2,\Uu(1))\cong \zz_2\,.
\end{equation*}
Generators can be chosen of the claimed topological type, since the loops $\gamma$ generate $\pi_1(\PO(5))\cong \pi_1(\PGL_5\rr)$ and $\PO(5)$ is compact.

In order to prove the independence of contractibility from the choice of $[X_h]$, assume there exist $[X_h']\in \pofrak(5)$, such that $[u_h] = [\exp (X_h)] = [\exp(X_h')]$.
Let the loop $\gamma([g],[h])$ be defined as before using $[X_h]$ and $\gamma'([g],[h])$ be defined analogously but using $[X_h']$.
We construct a homotopy-equivalence $\gamma_t([g],[h]):\rr\to\PO(5)$, $t\in [0,1]$, between the two loops.
Set
\begin{equation*}
\gamma_t([g],[h])(s) = [\exp(s(t X_h' + (1-t)X_h))u_g] \,.
\end{equation*}
The map $\gamma_t([g],[h])(s)$ is continuous in $t$ and $s$, and $\gamma_0([g],[h]) = \gamma([g],[h])$, $\gamma_1([g],[h]) = \gamma'([g],[h])$.
Therefore, $\gamma_t([g],[h])$ defines a homotopy-equivalence.
\end{proof}

The multipliers of $U$ coincide with the multipliers of the restriction of $U$ to homogeneous Lorentz geometry structure groups, the Poincaré group or the full $\PGL_5\rr$, as the following lemma shows.
By a homogeneous Lorentz geometry we mean a deformation of de Sitter geometry, i.e., a geometry of the form $[g]_*\Gg(4,1)$ for some $[g]\in\PGL_5\rr$.
\vs

\begin{lem}\label{LemmaMultiplierRestrictionAgreement}
Let $\Oo<\PGL_5\rr$ be of type $\Oo=\Ad_{[g]} \PO(4,1)$ or $\Oo = \PO((1),(3,1))$ for some $[x]\in\RP^4$.
Its cohomology group $H^2(\Oo,\Uu(1))$ is generated by the $\omega_\pm$ of \Cref{PropCohomologyGeneratorsPGL5}, continuously deforming the loop $\gamma$ in the definition of $\omega_-$ into a subgroup $\PO(3)<\PO((1),(3,1))$ or $\Ad_{[g]}\Pp(\Oo(3)\times 1)<\Ad_{[g]} \PO(4,1)$.
If a projective unitary $\PGL_5\rr$ representation $U$ has multiplier $\omega_i$ upon restriction to $\Oo$, $i\in \{+,-\}$, then $U$ has the same multiplier $\omega_i$ on the full $\PGL_5\rr$, and vice versa.
\end{lem}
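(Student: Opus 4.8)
The plan is to reduce the statement to a computation of fundamental groups. Write $\iota:\Oo\hookrightarrow\PGL_5\rr$ for the inclusion and $\Oo^0$ for the identity component. I claim it suffices to establish (a) $H^2(\Oo,\Uu(1))\cong\zz_2$, so that the classes $[\omega_+|_\Oo]$ and $[\omega_-|_\Oo]$ do generate it, and (b) that $\iota$ induces a \emph{nonzero} homomorphism $\iota_*:\pi_1(\Oo^0)\to\pi_1(\PGL_5\rr)\cong\zz_2$, equivalently that $[\omega_-|_\Oo]$ is not a coboundary. Indeed, (a) and (b) together force the restriction map $\iota^*:H^2(\PGL_5\rr,\Uu(1))\to H^2(\Oo,\Uu(1))$ to be an isomorphism $\zz_2\to\zz_2$ sending $[\omega_-]\mapsto[\omega_-|_\Oo]$ and $[\omega_+]\mapsto[\omega_+|_\Oo]$; since the multiplier of $U|_\Oo$ is the $\iota^*$-image of the multiplier of $U$, the asserted biconditional follows at once.

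For (b) I would exhibit a common compact subgroup. By conjugation invariance of $\pi_1(\PGL_5\rr)$ the case $\Oo=\Ad_{[g]}\PO(4,1)$ reduces to $[g]=e$. Both $\PO((1),(3,1))$ and $\PO(4,1)$ contain the rotation subgroup $R:=\Pp\{\diag(1,A,1)\mid A\in\Oo(3)\}\cong\Oo(3)$ --- for the Poincaré group these are the spatial rotations $\Lambda=\diag(A,1)$, $t=0$ of \eqref{EqPoincareTrafo55}, and $\diag(1,A,1)$ manifestly preserves the form defining $\Xx(4,1)$. Its identity component $R^0\cong\SO(3)$ sits inside a maximal compact subgroup of $\Oo^0$: for $\mathrm{ISO}_0(3,1)$, which deformation retracts onto $\SO(3)$ (the translations being contractible), $R^0$ \emph{is} that maximal compact, and for $\SO_0(4,1)$ it lies in an $\SO(4)$; in both cases $\pi_1(R^0)\to\pi_1(\Oo^0)$ is an isomorphism. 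On the other hand $\PGL_5\rr\cong\SL_5\rr$ deformation retracts by polar decomposition onto $\PO(5)\cong\SO(5)$, and the inclusions $\SO(3)\hookrightarrow\SO(4)\hookrightarrow\SO(5)$ induce isomorphisms on $\pi_1$ (from the fibrations $\SO(n)\to\SO(n+1)\to S^n$ with $\pi_1(S^n)=\pi_2(S^n)=0$ for $n\geq 3$), so $\pi_1(R^0)\to\pi_1(\PGL_5\rr)$ is an isomorphism too. Since $R^0\hookrightarrow\PGL_5\rr$ factors through $\Oo^0$, $\iota_*$ is an isomorphism. In particular the $2\pi$-rotation loop in $R^0$ is noncontractible in $\Oo$ and in $\PGL_5\rr$, the loop $\gamma$ of \Cref{PropCohomologyGeneratorsPGL5} can be homotoped onto such a loop in $R$ --- inside $\Oo$ for $\Oo=\PO(4,1)$, where the polar decomposition restricts to the Cartan decomposition of $\Oo(4,1)$ with maximal compact $\Pp(\Oo(4)\times\Oo(1))$, and inside $\PGL_5\rr$ in the Poincaré case --- and $\omega_-|_\Oo$ is not a coboundary.

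For (a) I would rerun the universal-central-extension argument of \Cref{PropCohomologyGeneratorsPGL5} with $\Oo^0$ in place of $\PGL_5\rr$. Both $\SO_0(4,1)$ and $\mathrm{ISO}_0(3,1)=\SO_0(3,1)\ltimes\rr^4$ are perfect with trivial abelianization, and their Lie algebras have no nontrivial central extension --- Whitehead's second lemma for the semisimple $\ofrak(4,1)$, and $H^2(\pofrak((1),(3,1)),\rr)=0$ for the Poincaré algebra in $3+1$ dimensions, which follows by a Hochschild--Serre argument along the abelian ideal $\rr^4$ --- so the topological universal cover coincides with the universal central extension, a double cover with kernel $H_2(\Oo^0,\zz)\cong\pi_1(\Oo^0)\cong\zz_2$ (cf.\ \cite{weibel1994introduction,roger1995extensions}). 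By the universal coefficient theorem $H^2(\Oo^0,\Uu(1))\cong\Hom_\zz(\zz_2,\Uu(1))\cong\zz_2$, generated by $[\omega_-|_{\Oo^0}]$. Passing from $\Oo^0$ to $\Oo$ adds nothing relevant here: the finite component group acts trivially on this $\pi_1$-detected class, which is the only datum a representation restricted from $\PGL_5\rr$ can see. Combining (a) and (b) proves the lemma.

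The main obstacle is step (b): establishing that the spin double cover of $\Oo$ is precisely the restriction of that of $\PGL_5\rr$, i.e.\ that $\iota_*$ is nonzero rather than the trivial map. Everything rests on the chain of identifications $\pi_1(\SO(3))\cong\pi_1(\SO(4))\cong\pi_1(\SO(5))$ together with the two deformation retractions onto maximal compact subgroups, so the computation is short but must be assembled carefully --- in particular one has to keep track of which copy of $\SO(3)$ is used and verify that it genuinely lies in the relevant maximal compacts. A secondary, bookkeeping-type point is the contribution of $\Oo/\Oo^0$ to $H^2(\Oo,\Uu(1))$, which one checks does not enlarge the two-element classification.
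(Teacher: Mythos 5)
Your proposal is correct and follows essentially the same route as the paper: vanishing of the second Lie algebra cohomology reduces $H^2(\Oo,\Uu(1))$ to $\mathrm{Hom}(\pi_1(\Oo),\Uu(1))\cong\zz_2$, and the common $\SO(3)$-type subgroup (the paper's $\PO(3)<\PO((1),(3,1))$ resp.\ $\Ad_{[g]}\Pp(\Oo(3)\times 1)$) shows the nontrivial class restricts nontrivially, forcing the multipliers to agree. Your step (b), with the fibration chain $\pi_1(\SO(3))\cong\pi_1(\SO(4))\cong\pi_1(\SO(5))$ and the retractions onto maximal compacts, is just a more explicit justification of the paper's terse claim that the loops $\gamma$ can be deformed into $\PO(3)$ and that any loop in $\PO(5)$ is homotopic to one there.
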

\begin{proof}
The second Lie algebra cohomology groups of $\Ad_{[g]}\pofrak(4,1)$ and $\pofrak((1),(3,1))$ are trivial,
\begin{equation*}
H^2(\Ad_{[g]}\pfrak\ofrak(4,1),\rr) \cong H^2(\pfrak\ofrak(4,1),\rr) \cong H^2(\pfrak\ofrak((1),(3,1)),\rr) \cong 0\,, 
\end{equation*}
such that 
\begin{align*}
&H^2(\Ad_{[g]}\PO(4,1),\Uu(1)) \cong H^2(\PO(4,1),\Uu(1)) \cong H^2(\PO((1),(3,1)),\Uu(1)) \nonumber\\
&\qquad \cong \pi_1(\Ad_{[g]}\PO(4,1))\cong \pi_1(\PO(4,1))\cong \pi_1(\PO((1),(3,1)))\cong \zz_2
\end{align*}
by the same arguments as in the proof of \Cref{PropCohomologyGeneratorsPGL5}.
The paths $\gamma$ included in the definition of $\omega_-$ can be continuously deformed into the subgroup $\PO(3)<\PO((1),(3,1))$ or $\Ad_{[g]}\Pp(\Oo(3)\times 1)< \Ad_{[g]}\PO(4,1)$, since any loop in $\PO(5)$ is homotopy-equivalent to some loop in $\PO(3)$ and vice versa.
Then the 2-cycles $\omega_\pm$, potentially defined with deformed loops $\gamma$, generate the cohomology groups $H^2(\Ad_{[g]}\PO(4,1),\Uu(1))$, $H^2(\PO((1),(3,1)),\Uu(1))$ and $H^2(\PGL_5\rr,\allowbreak \Uu(1))$, since these are entirely due to non-trivial fundamental groups.

Assume a projective unitary representation $V$ of $\Oo$ of any type as in the claim acts on the Hilbert space $\Hcal$ with multiplier $\omega_-$ and $U$ as in the claim has multiplier $\omega_+$.
There exist $[g],[h]\in \Oo$ such that $\gamma([g],[h])\subset \Oo$ is not contractible, i.e., for these $[g],[h]$ we have $\omega_-([g],[h])=-1$.
On the other hand, $\omega_+([g],[h])=+1$.
Thus, $V$ cannot be the restriction of $U$ to $\Oo$; the multipliers need to agree.
The same holds if $V$ has multiplier $\omega_+$ and $U$ has multiplier $\omega_-$.
The converse holds trivially, showing the claim.
\end{proof}

This allows for the consistent classification of projective quantum fields into bosonic and fermionic ones.
\vs

\begin{defi}
A projective quantum field $\hat{\Ocal}$ is \emph{bosonic} if $U$ has a multiplier equivalent to the trivial multiplier $\omega_+$ and \emph{fermionic} if $U$ has a multiplier equivalent to $\omega_-$.
\end{defi}

The Hilbert space $\Hcal$ does not split into a non-trivial direct sum $\Hcal^+\oplus \Hcal^-$, where $U([g])$ acting on $\Hcal^\pm$, $[g]\in\PGL_5\rr$ arbitrary, comes with multiplier $\omega_\pm$.
Indeed, even if $U$ acts on a Hilbert space of the form $\Hcal^+\oplus\Hcal^-$, it acts with the same multiplier on both $\Hcal^+$ and $\Hcal^-$.
Therefore, in the setting of projective quantum fields, the multipliers $\omega_\pm$ of $\PGL_5\rr$ indeed label different superselection sectors, justifying the previous definition.

\subsection{Composite and irreducible projective quantum fields}\label{SecSpinStatistics}
For the further classification of projective quantum fields, we construct composite and irreducible projective quantum fields.

\subsubsection{Composite projective quantum fields}
Let $\hat{\Ocal} = (U, \rho,\{[\hat{\Ocal}([x])]\,|\,[x]\in\RP^4\})$ be a projective quantum field.
Choose representatives $\hat{\Ocal}([x])$ of the field operators $[\hat{\Ocal}([x])]$ and let $\hat{\Ocal}_\alpha^*([x]):\Dcal^*\to\Dcal^*$ be the dual field operator to $\hat{\Ocal}_\alpha([x])$.
Denote by $\Xi$ the anti-linear, bijective Riesz map $\Dcal\to\Dcal^*, v\mapsto \langle v,\cdot\rangle$ with inverse $\Xi^{-1}: \varphi = \langle v,\cdot\rangle\mapsto v$.
We often omit the map $\Xi$ from notations.
\vs

\begin{defi}
We call a projective quantum field $(U',\rho',\{[\hat{\Ocal}'([x])]\,|\, [x]\in\RP^4\})$ a \emph{composite projective quantum field} of $\hat{\Ocal}$, if $U'=U$ and representative field operators $\hat{\Ocal}'([x])$ can be chosen, such that there exist $a_{\gamma,\alpha_1\dots\alpha_p,\beta_1\dots\beta_q}\in\cc$, $p,q\in \nn$:
\begin{align}\label{EqCompositeFieldOpDecomposition}
\hat{\Ocal}'_\gamma([x]) =&\; \sum_{\alpha_1,\dots,\alpha_p=1}^{\dim\rho} \sum_{\beta_1,\dots,\beta_q=1}^{\dim\rho} a_{\gamma,\alpha_1\dots\alpha_p,\beta_1\dots\beta_q} \hat{\Ocal}_{\alpha_1}([x])\circ\ldots \circ \hat{\Ocal}_{\alpha_p}([x]) \nonumber\\
&\qquad\qquad\qquad\qquad\qquad\qquad \circ \Xi^{-1}\circ \hat{\Ocal}^*_{\beta_1}([x])\circ\ldots\circ \hat{\Ocal}^*_{\beta_q}([x])\circ \Xi
\end{align}
for all $[x]\in\RP^4$, $\gamma=1,\dots,\dim\rho'$.
The $[\hat{\Ocal}'([x])]$ are called \emph{composite field operators}.
\end{defi}
\vs

Note that the coefficients $a_{\gamma,\alpha_1\dots\alpha_p,\beta_1\dots\beta_q}\in\cc$ in \Cref{EqCompositeFieldOpDecomposition} are vastly restricted by the generalized unitary transformation property \eqref{EqGlobalCovariance} of projective quantum fields, which both $\hat{\Ocal}$ and $\hat{\Ocal}'$ obey.
\vs

\begin{example}
Define field operator representatives as 
\begin{align*}
&(\hat{\Ocal}([x])^{\otimes p}\otimes \hat{\Ocal}^*([x])^{\otimes q})_{\alpha_1\dots \alpha_p,\beta_1\dots\beta_q}\nonumber\\
&\qquad\qquad := \hat{\Ocal}_{\alpha_1}([x])\circ \ldots \circ \hat{\Ocal}_{\alpha_p}([x])\circ \Xi^{-1}\circ \hat{\Ocal}^*_{\beta_1}([x])\circ \ldots \circ \hat{\Ocal}^*_{\beta_q}([x])\circ \Xi\,,
\end{align*}
for all $\alpha_1,\dots,\alpha_p,\beta_1,\dots,\beta_q \in \{1,\dots,\dim\rho\}$.
The individual components provide linear operator-valued tempered distributions on $\Dcal$, i.e., the Hilbert space remains the Hilbert space of the original projective quantum field $\hat{\Ocal}$, and the tensor products only encompass the index structure corresponding to $\rho$.
The field operators $\hat{\Ocal}(p,q;[x])$ are equipped with the finite-dimensional $\overline{\PGL_5\rr}$ representation ${\rho^{\otimes p}\otimes (\rho^*)^{\otimes q}}$, where $\rho^* = (\rho^{-1})^T$ is the dual (contragredient) representation of $\rho$.
We write
\begin{equation*}
\hat{\Ocal}^{\otimes p}\otimes (\hat{\Ocal}^*)^{\otimes q} := (U,\rho^{\otimes p}\otimes (\rho^*)^{\otimes q},\{[\hat{\Ocal}([x])^{\otimes p}\otimes \hat{\Ocal}^*([x])^{\otimes q}]\,|\,[x]\in\RP^4\})\,.
\end{equation*}
This is a composite projective quantum field of $\hat{\Ocal}$.
\end{example}
\vs

The spin-statistics theorem has been proven for QFTs on Poincaré geometry~\cite{streater2000pct} under additional assumptions such as energy positivity.
Translated into the present framework, it connects quantum state statistics with the multipliers of the projective representation $U|_{\PO((1),(3,1))}$.
Only completely symmetric (anti-symmetric) quantum states appear in the physical Hilbert space for bosons (fermions), which come with integer spin (half-odd integer spin) Poincaré group representations.
Based on algebraic QFT, there are indications that the spin-statistics theorem can be extended to globally hyperbolic space-times~\cite{Verch:2001bv}, which include the homogeneous Lorentz geometries of type $[g]_*\Gg(4,1)$.
In this work we consider the spin-statistics relation for composite projective quantum fields.
\vs

\begin{defi}\label{DefSpinStatistics}
Let $\hat{\Ocal}'$ be a composite projective quantum field of $\hat{\Ocal}$ with decomposition \eqref{EqCompositeFieldOpDecomposition} of the field operator representatives.
It \emph{obeys spin-statistics}, if for all $[x]\in\RP^4,\gamma=1,\dots,\dim\rho'$ and
\begin{enumerate}[(i)]
\item $\hat{\Ocal}'$ bosonic:
%\begin{subequations}
\begin{align*}
\hat{\Ocal}'_\gamma([x]) = &\; \frac{1}{p! \, q!}\sum_{\pi\in S_p}\sum_{\pi'\in S_q}\sum_{\alpha_1,\dots,\alpha_p=1}^{\dim\rho} \sum_{\beta_1,\dots,\beta_q=1}^{\dim\rho} a_{\gamma,\alpha_1\dots\alpha_p,\beta_1\dots\beta_q} \nonumber\\
&\qquad\qquad \times (\hat{\Ocal}([x])^{\otimes p}\otimes \hat{\Ocal}^*([x])^{\otimes q})_{\pi(\alpha_1)\ldots \pi(\alpha_p),\pi'(\beta_1)\ldots \pi'(\beta_q)}\,,
\end{align*}
\item $\hat{\Ocal}'$ fermionic:
\begin{align*}
\hat{\Ocal}'_\gamma([x]) =&\; \frac{1}{p! \, q!}\sum_{\pi\in S_p}\sum_{\pi'\in S_q}\sum_{\alpha_1,\dots,\alpha_p=1}^{\dim\rho} \sum_{\beta_1,\dots,\beta_q=1}^{\dim\rho} \mathrm{sgn}(\pi) \mathrm{sgn}(\pi') \, a_{\gamma,\alpha_1\dots\alpha_p,\beta_1\dots\beta_q}  \nonumber\\
&\qquad\qquad \times (\hat{\Ocal}([x])^{\otimes p}\otimes \hat{\Ocal}^*([x])^{\otimes q})_{\pi(\alpha_1)\ldots \pi(\alpha_p),\pi'(\beta_1)\ldots \pi'(\beta_q)}\,,
\end{align*}
%\end{subequations}
\end{enumerate}
where $S_p$ denotes the degree-$p$ symmetric group.
Else, it violates spin-statistics.
\end{defi}
\vs

While spin is a property of the restricted projective representation $U|_{\PO((1),(3,1))}$, the multipliers $\omega_\pm$ appear for both the full $\PGL_5\rr$ and $\PO((1),(3,1))$, and need to agree by \Cref{LemmaMultiplierRestrictionAgreement}.
\Cref{DefSpinStatistics} thus provides a consistent formulation of the spin-statistics relation for composite projective quantum fields without internal degrees of freedom.
It can be naturally extended to projective quantum fields acting on multi-particle Hilbert spaces, which is beyond the present work.
\Cref{ThmSpinStatIrredPoincareIrred} (provided later) shows that certain projective quantum fields obey spin-statistics in the sense of \Cref{DefSpinStatistics}.

\subsubsection{Irreducible projective quantum fields}\label{SecIrreducibleQuantumFields}
Projective quantum fields can be characterized according to irreducibility of the $\pgl_5\rr$ representation $\tilde{\rho}$ corresponding to the $\overline{\PGL_5\rr}$ representation $\rho$.
\vs

\begin{defi}\label{DefIrreducibleQuantumField}
A projective quantum field $\hat{\Ocal}$ is \emph{irreducible}, if $\tilde{\rho}$ is irreducible as a $\pgl_5\rr$ representation.
\end{defi}
\vs

We consider Lie algebra instead of Lie group representations in \Cref{DefIrreducibleQuantumField}, since the representation $U$ is projective, which will allow for the consistent description of for instance Dirac fermions via projective quantum fields.
On the Lie algebra level, \Cref{DefIrreducibleQuantumField} coincides with the standard definition of irreducibility for Hilbert space operators~\cite{tung1985group}.

By the construction of projective quantum fields, $\tilde{\rho}$ is finite-dimensional.
All finite-dimensional, complex, irreducible representations of $\pgl_5\rr\cong \slfrak_5\rr$ are given by Schur modules for a pair of Young diagrams.
They are constructed from the fundamental representation $\CP^4_{\pgl_5\rr}$, for which $\pgl_5\rr$ acts on $\CP^4$ via projective matrix multiplication. 
We describe the related construction of partly symmetrized, partly anti-symmetrized composite projective quantum fields, closely following the standard construction of Schur modules~\cite{tung1985group, Bekaert:2006py, fulton2013representation}.
We first do so for a general projective quantum field $\hat{\Ocal}=(U,\rho,\{[\hat{\Ocal}([x])]\,|\,[x]\in\RP^4\})$, subsequently specifying $\rho$ further.

For a given Young diagram $\lambda = \{\lambda_1,\dots,\lambda_r\}$, $\lambda_i\geq \lambda_j$ if $i<j$, we denote by $\#\lambda = \sum_{i=1}^r\lambda_i$ its number of boxes and by $|\lambda|=r$ its number of rows.
Let $(\lambda,\lambda')$ be a pair of Young diagrams, $\lambda = \{\lambda_1,\dots,\lambda_r\},\lambda' = \{\lambda_1',\dots,\lambda_{r'}'\}$ with $\#\lambda = p$, $\#\lambda' = q$.
For the Young diagram $\lambda$ equipped with the canonical numbering of the boxes,%
\footnote{The canonical numbering is first along the boxes corresponding to $\lambda_1$, then $\lambda_2$ and so forth.
Young tableaux other than the canonical one result in isomorphic $\pgl_5\rr$ representations~\cite{fulton2013representation}.} 
we define subgroups of the degree-$p$ symmetric group $S_p$:
%\begin{subequations}
\begin{align*}
P_{\lambda} =&\; \{\pi\in S_p\,|\, \pi\text{ preserves each row}\}\,,\\
Q_{\lambda} =&\; \{\pi\in S_p\,|\, \pi\text{ preserves each column}\}\,,
\end{align*}
%\end{subequations}
analogously for $\lambda'$ and $S_q$.
Elements $\pi\in S_p$, $\pi'\in S_q$ act on the components of $(p,q)$-tensors of field operator representatives as
\begin{equation*}
(\hat{\Ocal}([x])^{\otimes p}\otimes \hat{\Ocal}^*([x])^{\otimes q})_{\alpha_1\ldots \alpha_p,\beta_1\ldots \beta_q}\mapsto (\hat{\Ocal}([x])^{\otimes p}\otimes \hat{\Ocal}^*([x])^{\otimes q})_{\pi(\alpha_1)\ldots \pi(\alpha_p),\pi'(\beta_1)\ldots \pi'(\beta_q)}\,.
\end{equation*}
We denote the corresponding action of $(\pi,\pi')$ on projective $(p,q)$-tensors of field operator representatives by $e_{\pi}\otimes e_{\pi'}^*$.
Then we set
\begin{equation*}
a_{(\lambda,\lambda')} := \sum_{(\pi,\pi')\in P_\lambda\times P_{\lambda'}}  e_{\pi}\otimes e_{\pi'}^*\,,\qquad b_{(\lambda,\lambda')} := \sum_{(\pi,\pi')\in Q_\lambda\times Q_{\lambda'}}  \sgn(\pi)\, \sgn(\pi')\cdot  e_{\pi}\otimes e_{\pi'}^*\,.
\end{equation*}
The Young symmetrizer is defined as $c_{(\lambda,\lambda')} = a_{(\lambda,\lambda')} \circ b_{(\lambda,\lambda')}$.
It corresponds to symmetrizing projective $(p,q)$-tensors of field operator representatives along rows of the Young diagrams $(\lambda,\lambda')$ and anti-symmetrizing them along columns of $(\lambda,\lambda')$.

The field operators $c_{(\lambda,\lambda')} ( [\hat{\Ocal}([x])^{\otimes p}\otimes \hat{\Ocal}^*([x])^{\otimes q}])$ come with the $\overline{\PGL_5\rr}$ representation $\rho_{(\lambda,\lambda')}$, which is the usual Schur module construction applied to $\rho$ for the pair $(\lambda,\lambda')$.
We define the projective Schur quantum field of $\hat{\Ocal}$ as
\begin{equation*}
c_{(\lambda,\lambda')}(\hat{\Ocal}) := (\Hcal'_{(\lambda,\lambda')},(U,\rho_{(\lambda,\lambda')},\{c_{(\lambda,\lambda')} ( [\hat{\Ocal}([x])^{\otimes p}\otimes \hat{\Ocal}^*([x])^{\otimes q}])\,|\,[x]\in\RP^4\})\,,
\end{equation*}
which is a composite projective quantum field of $\hat{\Ocal}$.
It is irreducible, if $\tilde{\rho}=\CP^4_{\pgl_5\rr}$, such that $\tilde{\rho}_{(\lambda,\lambda')}=\CP^4_{(\lambda,\lambda')}$ is the usual complex $\pgl_5\rr$ module for the pair $(\lambda,\lambda')$ and therefore irreducible.
In this case we write $\hat{\Ocal}^{\mathrm{Schur}}_{(\lambda,\lambda')}:=c_{(\lambda,\lambda')}(\hat{\Ocal})$.
\vs

\begin{example}\label{ExampleSchurModuleDiffHilbertSpaces}
Consider projective $(2,0)$-tensors of field operator representatives and $\tilde{\rho} = \CP^4_{\pgl_5\rr}$.
Consider the Young diagrams
\begin{equation*}
\ytableausetup{centertableaux}
\lambda =\, \begin{ytableau}
       1 & 2
\end{ytableau}\;,\qquad\lambda'=\emptyset\,.
\end{equation*}
Then
\begin{equation*}
[\hat{\Ocal}^{\mathrm{Schur}}_{(\lambda,\lambda')}([x])] = [(\hat{\Ocal}_{\alpha_1}([x])\hat{\Ocal}_{\alpha_2}([x]) + \hat{\Ocal}_{\alpha_2}([x])\hat{\Ocal}_{\alpha_1}([x]))_{\alpha_1,\alpha_2}]\,,
\end{equation*}
and $\tilde{\rho}_{(\lambda,\lambda')} = \Pp (\cc^5 \todot \cc^5)_{\pgl_5\rr}$.
If
\begin{equation*}
\ytableausetup{centertableaux}
\lambda =\, \begin{ytableau}
       1 \\
       2
\end{ytableau}\;,\qquad\lambda'=\emptyset\,,
\end{equation*}
then
\begin{equation*}
[\hat{\Ocal}^{\mathrm{Schur}}_{(\lambda,\lambda')}([x])] = [(\hat{\Ocal}_{\alpha_1}([x])\hat{\Ocal}_{\alpha_2}([x]) - \hat{\Ocal}_{\alpha_2}([x])\hat{\Ocal}_{\alpha_1}([x]))_{\alpha_1,\alpha_2}]\,,
\end{equation*}
which comes with $\tilde{\rho}_{(\lambda,\lambda')} =  \Pp (\cc^5 \twedge\cc^5)_{\pgl_5\rr}$.
For $[X]\in\pgl_5\rr$ with $X$ its unique trace zero representative, the representations $\Pp (\cc^5\todot \cc^5)_{\pgl_5\rr}$ and $\Pp (\cc^5\twedge \cc^5)_{\pgl_5\rr}$ are given by $\Pp(X\otimes X)$ acting on $\Pp(\cc^5\odot \cc^5)$ and $\Pp(\cc^5\wedge \cc^5)$ via projective matrix multiplication on both factors, respectively.
\end{example}
\vs

\begin{lem}\label{LemmaTensorIrreps}
A projective quantum field $\hat{\Ocal}=(U,\rho,\{[\hat{\Ocal}([x])]\,|\,[x]\in\RP^4\})$ is irreducible, if and only if $\tilde{\rho}=\CP^4_{(\lambda,\lambda')}$ for a pair of Young diagrams $(\lambda,\lambda')$.
If $|\lambda|,|\lambda'| =5$, then $\CP^4_{(\lambda,\lambda')}$ is the trivial representation of $\pgl_5\rr$, and if $|\lambda|\geq 6$ or $|\lambda'|\geq 6$, then $\CP^4_{(\lambda,\lambda')} \equiv 0$, contradicting the $\dim \rho\neq 0$ assumption for projective quantum fields.
\end{lem}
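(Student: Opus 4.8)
The plan is to reduce the statement to the classical classification of the finite-dimensional complex irreducible representations of $\slfrak_5$, carried along the identifications of the Lie algebra $\pgl_5\rr$ with $\slfrak_5\rr$ and with a real form of its complexification $\slfrak_5\cc$. First I would unwind \Cref{DefIrreducibleQuantumField}: $\hat{\Ocal}$ is irreducible exactly when the representation $\tilde{\rho}$ is irreducible as a $\pgl_5\rr$-module, where $\tilde{\rho}$ is finite-dimensional and complex by the construction in \Cref{DefFieldOperator}. Since $\pgl_5\rr=\gl_5\rr/\rr\cdot 1_{5\times 5}$ is isomorphic to $\slfrak_5\rr$ as a real Lie algebra, and since a finite-dimensional complex representation of a real Lie algebra is the same datum as a complex-linear representation of its complexification — with invariant subspaces, hence irreducibility, preserved in both directions — the irreducible $\tilde{\rho}$ are in bijection with the finite-dimensional irreducible $\slfrak_5\cc$-modules.

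Second, I would invoke the standard description of the latter: every finite-dimensional irreducible $\slfrak_5\cc$-module arises, up to isomorphism, by applying a Young symmetrizer $c_{(\lambda,\lambda')}$ for a pair of Young diagrams to tensor powers of the defining module $\CP^4$ and of its dual, i.e.\ is isomorphic to a Schur module $\CP^4_{(\lambda,\lambda')}$ (see e.g.\ \cite{fulton2013representation, Bekaert:2006py}). This is precisely the projective Schur quantum field construction carried out above, specialised to $\tilde{\rho}=\CP^4_{\pgl_5\rr}$: the field $\hat{\Ocal}^{\mathrm{Schur}}_{(\lambda,\lambda')}$ has associated representation $\tilde{\rho}_{(\lambda,\lambda')}=\CP^4_{(\lambda,\lambda')}$, which is irreducible. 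Combining the two directions yields the asserted equivalence, and the construction moreover shows that each such module is in fact realized by a composite projective quantum field of a fundamental one.

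Third, I would dispose of the two boundary cases by tracking the columns. In $c_{(\lambda,\lambda')}$ a column of height $h$ anti-symmetrizes $h$ of the tensor factors, so it contributes a factor $\bigwedge^h\cc^5$ (resp.\ $\bigwedge^h(\cc^5)^*$). As $\dim_\cc\cc^5=5$, a column of height $h\geq 6$ makes this vanish, so $|\lambda|\geq 6$ or $|\lambda'|\geq 6$ forces $\CP^4_{(\lambda,\lambda')}\equiv 0$; since $\dim\rho=\dim\tilde{\rho}\neq 0$ is required in \Cref{DefFieldOperator}, no such pair occurs for a genuine projective quantum field. A column of height $5$ contributes the one-dimensional determinant module $\bigwedge^5\cc^5$, on which trace-zero matrices act by zero, hence a trivial $\slfrak_5\rr\cong\pgl_5\rr$-module; a pair built only out of such full-height columns — the $|\lambda|=|\lambda'|=5$ case — therefore gives the trivial representation.

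The main obstacle I foresee is the bookkeeping of the second step: making sure that ``irreducible $\pgl_5\rr$-representation'' in the sense of \Cref{DefIrreducibleQuantumField} is faithfully matched with ``finite-dimensional irreducible $\slfrak_5\cc$-module'' (using $\pgl_5\rr\cong\slfrak_5\rr$ together with the equivalence between complex representations of a real Lie algebra and complex-linear representations of its complexification), and that the \emph{a priori} redundant pair-of-diagrams parametrisation — redundant because of the determinant twist present for $\slfrak_5$ but not for $\gl_5$ — is reconciled precisely by the height-$5$ and height-$\geq 6$ column analysis. The underlying representation theory is entirely classical, so beyond this identification no new input should be needed.
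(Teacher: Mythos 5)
Your proposal is correct and follows essentially the same route as the paper, whose entire proof is a one-line appeal to the isomorphism $\pgl_5\rr\cong\slfrak_5\rr$ and the standard classification of its finite-dimensional complex irreducible representations. You simply make explicit the steps the paper leaves to the citation — the passage to $\slfrak_5\cc$, the Schur-module parametrisation, and the column-height analysis for the $|\lambda|=5$ and $|\lambda|\geq 6$ boundary cases — so no further comparison is needed.
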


\begin{proof}
The statements follow with $\pgl_5\rr\cong \slfrak_5\rr$ from the standard classification of finite-dimensional, complex, irreducible representations of $\slfrak_5\rr$~\cite{tung1985group,fulton2013representation}.
\end{proof}

\Cref{LemmaTensorIrreps} implies that with regard to the finite-dimensional representation $\rho$, all irreducible projective quantum fields are of the form of projective Schur quantum fields.
If \Cref{LemmaTensorIrreps} applies, we write the projective quantum field $\hat{\Ocal}$ as $\hat{\Ocal}_{(\lambda,\lambda')}$.
Providing further examples, many composite projective quantum fields of $\hat{\Ocal}$ decompose into the irreducible projective Schur quantum fields of $\hat{\Ocal}$.
\vs

\begin{prop}
All composite projective quantum fields $\hat{\Ocal}' = (U,\rho',\{[\hat{\Ocal}'([x])]\,|\,[x]\in\RP^4\})$ of $\hat{\Ocal} = (U,\CP^4_{\PGL_5\rr},\{[\hat{\Ocal}([x])]\,|\,[x]\in\RP^4\})$ decompose into direct sums of projective Schur quantum fields of $\hat{\Ocal}$, i.e., there are multiplicities $n_{(\lambda,\lambda')}\in \nn$ and representatives $\hat{\Ocal}'([x])$ of $[\hat{\Ocal}'([x])]$, such that
\begin{equation*}
\rho' \cong \bigoplus_{\substack{\lambda,\lambda'\\ \#\lambda=p,\#\lambda'=q }} n_{(\lambda,\lambda')} \rho_{(\lambda,\lambda')}
\end{equation*}
and for all $[x]\in\RP^4$:
\begin{equation*}
 \hat{\Ocal}'([x]) = \sum_{\substack{\lambda,\lambda'\\ \#\lambda=p,\#\lambda'=q }} n_{(\lambda,\lambda')} \hat{\Ocal}_{(\lambda,\lambda')}^{\mathrm{Schur}}([x])\,,
\end{equation*}
where $(p,q)$ is as in the decomposition \eqref{EqCompositeFieldOpDecomposition}.
\end{prop}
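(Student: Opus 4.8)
The plan is to convert the covariance constraint into a statement of representation theory over $\slfrak_5\rr\cong\pgl_5\rr$, decompose the relevant mixed tensor module, and transport the decomposition back to field operators. Write $\varrho:=\rho^{\otimes p}\otimes(\rho^*)^{\otimes q}$ with $\rho=\CP^4_{\PGL_5\rr}$, acting on $W:=(\cc^5)^{\otimes p}\otimes((\cc^5)^*)^{\otimes q}$, and recall from the construction of $\hat\Ocal^{\otimes p}\otimes(\hat\Ocal^*)^{\otimes q}$ preceding the proposition that the operators $\hat\Ocal_{\alpha_1}([x])\circ\cdots\circ\hat\Ocal_{\alpha_p}([x])\circ\Xi^{-1}\circ\hat\Ocal^*_{\beta_1}([x])\circ\cdots\circ\hat\Ocal^*_{\beta_q}([x])\circ\Xi$ are the components of a composite projective quantum field carrying $\varrho$. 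The array $a=(a_{\gamma,\alpha_1\dots\alpha_p,\beta_1\dots\beta_q})$ from \eqref{EqCompositeFieldOpDecomposition} then defines a linear map $a\colon W\to V'$, $V'$ being the carrier space of $\rho'$, and I want to show $a$ is an equivariant surjection onto $\rho'$, whence the claim will follow from Weyl complete reducibility and ordinary Schur--Weyl duality.

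\textbf{Step 1: the coefficient map is equivariant.} I would substitute \eqref{EqCompositeFieldOpDecomposition} into the covariance identity \eqref{EqGlobalCovariance} for $\hat\Ocal'$, and apply \eqref{EqGlobalCovariance} to each factor $\hat\Ocal_{\alpha_i}$ and to $\Xi^{-1}\circ\hat\Ocal^*_{\beta_j}\circ\Xi$ (the latter transforming under $\rho^*$, as used for $\hat\Ocal^{\otimes p}\otimes(\hat\Ocal^*)^{\otimes q}$); note that the multiplier of $U$ drops out of every product $U([g])\,\hat\Ocal_{\alpha_1}([x])\circ\cdots\circ U^\dagger([g])$ because the intermediate factors $U^\dagger([g])U([g])$ cancel. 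Matching the coefficients of the composite operators at the transformed point $[g\cdot x]$ then gives $a\circ\varrho([g^{-1}])=\tilde\rho'([g^{-1}])\circ a$ for all $[g]$, i.e.\ $a$ intertwines $\varrho$ and $\rho'$ and is therefore $\pgl_5\rr$- and hence $\slfrak_5\rr$-equivariant. The comparison of coefficients is where linear independence of the composite operators as operator-valued distributions is needed; here one checks that the space of linear relations among them is itself a $\varrho$-submodule $\mathcal N\subseteq W$ (exactly as irreducibility of $\tilde\rho=\CP^4_{\pgl_5\rr}$ already forces the $\hat\Ocal_\alpha$ themselves to be independent), and replacing $W$ by $W/\mathcal N$ — again a direct sum of Schur modules, by semisimplicity — leaves everything below unchanged. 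After this reduction $a$ is surjective, so $\tilde\rho'\cong W/\ker a$ as an $\slfrak_5\rr$-module, hence isomorphic to a direct summand of $\varrho$.

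\textbf{Step 2: decompose $\varrho$ and transport to operators.} Ordinary Schur--Weyl duality gives $(\cc^5)^{\otimes p}=\bigoplus_{\#\lambda=p}(S_\lambda\cc^5)^{\oplus m_\lambda}$ and $((\cc^5)^*)^{\otimes q}=\bigoplus_{\#\lambda'=q}(S_{\lambda'}(\cc^5)^*)^{\oplus m_{\lambda'}}$, the isotypic projections being the Young symmetrizers $c_{(\lambda,\emptyset)}$, $c_{(\emptyset,\lambda')}$ of the Schur-module construction; tensoring yields $\varrho\cong\bigoplus_{\#\lambda=p,\#\lambda'=q} n_{(\lambda,\lambda')}\,\rho_{(\lambda,\lambda')}$ with $n_{(\lambda,\lambda')}=m_\lambda m_{\lambda'}$, realized blockwise on $W$ by the full Young symmetrizers $c_{(\lambda,\lambda')}$; by \Cref{LemmaTensorIrreps} each $\rho_{(\lambda,\lambda')}$, or each of its irreducible constituents, is a Schur module. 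Combining with Step 1, $\rho'$ inherits a decomposition into the $\rho_{(\lambda,\lambda')}$ with the multiplicities read off from a $\varrho$-complement of $\ker a$. Applying the same block maps $c_{(\lambda,\lambda')}$ to the $W$-valued composite operators $\hat\Ocal([x])^{\otimes p}\otimes\hat\Ocal^*([x])^{\otimes q}$ turns them into $\bigoplus n_{(\lambda,\lambda')}\hat\Ocal^{\mathrm{Schur}}_{(\lambda,\lambda')}([x])$; since we may choose the representatives $\hat\Ocal'([x])$ modulo $C^\infty(\RP^4,\rr_{\neq0})$ prefactors and change basis within $V'$, one may take $a$ to be precisely this block projection, which yields the stated identity for $\hat\Ocal'([x])$.

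\textbf{Main obstacle.} The genuinely delicate point is Step 1: upgrading the covariance constraint to honest equivariance of $a$ requires controlling linear dependencies among products of operator-valued distributions, which must be handled via the submodule $\mathcal N$ and the semisimplicity reduction rather than assumed. The representation theory of Step 2 is standard, but in the mixed case $p,q>0$ one should be aware that isolating the \emph{irreducible} Schur modules inside $S_\lambda\cc^5\otimes S_{\lambda'}(\cc^5)^*$ involves the usual trace-removal (Littlewood--Richardson) refinement of the double Young symmetrizer, so the multiplicities $n_{(\lambda,\lambda')}$ and the identification of $\rho_{(\lambda,\lambda')}$ with Schur modules via \Cref{LemmaTensorIrreps} are to be taken in that refined sense.
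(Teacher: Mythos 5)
Your proposal is correct and follows essentially the same route as the paper: both arguments rest on complete reducibility of finite-dimensional $\slfrak_5\rr\cong\pgl_5\rr$ representations to split $\tilde\rho'$ into Schur modules and then use the covariance property \eqref{EqGlobalCovariance} to transport that decomposition to the composite field operators. Your version merely makes explicit two points the paper leaves implicit — the equivariance of the coefficient array $a$ (handled via the relation submodule $\mathcal N$) and the trace-removal refinement needed for mixed tensors — which is a faithful elaboration rather than a different proof.
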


\begin{proof}
The finite-dimensional complex representation $\tilde{\rho}'$ of $\pgl_5\rr\cong \slfrak_5\rr$ is reducible and decomposes into a direct sum of the irreducible $\pgl_5\rr$ representations $\CP^4_{(\lambda,\lambda')}$~\cite{tung1985group,fulton2013representation}.
The Lie group representation $\rho'$ decomposes analogously to $\tilde{\rho}'$, since $\rho=\CP^4_{\PGL_5\rr}$.
Hence, due to the generalized unitary transformation behavior~\eqref{EqGlobalCovariance} the composite field operator representatives $\hat{\Ocal}'$ decompose as well, in agreement with the decomposition of $\tilde{\rho}'$.
Since the $\hat{\Ocal}'([x])$ are composite field operators of $\hat{\Ocal}$, their decomposition is into projective Schur quantum fields of $\hat{\Ocal}$.
\end{proof}

Obeying spin-statistics manifests for irreducible projective quantum fields as follows.
\vs

\begin{prop}\label{PropCombinedIrrepsSpinStatisticsConsistent}
Let $\hat{\Ocal}_{(\lambda,\lambda')}$ be an irreducible projective quantum field, which obeys spin-statistics.
Then
\begin{enumerate}[(i)]
\item for $\hat{\Ocal}_{(\lambda,\lambda')}$ bosonic both $\lambda$ and $\lambda'$ consist of a single row, or
\item for $\hat{\Ocal}_{(\lambda,\lambda')}$ fermionic both $\lambda$ and $\lambda'$ consist of a single column.
\end{enumerate}
\end{prop}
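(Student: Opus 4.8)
The plan is to reduce the claim to the elementary fact that in the symmetric group algebra the total symmetrizer annihilates every Young symmetrizer whose diagram is not a single row, and dually the total antisymmetrizer annihilates every Young symmetrizer whose diagram is not a single column. For the setup: since $\hat{\Ocal}_{(\lambda,\lambda')}$ is irreducible, \Cref{LemmaTensorIrreps} gives $\tilde{\rho}=\CP^4_{(\lambda,\lambda')}$, and -- as explained after that lemma -- it may be taken to be the projective Schur quantum field $c_{(\lambda,\lambda')}(\hat{\Ocal})$ of a base field $\hat{\Ocal}$ with fundamental representation $\tilde{\rho}=\CP^4_{\pgl_5\rr}$. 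Put $p=\#\lambda$, $q=\#\lambda'$, and for fixed $[x]$ let $T$ be the linear map from $(\cc^5)^{\otimes p}\otimes((\cc^5)^*)^{\otimes q}$ to operators on $\Dcal$ that sends a basis index tensor $e_\delta$ to $(\hat{\Ocal}([x])^{\otimes p}\otimes\hat{\Ocal}^*([x])^{\otimes q})_\delta$, so that $\hat{\Ocal}_{(\lambda,\lambda'),\gamma}([x])=T(v_\gamma)$ with $v_\gamma\in\im c_{(\lambda,\lambda')}$. The key point making this reduction legitimate is that the $S_p\times S_q$-action in \Cref{DefSpinStatistics} permutes only the $\cc^5$-indices and never reorders the (non-commuting) operator factors, so the whole question lives inside $\cc[S_p]$ and $\cc[S_q]$.

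Next I would record the group-algebra input. Writing $c_\lambda=a_\lambda b_\lambda$ for the Young symmetrizer, $s_p=\tfrac{1}{p!}\sum_{\pi\in S_p}\pi$ for the total symmetrizer and $\tilde{s}_p=\tfrac{1}{p!}\sum_{\pi\in S_p}\sgn(\pi)\,\pi$ for the total antisymmetrizer, one computes $s_p c_\lambda=|P_\lambda|\big(\sum_{\tau\in Q_\lambda}\sgn\tau\big)s_p$, which vanishes unless $Q_\lambda$ is trivial, i.e. unless $\lambda$ is a single row; dually $\tilde{s}_p c_\lambda=\big(\sum_{\sigma\in P_\lambda}\sgn\sigma\big)\tilde{s}_p b_\lambda$, which vanishes unless $P_\lambda$ is trivial, i.e. unless $\lambda$ is a single column. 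Since $c_{(\lambda,\lambda')}=c_\lambda\otimes c^{*}_{\lambda'}$ acts slot-wise, $s_p\otimes s_q$ annihilates $\im c_{(\lambda,\lambda')}$ as soon as $\lambda$ or $\lambda'$ fails to be a single row, and $\tilde{s}_p\otimes\tilde{s}_q$ annihilates it as soon as $\lambda$ or $\lambda'$ fails to be a single column.

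Then I would conclude. In the bosonic case, unwinding \Cref{DefSpinStatistics}(i) shows that obeying spin-statistics says precisely $T(v_\gamma)=T\big((s_p\otimes s_q)v_\gamma\big)$ for all $\gamma$ and all $[x]$; if $\lambda$ or $\lambda'$ were not a single row the right-hand side is $T(0)=0$, so all components of $\hat{\Ocal}_{(\lambda,\lambda')}$ vanish identically, contradicting the requirement in \Cref{DefFieldOperator}(iii) that the field operators be non-zero, so both $\lambda$ and $\lambda'$ are single rows, which is (i). The fermionic case is the same argument with $\tilde{s}_p\otimes\tilde{s}_q$ in place of $s_p\otimes s_q$, giving both $\lambda$ and $\lambda'$ single columns, which is (ii). The main obstacle is the bookkeeping in the first two paragraphs -- justifying the reduction to a fundamental-representation base so that $\#\lambda$ and $\#\lambda'$ really are the relevant tensor degrees, and making the passage from the operator identity of \Cref{DefSpinStatistics} to the clean $\cc[S_p]\times\cc[S_q]$ statement fully rigorous, the conceptual subtlety being that the symmetric-group action relabels representation indices without touching operator composition order; the remaining representation theory is standard.
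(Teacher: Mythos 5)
Your proof is correct and takes essentially the same route as the paper: the paper's own proof is a one-line assertion that the claim follows directly from \Cref{LemmaTensorIrreps} and \Cref{DefSpinStatistics}, and you have simply supplied the standard group-algebra details (the total symmetrizer annihilates $c_\lambda$ unless $\lambda$ is a single row, the total antisymmetrizer unless it is a single column, combined with the non-vanishing requirement of \Cref{DefFieldOperator}) that the paper leaves implicit.
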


\begin{proof}
The statement is a direct consequence of \Cref{LemmaTensorIrreps} together with \Cref{DefSpinStatistics}.
\end{proof}

\subsection{Poincaré-irreducibility}\label{SecPoincareIrreducibility}
The behavior of field operators under Poincaré transformations is of particular interest for QFTs formulated on Poincaré geometry.
Fundamental quantum fields on Poincaré geometry are typically constructed by demanding that they transform irreducibly under Poincaré transformations.
\vs

\begin{defi}\label{DefPoincareIrreducible}
A projective quantum field $\hat{\Ocal} = (U,\rho,\{[\hat{\Ocal}([x])]\,|\,[x]\in\RP^4\})$ is \emph{Poincaré-irreducible}, if $U|_{\PO((1),(3,1))}$ is irreducible as a projective unitary $\PO((1),(3,1))$ representation.
The tuple $(U,\rho,\{[\hat{\Ocal}([x])]\,|\,[x]\in\RP^4\})$ is \emph{translation-invariant with respect to $\rho$}, if for all $[x]\in\RP^4, t\in\rr^4$: 
\begin{equation*}
[\rho([(0,t)])\hat{\Ocal}([x])] = [\hat{\Ocal}([x])]\,,
\end{equation*}
where $[(0,t)]\in\PGL_5\rr$ is defined by \Cref{EqPoincareTrafo55}.
\end{defi}
\vs

Poincaré-irreducibility is defined with respect to $U$, since translations on $\Xx((1),(3,1))$ can act non-trivially on the space-time arguments of field operators, even if $\hat{\Ocal}$ is translation-invariant with respect to $\rho$.

We can classify projective quantum fields, which are both irreducible in the sense of  \Cref{DefIrreducibleQuantumField} ($\tilde{\rho}$ irreducible as a $\pgl_5\rr$ representation) and Poincaré-irreducible.
For this we construct certain restricted projective quantum fields.
Given a projective quantum field $\hat{\Ocal}$, we set
\begin{equation*}
\hat{\Psi} = (U,\rho,\{[\hat{\Psi}([x])]\,|\,[x]\in\RP^4\})\,,
\end{equation*}
where the components of the representatives $\hat{\Psi}([x])$ are given by $\hat{\Psi}_\alpha([x]) = \hat{\Ocal}_\alpha([x])$ for $\alpha \notin J$ or $\hat{\Psi}_\alpha([x]) = 0$ for $\alpha\in J$. 
The index set $J\subset \{1,\dots,\dim\rho\}$ is defined, such that $\hat{\Psi}$ is translation-invariant with respect to $\rho$ and $J$ has minimal cardinality.
The field operators $\hat{\Psi}([x])$ are well-defined this way, based on \Cref{LemmaPoincareirredColumnOnly} as provided later.
We define the restriction of $\hat{\Psi}$ to a geometry such as $\Gg((1),(3,1))$ as in \Cref{DefFieldOperator}.
If $\hat{\Ocal} = \hat{\Ocal}_{(\lambda,\lambda')}$ is irreducible, we write $\hat{\Psi} = \hat{\Psi}_{(\lambda,\lambda')}$.
\vs

\begin{example}\label{ExampleCompositePsis}
Consider the irreducible projective quantum field $\hat{\Ocal}_{(\square,\emptyset)}$.
Its field operators are of the form $[\hat{\Ocal}_{(\square,\emptyset)}([x])] = [\hat{\Ocal}_{(\square,\emptyset),1}([x]),\dots,\hat{\Ocal}_{(\square,\emptyset),5}([x])]$.
We have
\begin{equation*}
[\hat{\Psi}_{(\square,\emptyset)}([x])] = [0,\hat{\Ocal}_{(\square,\emptyset),2}([x]),\dots,\hat{\Ocal}_{(\square,\emptyset),5}([x])]\,.
\end{equation*}
Analogously to the construction of the field operators $[\hat{\Ocal}_{(\lambda,\lambda')}^{\mathrm{Schur}}([x])]$ of a projective Schur quantum field from $[\hat{\Ocal}_{(\square,\emptyset)}([x])]$ as detailed in \Cref{SecIrreducibleQuantumFields}, we define $[\hat{\Psi}_{(\lambda,\lambda')}^{\mathrm{Schur}}([x])]$ from $[\hat{\Psi}_{(\square,\emptyset)}([x])]$.
We set
\begin{equation*}
\hat{\Psi}_{(\lambda,\lambda')}^{\mathrm{Schur}} := (U,\rho_{(\lambda,\lambda')},\{[\hat{\Psi}_{(\lambda,\lambda')}^{\mathrm{Schur}}([x])]\,|\, [x]\in\RP^4\})\,,
\end{equation*}
where $\tilde{\rho}_{(\lambda,\lambda')} = \CP^4_{(\lambda,\lambda')}$.
Upon restriction to Poincaré geometry $\Gg((1),(3,1))$, the set
\begin{equation*}
\{[\hat{\Psi}_{(\lambda,\lambda')}^{\mathrm{Schur}}([x])]\,|\,[x]\in\Xx((1),(3,1))\}
\end{equation*}
fulfils the unitary transformation property \eqref{EqLorentzTrafoVectorFieldBehavior}, such that $\hat{\Psi}_{(\lambda,\lambda')}^{\mathrm{Schur}}|_{\Gg((1),(3,1))}$ defines a restricted projective quantum field.
\end{example}
\vs

We can characterize irreducible, Poincaré-irreducible projective quantum fields as follows, which provides one of the main results of this work.
\vs

\begin{thm}\label{ThmSpinStatIrredPoincareIrred} 
Fermionic, irreducible, Poincaré-irreducible projective quantum fields behave under Poincaré transformations as Dirac fermions and obey spin-statistics.
If they are bosonic, they behave under Poincaré transformations as scalar or vector bosons, but violate spin-statistics as composite projective quantum fields.
\end{thm}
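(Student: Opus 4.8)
The plan is to pin down the admissible data $(U,\rho,\{[\hat{\Ocal}([x])]\})$ by combining the representation‑theoretic classification with the structure of the Poincaré embedding \eqref{EqPoincareTrafo55}, and then to read off the Poincaré transformation behaviour and the spin‑statistics property directly. First I would invoke \Cref{LemmaTensorIrreps}: irreducibility forces $\tilde\rho = \CP^4_{(\lambda,\lambda')}$ with $|\lambda|,|\lambda'|\le 4$; the forthcoming \Cref{LemmaPoincareirredColumnOnly} then restricts $(\lambda,\lambda')$ to pairs of single‑column Young diagrams, so that $\tilde\rho$ is built from the exterior powers $\Lambda^k\cc^5$, $k\in\{0,\dots,4\}$ (using $\Lambda^4\cc^5\cong(\cc^5)^*$ and that $\Lambda^5\cc^5$ is trivial), their duals, and at most the adjoint $\CP^4_{(\square,\square)}$, together with the trivial representation. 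This leaves only a short finite list to treat.

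Second I would make the Poincaré restriction explicit on this list. Decomposing $\cc^5=\cc e_0\oplus\cc^4$ as in \eqref{EqPoincareTrafo55}, a translation $[(0,t)]$ acts by $e_0\mapsto e_0+\sum_\mu t_\mu e_\mu$ and $e_\mu\mapsto e_\mu$, while $\Oo(3,1)$ acts on $\cc^4$ in the vector representation; from this one computes, for each candidate $\tilde\rho$, the $\rho$‑translation‑invariant subspace — the space carrying the restricted field operators $\hat{\Psi}$ of the construction of $\hat{\Psi}$ given above — together with its residual $\Oo(3,1)$‑content, which turns out to be a trivial, vector, or antisymmetric‑tensor Lorentz representation. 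Imposing Poincaré‑irreducibility of $U$, i.e.\ that the one‑particle states created by $\hat{\Psi}$ form a single irreducible projective unitary $\PO((1),(3,1))$‑representation, and feeding in the multiplier $\omega_+$ or $\omega_-$ via \Cref{PropCohomologyGeneratorsPGL5} and \Cref{LemmaMultiplierRestrictionAgreement}, pins the spin to $\tfrac12$ in the fermionic case and to $0$ or $1$ in the bosonic case; I would then verify that the fermionic data left standing reproduce exactly the Dirac transformation law (the four‑component $\hat{\Psi}$ together with the genuine double‑cover action of $U$ realising $(\tfrac12,0)\oplus(0,\tfrac12)$), and that the bosonic data left standing reproduce the scalar or the vector transformation law \eqref{EqLorentzTrafoVectorFieldBehavior}, the antisymmetric‑tensor case being ruled out by irreducibility of $U|_{\PO((1),(3,1))}$.

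Third, for the spin‑statistics assertions I would view $\hat{\Ocal}_{(\lambda,\lambda')}$ as the projective Schur quantum field $c_{(\lambda,\lambda')}(\hat{\Ocal}_{(\square,\emptyset)})$. A single‑column Young symmetrizer is a pure antisymmetrization of the field operator factors, which is precisely case (ii) of \Cref{DefSpinStatistics}; hence a fermionic, irreducible, Poincaré‑irreducible field obeys spin‑statistics. If the field is instead bosonic, \Cref{DefSpinStatistics}(i) would require the composite field operators to be fully symmetrized, which is impossible for a non‑trivial single‑column diagram — equivalently, by \Cref{PropCombinedIrrepsSpinStatisticsConsistent} obeying spin‑statistics as a boson forces single‑row diagrams — so spin‑statistics is violated, the residual scalar case (trivial $\rho$) being dealt with separately.

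I expect the main obstacle to be the identification of the fermionic case with genuine Dirac fermions. Since every finite‑dimensional $\overline{\PGL_5\rr}$‑representation factors through $\PGL_5\rr$, the representation $\rho$ is purely tensorial, so the half‑integer spin and the spinorial transformation behaviour can only come from the projective, infinite‑dimensional representation $U$ restricted to $\overline{\PO((1),(3,1))}$; the hard part will be showing rigorously — not merely by a component count — that a fermionic, Poincaré‑irreducible $U$ intertwined by a field whose index carries one of the admissible exterior‑power representations must realise precisely the Dirac field transformation law, which is where the ambient‑space mechanism behind Dirac's original construction has to be made precise. A secondary difficulty is confirming that the bosonic possibilities are exhausted by scalars and vectors and, in particular, excluding the antisymmetric‑tensor datum on grounds of Poincaré‑irreducibility.
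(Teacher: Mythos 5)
Your overall architecture coincides with the paper's: the proof there is literally ``\Cref{LemmaPoincareirredColumnOnly} reduces to $\hat{\Psi}_{(\lambda,\lambda')}|_{\Gg((1),(3,1))}$ with column-only $(\lambda,\lambda')$, one of them empty and $\#\lambda+\#\lambda'$ odd/even in the fermionic/bosonic case, then apply \Cref{PropCombinedIrrepsSpinStatisticsConsistent} and \Cref{PropPoincareTrafoBehaviorPsi}.'' Your first and third steps reproduce this faithfully — in particular your observation that a single-column Young symmetrizer is exactly the antisymmetrization of \Cref{DefSpinStatistics}(ii), while a non-trivial column can never satisfy the symmetrization of \Cref{DefSpinStatistics}(i), is precisely how the paper gets the spin-statistics clauses via \Cref{PropCombinedIrrepsSpinStatisticsConsistent}.

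The genuine gap is in your second step, and it sits exactly where the theorem's content (``behave as Dirac fermions'') is established. You read the residual index space $\cc^4$ as the complexified Lorentz \emph{vector} representation and accordingly find that the exterior powers carry ``trivial, vector, or antisymmetric-tensor'' Lorentz content — all integer spin. On that reading no fermionic, irreducible, Poincaré-irreducible field could exist at all (an integer-spin $\rho$-index is incompatible with the multiplier $\omega_-$ by the Poincaré classification), so you defer the half-integer spin and the Dirac transformation law entirely to the projective unitary representation $U$, and you concede that proving this ``not merely by a component count'' is the main open obstacle. The paper does something different and does \emph{not} leave this open: in the proof of \Cref{PropPoincareTrafoBehaviorPsi} it decomposes the complexified index space as $\cc^2\oplus\overline{\cc}^2$ under $\ofrak(3,1)_\cc\cong \slfrak_2\cc\oplus\overline{\slfrak_2\cc}$, so that $\Extalt^1$ and $\Extalt^3$ carry the $(1/2,0)\oplus(0,1/2)$ Dirac representation \emph{through $\tilde\rho$ itself}, $\Extalt^0,\Extalt^4$ the scalar and $\Extalt^2$ the vector; the fermionic/bosonic dichotomy then follows in \Cref{LemmaPoincareirredColumnOnly} by matching the resulting total spin $s(\lambda)+s(\lambda')$ against the multiplier of $U$. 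So the spinorial behaviour lives in the finite-dimensional representation $\rho$ in the paper's argument, not in $U$; your proposed mechanism is a different (and unproven) one, and your intermediate claim about the Lorentz content of the exterior powers is inconsistent with \Cref{PropPoincareTrafoBehaviorPsi}. A minor additional slip: \Cref{LemmaPoincareirredColumnOnly} also forces one of $\lambda,\lambda'$ to be empty, so the adjoint-type datum $(\square,\square)$ you keep ``at most'' on your list is already excluded and need not be ruled out by hand.
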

\vs

The proof of this theorem makes use of two more technical results, which we state and prove first.
For column-only $\lambda,\lambda'$, the field operators $[\hat{\Psi}_{(\lambda,\lambda')}([x])]$ transform under Poincaré transformations as specified by the following proposition.
\vs

\begin{prop}\label{PropPoincareTrafoBehaviorPsi}
Assume both $\lambda,\lambda'$ consist of a single column.
Then the Poincaré transformation $[(\Lambda,t)]\in \PO((1),(3,1))$ acts on the field operators $[\hat{\Psi}_{(\lambda,\lambda')}([x])]$ for $[x]\in\Xx((1),(3,1))$ as follows:
\begin{enumerate}[(i)]
\item for $\hat{\Psi}_{(\lambda,\emptyset)}$ and $\hat{\Psi}_{(\emptyset,\lambda)}$ with $\#\lambda\in \{0,4\}$ via the $(0,0)$ (scalar) representation, with $\#\lambda\in \{1,3\}$ via the $(1/2,0)\oplus (0,1/2)$ (Dirac fermion) representation, and with $\#\lambda=2$ via the $(1/2,1/2)$ (vector) representation  (the half-integer pairs indicating the spin of the Poincaré group representations),
\item for $\hat{\Psi}_{(\lambda,\lambda')} = \hat{\Psi}_{(\lambda,\emptyset)}\otimes \hat{\Psi}_{(\emptyset,\lambda')}$ via the tensor product of the $\PO((1),(3,1))$ representations of the two factors.
\end{enumerate}
If any $\#\lambda,\#\lambda'\geq 5$, the corresponding field operators are all zero.
\end{prop}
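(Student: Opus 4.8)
The plan is to unwind the definition of $\hat\Psi_{(\lambda,\lambda')}$ for single-column $\lambda,\lambda'$ in three moves: identify the $\overline{\PGL_5\rr}$-representation carried by $\hat\Ocal_{(\lambda,\lambda')}$, pass to the translation-invariant part that carries the non-vanishing components of $\hat\Psi_{(\lambda,\lambda')}$, and decompose the residual action of the Lorentz subgroup into Poincar\'e-group representations. The starting point is the standard fact (used throughout \Cref{SecIrreducibleQuantumFields}) that for a single-column Young diagram $\lambda$ of height $k=\#\lambda$ the Schur module $\CP^4_{(\lambda,\emptyset)}$ is the exterior power $\bigwedge\nolimits^{k}\cc^5$ with the induced $\pgl_5\rr$-action, and $\CP^4_{(\emptyset,\lambda')}=\bigwedge\nolimits^{k'}(\cc^5)^\ast$ for $k'=\#\lambda'$; concretely the representative field operators of $\hat\Ocal_{(\lambda,\emptyset)}$ are the totally antisymmetrized products of $k$ copies of the fundamental field operator components.

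Next I would determine the index set $J$ of \Cref{DefPoincareIrreducible} at the level of the fundamental. A translation $[(0,t)]$ acts on $\cc^5$ through the unipotent matrix fixing the four spatial basis vectors $e_1,\dots,e_4$ and sending $e_0\mapsto e_0+\sum_\mu t_\mu e_\mu$; since a unipotent operator has only the eigenvalue $1$, projective invariance under all $[(0,t)]$ forces genuine pointwise invariance, so the only surviving components of $\hat\Psi_{(\square,\emptyset)}$ are those along $e_1,\dots,e_4$, spanning a copy of $\cc^4$. Propagating this through the antisymmetrizing construction, $\hat\Psi_{(\lambda,\emptyset)}$ has surviving components spanning $\bigwedge\nolimits^{k}\cc^4$ of dimension $\binom{4}{k}$, and $\hat\Psi_{(\emptyset,\lambda')}$ correspondingly $\bigwedge\nolimits^{k'}(\cc^4)^\ast$; in particular both vanish identically as soon as the height reaches $5$, since $\bigwedge\nolimits^{k}\cc^4=0$ for $k\ge5$, which is the final assertion of the proposition. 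This step also discharges the well-definedness of $\hat\Psi_{(\lambda,\lambda')}$ claimed before \Cref{ExampleCompositePsis}.

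Then I would restrict the residual action to the Lorentz subgroup $\Oo(3,1)<\PO((1),(3,1))$, under which $\bigwedge\nolimits^{k}\cc^4$ is the $k$-th exterior power of the defining vector representation and $\bigwedge\nolimits^{k'}(\cc^4)^\ast\cong\bigwedge\nolimits^{4-k'}\cc^4$; decomposing these into irreducibles via $\mathfrak{so}(3,1)_\cc\cong\slfrak_2\cc\oplus\slfrak_2\cc$ with its $(j_1,j_2)$-labelling, and observing that translations act trivially on these components so that they lift to Poincar\'e-group representations, a direct computation identifies the summands with the labels $(0,0)$, $(1/2,0)\oplus(0,1/2)$ and $(1/2,1/2)$ for $\#\lambda\in\{0,4\}$, $\{1,3\}$, $\{2\}$ respectively, giving part~(i). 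For part~(ii) I would note that for a pair of single columns the Young symmetrizer $c_{(\lambda,\lambda')}$ is the composition of the antisymmetrizers on the covariant and on the contravariant tensor slots, which commute and act on disjoint slots, while the translation-invariant subspace of $\bigwedge\nolimits^{k}\cc^5\otimes\bigwedge\nolimits^{k'}(\cc^5)^\ast$ is the tensor product of the two separate translation-invariant subspaces; hence $\hat\Psi_{(\lambda,\lambda')}=\hat\Psi_{(\lambda,\emptyset)}\otimes\hat\Psi_{(\emptyset,\lambda')}$ and its Poincar\'e transformation is the tensor product of those of the two factors, again vanishing once $\#\lambda\ge5$ or $\#\lambda'\ge5$.

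The step I expect to be the main obstacle is the third one: carrying out the $(j_1,j_2)$-decomposition of the exterior powers of the Lorentz vector representation and of its dual and matching each irreducible summand to the asserted Poincar\'e-group spin representation, while keeping careful track of the full orthogonal group $\Oo(3,1)$ — that is, of parity — rather than only its identity component. A second, more structural point, which is exactly what is needed to make the $\hat\Psi$-construction and \Cref{LemmaPoincareirredColumnOnly} legitimate, is to verify that the minimal translation-trivializing index set $J$ selects precisely the antisymmetric subspace $\bigwedge\nolimits^{k}\cc^4$ and no complement; this rests on the unipotency argument above together with the observation that projective equality $[\rho([(0,t)])\hat\Ocal([x])]=[\hat\Ocal([x])]$ of operator-valued tempered distributions forces the underlying matrix identity once the surviving components are sufficiently independent, which is also where the minimality of $J$ enters.
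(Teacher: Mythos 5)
Your overall architecture --- identify the single-column Schur module as an exterior power, cut down to the translation-invariant components via the unipotency of $\rho([(0,t)])$, then decompose the residual Lorentz action, with part (ii) handled by factorizing the Young symmetrizer over the two groups of slots --- coincides with the paper's, and your argument for the vanishing when $\#\lambda\geq 5$ is the same in substance. The step you yourself flag as the main obstacle is, however, exactly where the proposal breaks. You identify the four surviving components with the \emph{defining vector representation} of $\Oo(3,1)$ and then assert that a ``direct computation'' of its exterior powers returns the labels $(0,0)$, $(1/2,0)\oplus(0,1/2)$, $(1/2,1/2)$, $(1/2,0)\oplus(0,1/2)$, $(0,0)$ for $k=0,\dots,4$. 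It does not: for the irreducible vector module $(1/2,1/2)\cong\cc^2\otimes\overline{\cc}^2$ of $\slfrak_2\cc\oplus\slfrak_2\cc$ one has $\Extalt^1\cong(1/2,1/2)$, $\Extalt^2\cong(1,0)\oplus(0,1)$ and $\Extalt^3\cong(1/2,1/2)$, so no Dirac summand and no half-odd-integer spin ever appears. Carried out honestly, your route therefore contradicts the statement you are proving (and would make the fermionic half of \Cref{ThmSpinStatIrredPoincareIrred} impossible); the asserted outcome of the decisive computation is incompatible with your own setup.

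The paper's proof instead takes the restriction of the relevant $\cc^4$ to $\ofrak(3,1)_\cc\cong\slfrak_2\cc\oplus\overline{\slfrak_2\cc}$ to be the \emph{direct sum} $\cc^2\oplus\overline{\cc}^2$, i.e.\ the Dirac module $(1/2,0)\oplus(0,1/2)$, and then applies $\Extalt^n(V\oplus W)\cong\bigoplus_{p}\Extalt^pV\otimes\Extalt^{n-p}W$ together with $\Extalt^0\cc^2\cong\Extalt^2\cc^2\cong\cc$ and $\Extalt^p\cc^2\cong 0$ for $p\geq 3$; with that input the list in (i) and the vanishing for $\#\lambda\geq 5$ drop out immediately. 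The missing idea in your proposal is precisely this identification of the surviving four components as the sum of the two Weyl modules rather than their tensor product --- everything else (the unipotency argument for the minimal index set $J$, the tensor-product structure in (ii)) is compatible with the paper's proof. A secondary point you should also not gloss over: for the dual slot $(\emptyset,\lambda')$ the translation matrix $(\rho([(0,t)])^{-1})^T$ is upper unipotent and its literal fixed subspace in $(\cc^5)^\ast$ is one-dimensional, so your claim that the surviving components of $\hat{\Psi}_{(\emptyset,\lambda')}$ span $\Extalt^{k'}(\cc^4)^\ast$ needs a separate justification rather than the same argument as for the covariant slot.
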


\begin{proof}
Translations $[(0,t)]$ act by construction only on the space-time arguments of the $[\hat{\Psi}_{(\lambda,\lambda')}([x])]$.
Lorentz transformations $[(\Lambda,0)]$ can act non-trivially via $\rho_{(\lambda,\lambda')}$.
On Lie algebra level, the action of the complexification $\ofrak(3,1)_\cc\cong \slfrak_2\cc\oplus \overline{\slfrak_2\cc}$ is to be considered.
We note that $\tilde{\Extalt}^n(0\oplus \cc^2\oplus \overline{\cc}^2)_{0\oplus \slfrak_2\cc \oplus \slfrak_2\cc}\cong \tilde{\Extalt}^n(\cc^2\oplus \overline{\cc}^2)_{\slfrak_2\cc\oplus \slfrak_2\cc}$, where $\overline{\cc}^2_{\slfrak_2\cc}$ is the complex conjugate of the fundamental representation $\cc^2_{\slfrak_2\cc}$.
Exterior powers of direct sums of Lie algebra representations decompose by basic module theory as
\begin{equation*}
\tilde{\Extalt}^n (V\oplus W)\cong \bigoplus_{p=0}^n \tilde{\Extalt}^p V\totimes \tilde{\Extalt}^{n-p} W\,,
\end{equation*}
such that
\begin{equation*}
\tilde{\Extalt}^n \big(\cc^2_{\slfrak_2\cc}\oplus \overline{\cc}^2_{\slfrak_2\cc}\big) \cong \bigoplus_{p=0}^n \tilde{\Extalt}^p \cc^2_{\slfrak_2\cc}\totimes \tilde{\Extalt}^{n-p} \overline{\cc}^2_{\slfrak_2\cc}\,.
\end{equation*}
We have that $\Extalt^0\cc^2_{\slfrak_2\cc} \cong \Extalt^2\cc^2_{\slfrak_2\cc} \cong \cc$ is the trivial representation and $\Extalt^p\cc^2\cong 0$ for $p\geq 3$ is trivial as a vector space, which yields the claim.
\end{proof}

\begin{lem}\label{LemmaPoincareirredColumnOnly}
Let $\hat{\Ocal}_{(\lambda,\lambda')}=(U,\rho_{(\lambda,\lambda')},\{[\hat{\Ocal}_{(\lambda,\lambda')}([x])]\,|\,[x]\in\RP^4\})$ be an irreducible projective quantum field, which is Poincaré-irreducible.
Then,
\begin{equation*}
\hat{\Ocal}_{(\lambda,\lambda')}|_{\Gg((1),(3,1))} = \hat{\Psi}_{(\lambda,\lambda')}|_{\Gg((1),(3,1))}
\end{equation*}
for a pair $(\lambda,\lambda')$ of column-only Young diagrams, where one of the Young diagrams $\lambda,\lambda'$ must be empty.
$\hat{\Ocal}_{(\lambda,\lambda')}$ is fermionic (bosonic), if and only if $\#\lambda+\#\lambda'$ is uneven (even).
\end{lem}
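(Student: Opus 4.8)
The plan is to combine three results already available — the classification of irreducible $\pgl_5\rr$-modules (\Cref{LemmaTensorIrreps}), the Poincaré transformation behavior of the fields $\hat\Psi_{(\lambda,\lambda')}$ for column-only diagrams (\Cref{PropPoincareTrafoBehaviorPsi}), and the coincidence of multipliers under restriction (\Cref{LemmaMultiplierRestrictionAgreement}, together with \Cref{PropCohomologyGeneratorsPGL5}) — with the covariance law \eqref{EqGlobalCovariance} for the translation subgroup of $\PO((1),(3,1))$ as the main tool. First, since $\hat\Ocal_{(\lambda,\lambda')}$ is irreducible, \Cref{LemmaTensorIrreps} gives $\tilde\rho_{(\lambda,\lambda')}=\CP^4_{(\lambda,\lambda')}$ for Young diagrams with $|\lambda|,|\lambda'|\leq 4$. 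Applying \eqref{EqGlobalCovariance} with $[g]\in\PO((1),(3,1))$ and varying $[x]$, any operatorial relation $\sum_\alpha c_\alpha\hat\Ocal_{(\lambda,\lambda'),\alpha}([x])\equiv 0$ would span a $\tilde{\rho}^{\mathrm{T}}$-invariant space of null vectors; irreducibility of $\tilde\rho$ forces it to be trivial, so the field operator components are linearly independent. The translations $[(0,t)]$ act on $\CP^4_{(\lambda,\lambda')}$ as a commuting family of unipotent operators ($\rho$ is polynomial and they are lower unitriangular in $\PGL_5\rr$), so in the semistandard tableau basis the translation-invariant subspace $V^{\mathrm{transl}}$ is a canonically determined coordinate subspace; by the linear independence just noted, a tuple is translation-invariant with respect to $\rho$ exactly when its nonzero components lie in $V^{\mathrm{transl}}$, which makes $J$ unique and $\hat\Psi_{(\lambda,\lambda')}$ well-defined.

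\emph{Agreement on Poincaré geometry.} For $f\in C^\infty(\RP^4)$ supported in $\Xx((1),(3,1))$ and $t\in\rr^4$, the covariance law \eqref{EqGlobalCovariance} for $[(0,t)]$, using the triviality of projective Jacobians, gives $U([(0,t)])\hat\Ocal_{(\lambda,\lambda'),\alpha}(f,\Xx((1),(3,1)))U^\dagger([(0,t)])=\sum_\beta\rho_{\alpha\beta}([(0,-t)])\,\hat\Ocal_{(\lambda,\lambda'),\beta}(f_{-t},\Xx((1),(3,1)))$ with $f_{-t}$ the translate of $f$. Unitarity of $U$ keeps the $\Dcal$-operator norm of the left side constant in $t$, whereas for $\alpha\in J$ the coefficients $\rho_{\alpha\beta}([(0,-t)])$ are polynomials in $t$ of positive degree. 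Poincaré-irreducibility of $U$ confines the joint spectrum of the translations to a single Lorentz orbit (Wigner--Mackey), so the operators $\hat\Ocal_{(\lambda,\lambda'),\beta}(f_{-t},\Xx((1),(3,1)))$ displace that spectrum rigidly and cannot conspire to absorb the polynomial growth; by the boundedness criterion of \Cref{DefFieldOperator} this forces $\hat\Ocal_{(\lambda,\lambda'),\alpha}(f,\Xx((1),(3,1)))=0$ for all $\alpha\in J$, i.e.\ $\hat\Ocal_{(\lambda,\lambda')}|_{\Gg((1),(3,1))}=\hat\Psi_{(\lambda,\lambda')}|_{\Gg((1),(3,1))}$.

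\emph{Shape of $(\lambda,\lambda')$ and parity.} By the previous step the restricted field transforms under $\PO((1),(3,1))$ in the representation carried by $V^{\mathrm{transl}}$, and for $U|_{\PO((1),(3,1))}$ to be irreducible the covariance law forces this representation to be irreducible as a representation of the full Lorentz group $\Oo(3,1)$ (any $\Oo(3,1)$-invariant splitting of $V^{\mathrm{transl}}$ would, via \eqref{EqGlobalCovariance}, split $\Hcal$ into non-trivial $U|_{\PO((1),(3,1))}$-submodules). By \Cref{PropPoincareTrafoBehaviorPsi} this happens for column-only $(\lambda,\lambda')$ precisely when one of the diagrams is empty — one then obtains the scalar, Dirac or vector representations, each $\Oo(3,1)$-irreducible, whereas for two non-empty column-only diagrams part (ii) of \Cref{PropPoincareTrafoBehaviorPsi} gives a reducible tensor product — and a case analysis of the remaining diagrams (those containing a row of length $\geq 2$) shows, via the branching $\cc^5=\cc_{\mathrm{triv}}\oplus\cc^4_{\mathrm{vec}}$, that the corresponding $V^{\mathrm{transl}}$ always carries an $\Oo(3,1)$-reducible representation. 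Hence $(\lambda,\lambda')$ is column-only with one diagram empty. Finally, putting $n=\#\lambda+\#\lambda'$, \Cref{PropPoincareTrafoBehaviorPsi}(i) identifies the Poincaré behavior as scalar ($n\in\{0,4\}$), vector ($n=2$) or Dirac ($n\in\{1,3\}$); compatibility of \eqref{EqGlobalCovariance} then forces the multiplier of $U|_{\PO((1),(3,1))}$ — hence, by \Cref{LemmaMultiplierRestrictionAgreement}, of $U$ on all of $\PGL_5\rr$ — to be $\omega_+$ in the integer-spin cases and $\omega_-$ in the half-integer (Dirac) case, i.e.\ $\hat\Ocal_{(\lambda,\lambda')}$ is bosonic iff $n$ is even and fermionic iff $n$ is odd.

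The main obstacle is the vanishing step of the second part: turning the heuristic ``polynomial growth in $t$ versus constant operator norm'' into a rigorous argument genuinely requires Poincaré-irreducibility of $U$ to pin the spectral support of the translation subgroup, so that the boundedness criterion of \Cref{DefFieldOperator} constrains each $\rho$-component separately and rules out cancellations among the translated smeared operators. A secondary difficulty is the case analysis for diagrams with rows, which needs the $\Oo(3,1)$-decomposition of the translation-invariant part of the corresponding Schur module, information not directly supplied by \Cref{PropPoincareTrafoBehaviorPsi} and to be extracted by hand from the branching of $\cc^5$.
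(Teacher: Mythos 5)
Your outline tracks the paper's: (a) translation-invariance with respect to $\rho$ kills every component carrying an index equal to $1$, giving $\hat{\Ocal}_{(\lambda,\lambda')}|_{\Gg((1),(3,1))}=\hat{\Psi}_{(\lambda,\lambda')}|_{\Gg((1),(3,1))}$; (b) Lorentz-irreducibility of what remains forces $(\lambda,\lambda')$ column-only with one diagram empty; (c) the parity statement follows from \Cref{PropPoincareTrafoBehaviorPsi} and the multiplier/spin matching via \Cref{LemmaMultiplierRestrictionAgreement}. Steps (b) and (c) are in substance the paper's argument; the paper makes (b) precise with the $\GL_4\rr\downarrow\Oo(3,1)$ branching rule (the formal sum $\Delta$), which is exactly the case analysis you defer to ``by hand''.

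The genuine gap is where you flag it: the vanishing step. The identity $U([(0,t)])\hat{\Ocal}_\alpha(f)U^\dagger([(0,t)])=\sum_\beta\rho_{\alpha\beta}([(0,-t)])\,\hat{\Ocal}_\beta(f_{-t})$ is just the covariance law \eqref{EqGlobalCovariance} and holds for every projective quantum field, translation-invariant or not, so it cannot by itself yield a contradiction. Your mechanism --- constant operator norm on the left versus polynomial growth of $\rho_{\alpha\beta}([(0,-t)])$ on the right --- fails because the norms $\|\hat{\Ocal}_\beta(f_{-t})\|_\Dcal$ need not be bounded uniformly in $t$: inverting the covariance law gives $\hat{\Ocal}_\gamma(f_{-t})=\sum_\beta\rho_{\gamma\beta}([(0,t)])\,U([(0,t)])\hat{\Ocal}_\beta(f)U^\dagger([(0,t)])$, so these norms themselves grow polynomially precisely when the field is \emph{not} translation-invariant with respect to $\rho$; the boundedness criterion of \Cref{DefFieldOperator} is per test function and provides no uniform bound over the family $\{f_{-t}\}$, and ``rigid displacement of the joint spectrum'' does not repair this. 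The paper's route is different: Poincar\'e-irreducibility makes $U|_{\PO((1),(3,1))}$ an irreducible projective unitary representation, hence induced from the little group $\Oo(3,1)$ \`a la Mackey, and it is this induced structure that forces $[\rho([(0,t)])\hat{\Ocal}([x])]=[\hat{\Ocal}([x])]$; the vanishing of all components with an index $1$ then follows purely algebraically from the unipotent action of translations on the Schur module, with no norm estimate needed. A secondary soft spot: your parenthetical claim that an $\Oo(3,1)$-invariant splitting of $V^{\mathrm{transl}}$ would split $\Hcal$ into $U|_{\PO((1),(3,1))}$-submodules is not justified absent a cyclicity assumption; the paper instead imposes irreducibility at the level of the finite-dimensional restriction $\tilde{\rho}_{(\lambda,\lambda')}|_{\pfrak(0\oplus\ofrak(3,1))}$.
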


\begin{proof}
Let $p:=\#\lambda$ and $q:=\#\lambda'$, and set
\begin{equation*}
[(0,\tilde{t})]:=\pfrak\left(\begin{matrix}
0 & \\
\tilde{t} & 0_{4\times 4}
\end{matrix}\right)\in \pofrak((1),(3,1))
\end{equation*}
which describes a translation in $\Xx((1),(3,1))$ on Lie algebra level, $\tilde{t}\in\rr^4$.
Complexified translations act analogously.
$\hat{\Ocal}_{(\lambda,\lambda')}$ needs to be translation-invariant with respect to $\rho_{(\lambda,\lambda')}$ by Poincaré-irreducibility, since the projective unitary irreducible representation of the Poincaré group $\PO((1),(3,1))$ are induced from the Lorentz group $\Oo(3,1)$ as the corresponding little group.
By the construction of $\rho_{(\lambda,\lambda')}$, the components $\hat{\Ocal}_{(\lambda,\lambda'),\alpha}([x])$ of the field operator representatives can be equivalently indexed by $\alpha_1,\dots,\alpha_p,\beta_1,\dots,\beta_q=1,\dots,5$ for some $p,q\in\nn$ instead of $\alpha$ (analogously to $(p,q)$-tensors), which obey partial symmetry, partial anti-symmetry upon permutations, as dictated by $\lambda,\lambda'$.
We choose the index ordering, such that $\pgl_5\rr$-elements act via matrix multiplication with respect to each of the indices $\alpha_1,\dots,\alpha_p,\beta_1,\dots,\beta_q$.
The demanded triviality of the translation action requires $\hat{\Ocal}_{(\lambda,\lambda'),\alpha_1\dots\alpha_p,\beta_1\dots\beta_q}([x])=0$ for all $[x]\in\Xx((1),(3,1))$, whenever at least one of the indices $\alpha_1,\dots,\alpha_p,\beta_1,\dots,\beta_q$ is 1.
Therefore, $\hat{\Ocal}|_{\Gg((1),(3,1))}=\hat{\Psi}_{(\lambda,\lambda')}|_{\Gg((1),(3,1))}$, such that $\hat{\Psi}$ is translation-invariant with respect to $\rho_{(\lambda,\lambda')}$.

We next verify Poincaré-irreducibility with respect to Lorentz transformations.
We set
\begin{equation*}
[(\tilde{\Lambda},0)]:=\pfrak\left(\begin{matrix}
0 &\\
& \tilde{\Lambda}
\end{matrix}\right)\in \pofrak((1),(3,1))
\end{equation*}
for $\tilde{\Lambda}\in\ofrak(3,1)$ and note that $\tilde{\rho}_{(\square,\emptyset)}([(\tilde{\Lambda},0)])$ acts on elements of the space 
\begin{equation*}
C:=\{[0,c_1,\dots,c_4]\neq [0]\,|\, c_\mu\in\cc\}\subset \CP^4
\end{equation*}
as for a Dirac fermion via the massive spin $(1/2,0)\oplus (0,1/2)$ Poincaré representation, see \Cref{PropPoincareTrafoBehaviorPsi}.
This describes also the action of $\tilde{\rho}_{(\square,\emptyset)}([(\tilde{\Lambda},0)])$ on $[\hat{\Psi}_{(\square,\emptyset)}([x])]$ and takes the equivalence of $\tilde{\rho}$ as a complex $\pgl_5\rr$ representation with the corresponding $\pgl_5\cc$ representation into account. 

The general representations $\tilde{\rho}_{(\lambda,\lambda')}$ are constructed via Young symmetrization and dualization from $\tilde{\rho}_{(\square,\emptyset)}$.
The Lie subalgebra $\pfrak(0\oplus \gl_4\rr)$ acts irreducibly on the Schur module $C_{(\lambda,\lambda')}$ of $C$ via $\tilde{\rho}_{(\lambda,\lambda')}$.
Restricting to the Lorentz subalgebra $\pfrak(0\oplus \ofrak(3,1))$, the $\tilde{\rho}_{(\lambda,\lambda')}|_{\pfrak(0\oplus \ofrak(3,1))}$ act not necessarily irreducibly on $C_{(\lambda,\lambda')}$.
Let $\lambda/\mu$ be the sum of those Young diagrams $\nu$ for which $\nu\cdot\mu$ (the product of Young diagrams describing the corresponding tensor product of representations) contains multiples of $\lambda$ upon its decomposition into Schur modules.
Define the formal sum
\begin{equation*}
\ytableausetup{centertableaux,smalltableaux}
\Delta =1 + \, \begin{ytableau}
       {} & {}
\end{ytableau} + \begin{ytableau}
       {} & {} & {} & {}
\end{ytableau} + \begin{ytableau}
       {} & {}\\
       {} & {}
\end{ytableau}+ \dots \,.
\end{equation*}
The branching rule for general $\GL_n\rr\downarrow \Oo(n)$~\cite{Bekaert:2006py, King:1975vf} can be applied, since $\glfrak_n\rr\cong \pfrak(0\oplus\glfrak_n\rr)$ and $\ofrak(n)\cong \pfrak(0\oplus \ofrak(n))$ as Lie algebras.
This yields the following decomposition of the restriction of $\tilde{\rho}_{(\lambda,\lambda')}$ to $\pfrak(0\oplus \ofrak(3,1))$ into irreducible $\pfrak(0\oplus \ofrak(3,1))$ representations:
\begin{align}
\tilde{\rho}_{(\lambda,\lambda')}|_{\pfrak(0\oplus \ofrak(3,1))}\cong&\; \tilde{\rho}_{(\lambda/\Delta,\lambda'/\Delta)}^{\pfrak(0\oplus\ofrak(3,1))}\nonumber\\
= &\; \tilde{\rho}_{(\lambda/\Delta,\emptyset)}^{\pfrak(0\oplus\ofrak(3,1))}\totimes \tilde{\rho}_{(\emptyset,\lambda'/\Delta)}^{\pfrak(0\oplus\ofrak(3,1))}\nonumber\\
= &\; \big(\tilde{\rho}_{(\lambda,\emptyset)}^{\pfrak(0\oplus\ofrak(3,1))} \oplus \tilde{\rho}_{(\lambda/\{2\},\emptyset)}^{\pfrak(0\oplus\ofrak(3,1))}\oplus \tilde{\rho}_{(\lambda/\{4\},\emptyset)}^{\pfrak(0\oplus\ofrak(3,1))}\oplus \ldots\big) \nonumber\\
&\qquad\qquad \totimes \big(\tilde{\rho}_{(\emptyset,\lambda')}^{\pfrak(0\oplus\ofrak(3,1))} \oplus \tilde{\rho}_{(\emptyset,\lambda'/\{2\})}^{\pfrak(0\oplus\ofrak(3,1))}\oplus \tilde{\rho}_{(\emptyset,\lambda'/\{4\})}^{\pfrak(0\oplus\ofrak(3,1))}\oplus \ldots\big)\,,\label{EqBranchingO31}
\end{align}
where the representations on the right-hand side denote Young-symmetrized, complex tensor product representations of $\pfrak(0\oplus \ofrak(3,1)) \curvearrowright C$, which are equivalent to those constructed from $\pfrak(0\oplus \ofrak(3,1))_\cc\curvearrowright C$.
That is, they are given by $\pfrak(0\oplus \ofrak(3,1))_\cc$ acting via matrix multiplication on each of the tensor product factors in
\begin{align*}
&C_{(\lambda/\Delta,\lambda'/\Delta)}= (C_{(\lambda,\emptyset)}\oplus C_{(\lambda/\{2\},\emptyset)}\oplus C_{(\lambda/\{4\},\emptyset)}\oplus \ldots)\nonumber\\
&\qquad\qquad\qquad\qquad\qquad\qquad \otimes (C_{(\emptyset,\lambda')}\oplus C_{(\emptyset,\lambda'/\{2\})}\oplus C_{(\emptyset,\lambda'/\{4\})}\oplus \ldots)\,.
\end{align*}
The representation $\tilde{\rho}_{(\lambda,\lambda')}|_{\pfrak(0\oplus \ofrak(3,1))}$ comes from a Poincaré-irreducible projective quantum field, if and only if the decomposition \eqref{EqBranchingO31} on the right-hand side is into a single tensor product and not a sum of them.
This is the case, if and only if $\lambda$ and $\lambda'$ are empty or consist of columns only, such that $\lambda/\{2\},\lambda'/\{2\},\lambda/\{4\},\lambda'/\{4\},\dots = \emptyset$.
Moreover, if both $\lambda,\lambda'\neq \emptyset$, then $\tilde{\rho}_{(\lambda,\lambda')}|_{\pfrak(0\oplus \ofrak(3,1))}$ is again reducible, since in that case direct sums of the different spin representations appear in the direct sum decomposition of the tensor product of spin representations.
Therefore, one of $\lambda,\lambda'$ must be empty.

The last statement on $\#\lambda+\#\lambda'$ is a consequence of \Cref{PropPoincareTrafoBehaviorPsi} together with the classification of irreducible Poincaré group representations.
To see this, denote by $s(\lambda)$ the total spin of the representation $\tilde{\rho}_{(\lambda,\emptyset)}|_{\pfrak(0\oplus \ofrak(3,1))}$, i.e., $s(\lambda)=0$ for the $(0,0)$ representation, $s(\lambda)=1/2$ for the $(1/2,0)\oplus(0,1/2)$ representation and $s(\lambda)=1$ for the $(1/2,1/2)$ representation.
The Poincaré Lie algebra representation $\tilde{\rho}_{(\lambda,\lambda')}|_{\pofrak((1),(3,1))}$ has total spin $s(\lambda)+s(\lambda')$, as can be seen with the Littlewood-Richardson rule to decompose the tensor product 
\begin{equation*}
\tilde{\rho}_{(\lambda,\lambda')}|_{\pofrak((1),(3,1))} = \tilde{\rho}_{(\lambda,\emptyset)}|_{\pofrak((1),(3,1))}\totimes \tilde{\rho}_{(\emptyset,\lambda')}|_{\pofrak((1),(3,1))}\,.
\end{equation*}
By the classification of the irreducible projective unitary representations of the Poincaré group~\cite{Weinberg:1995mt}, a fermionic multiplier of $U$ is consistent with the spin of $\tilde{\rho}_{(\lambda,\lambda')}|_{\pofrak((1),(3,1))}$, if and only if $s(\lambda)+s(\lambda')$ is a half-integer.
Explicit computation with \Cref{PropPoincareTrafoBehaviorPsi} yields that this is the case, if and only if $\#\lambda+\#\lambda'$ is uneven.
The same argument leads to the claim for a bosonic multiplier of $U$ for $\#\lambda+\#\lambda'$ even.
\end{proof}

The proof of \Cref{ThmSpinStatIrredPoincareIrred} is now straight-forward.

\begin{proof}[Proof of \Cref{ThmSpinStatIrredPoincareIrred}]
By \Cref{LemmaPoincareirredColumnOnly}, irreducible, Poincaré-irreducible projective quantum fields are upon restriction to Poincaré geometry by of the form $\hat{\Psi}_{(\lambda,\lambda')}|_{\Gg((1),(3,1))}$ for column-only Young diagrams $\lambda,\lambda'$ with the additional constraints that one of $\lambda,\lambda'$ must be empty.
In the fermionic case $\#\lambda+\#\lambda'$ must be uneven, in the bosonic case even.
The application of Propositions \ref{PropCombinedIrrepsSpinStatisticsConsistent} and \ref{PropPoincareTrafoBehaviorPsi} then yields the claim.
\end{proof}

A corollary of \Cref{ThmSpinStatIrredPoincareIrred} finally characterizes $\hat{\Psi}|_{\Gg((1),(3,1))}$ as restricted projective quantum fields.
\vs

\begin{cor}
Constructed from an arbitrary projective quantum field $\hat{\Ocal}$, $\hat{\Psi}|_{\Gg((1),(3,1))}$ is a restricted projective quantum field.
\end{cor}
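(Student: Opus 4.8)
The goal is to check, condition by condition, that
$\hat{\Psi}|_{\Gg((1),(3,1))}=(U|_{\PO((1),(3,1))},\rho,\{[\hat{\Psi}([x])]\,|\,[x]\in\Xx((1),(3,1))\})$ meets the requirements of \Cref{DefFieldOperator} with the ambient geometry replaced by $\Gg((1),(3,1))$. First I would recall why the tuple $\hat{\Psi}$ is well-defined at all: decomposing the finite-dimensional $\pgl_5\rr$-representation $\tilde{\rho}$ into Schur modules $\CP^4_{(\mu,\mu')}$ and passing to the associated tableau bases, a translation generator $[(0,\tilde t)]\in\pofrak((1),(3,1))$ annihilates precisely the basis vectors none of whose tensor indices equals $1$. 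Hence the components that have to be set to zero are exactly those carrying at least one index $1$; this determines $J$ uniquely and of minimal cardinality, makes $\hat{\Psi}$ translation-invariant with respect to $\rho$, and on the surviving indices $\hat{\Psi}$ coincides with $\hat{\Ocal}$ by construction. This is the computation already carried out inside the proof of \Cref{LemmaPoincareirredColumnOnly}.

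Second I would verify the generalized unitary transformation behaviour \eqref{EqGlobalCovariance} for $[g]\in\PO((1),(3,1))$ and $[x]\in\Xx((1),(3,1))$, using that then $[g\cdot x]\in\Xx((1),(3,1))$. Writing $[g]=[(\Lambda,t)]$ as in \eqref{EqPoincareTrafo55} and factoring $[g]=[(\Lambda,0)]\,[(1,\Lambda^{-1}t)]$, it suffices to treat Lorentz transformations and translations separately. For a translation the claim is that $\hat{\Psi}$ transforms by a pure shift of the space-time argument, which is the content of translation-invariance of $\hat{\Psi}$ with respect to $\rho$ established above. For a Lorentz transformation $[(\Lambda,0)]$ the matrix $\rho([(\Lambda^{-1},0)])$ is block-diagonal with respect to the splitting into the span $W$ of the surviving basis vectors and its complement — Lorentz transformations normalise the translation subgroup, hence fix the translation-fixed subspace $W$ — so on the surviving components $\hat{\Psi}$ transforms among themselves exactly as in \eqref{EqLorentzTrafoVectorFieldBehavior}, while the $\rho$-images of the vanishing components contribute nothing. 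Combining the two cases gives \eqref{EqGlobalCovariance} on all of $\PO((1),(3,1))$.

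Finally I would dispatch items (i)--(iii) of \Cref{DefFieldOperator}. Item (i): the restriction of the projective unitary $\PGL_5\rr$-representation $U$ to $\PO((1),(3,1))$ is again a projective unitary representation. Item (ii): $\rho$ is left unchanged, hence still a finite-dimensional complex $\overline{\PGL_5\rr}$-representation with $\dim\rho\neq 0$. Item (iii): each component $\hat{\Psi}_\alpha([x])$ is either equal to $\hat{\Ocal}_\alpha([x])$, thereby inheriting the common dense domain $\Dcal$, the invariances $U([g])\Dcal\subset\Dcal$ and $\hat{\Psi}_\alpha([x])\Dcal\subset\Dcal$, and the boundedness on $\Dcal$ of the smeared representatives $\hat{\Psi}_\alpha(f,\Xx((1),(3,1)))$ for $f\in C^\infty(\Xx((1),(3,1)))$, or else identically zero, for which all these properties are trivial; moreover the tuple $[\hat{\Psi}([x])]$ is non-zero because $J$ has minimal cardinality, and it is still defined modulo $C^\infty$ prefactors. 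Hence $\hat{\Psi}|_{\Gg((1),(3,1))}$ is a restricted projective quantum field.

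The main obstacle is the translation part of the second step: one must show that the cross terms coupling the zeroed (``index-$1$'') components to the surviving ones drop out of the transformation law. This is exactly what the minimal choice of $J$ is designed to achieve — equivalently, the block-triangular form of $\rho|_{\PO((1),(3,1))}$ together with the vanishing of the index-$1$ components recorded in \Cref{LemmaPoincareirredColumnOnly} — and once it is in place the Lorentz part and the bookkeeping of \Cref{DefFieldOperator} are routine.
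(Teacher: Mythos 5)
Your overall route is the paper's: reduce to the Schur-module decomposition of $\rho$ and lean on the computation inside the proof of \Cref{LemmaPoincareirredColumnOnly}, then check the bookkeeping of \Cref{DefFieldOperator}. Items (i)--(iii) are indeed routine (though for the non-vanishing of the tuple the correct reason is that a unipotent subgroup always has a nonzero fixed vector in a nonzero finite-dimensional representation, not the minimality of $J$ per se).

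The genuine gap is in the translation step, exactly where you locate ``the main obstacle'', and the resolution you offer does not close it. Translation-invariance of $\hat{\Psi}$ with respect to $\rho$ is a statement about $\rho([(0,t)])$ acting on the tuple $[\hat{\Psi}([x])]$, i.e.\ it controls only the \emph{right-hand side} of \eqref{EqGlobalCovariance}. The left-hand side is already fixed by $\hat{\Ocal}$: for $\alpha\notin J$ one has $U([(1,t)])\hat{\Psi}_\alpha([x])U^\dagger([(1,t)])=U([(1,t)])\hat{\Ocal}_\alpha([x])U^\dagger([(1,t)])=\sum_\beta\rho_{\alpha\beta}([(1,-t)])\,\hat{\Ocal}_\beta([x+t])$, and the terms with $\beta\in J$ do \emph{not} drop out for arbitrary $\hat{\Ocal}$. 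In the fundamental representation, for instance, $J=\{1\}$ and $\rho_{\mu+1,1}([(1,-t)])=-t_\mu\neq 0$, so the surviving components pick up $-t_\mu\hat{\Ocal}_1([x+t])$, which vanishes only if $\hat{\Ocal}_1=0$. The block structure you invoke points the wrong way: the translation-fixed subspace $W$ is invariant, but the complementary subspace spanned by the zeroed ($J$-labelled) directions is not, so the projection $\hat{\Ocal}\mapsto\hat{\Psi}$ onto $W$ along that complement does not commute with $\rho([g^{-1}])$ for translations. In \Cref{LemmaPoincareirredColumnOnly} this issue is absent because Poincaré-irreducibility forces $\hat{\Ocal}_\beta=0$ for all $\beta\in J$, so that $\hat{\Psi}=\hat{\Ocal}$ upon restriction; the paper's own proof of the corollary simply reduces to that lemma and stays silent on the case $\hat{\Ocal}_\beta\neq 0$ for some $\beta\in J$. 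As written, your translation argument fails for a generic $\hat{\Ocal}$ with nonvanishing $J$-components; some additional input is needed (restricting to the situation of the lemma, or realising $\hat{\Psi}$ via a $\rho$-invariant quotient rather than a non-invariant complement).
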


\begin{proof}
Any finite-dimensional, complex $\overline{\PGL_5\rr}$ representation $\rho$ is reducible, factoring through a finite-dimensional representation of $\PGL_5\rr$.
The proof of \Cref{LemmaPoincareirredColumnOnly} shows the claim for the irreducible representations $\rho_{(\lambda,\lambda')}$.
Therefore, it holds also for $\hat{\Psi}$ constructed from arbitrary projective quantum fields $\hat{\Ocal}$.
\end{proof}

% ----------------------- new section ---------------------

\section{Further questions}\label{SecConclusions}
This work provided an axiomatic formulation of projective quantum fields on subgeometries of four-dimensional real projective geometry.
It has been based upon their well behavior under geometry deformations and limits.
The setting also allowed us to show that all projective correlators and their expectation values remain well-defined under such geometry transformations, even if their support can shift to space-time boundaries and other lower-dimensional space-time subspaces.
We explored a range of structural properties of projective quantum fields and related the framework to more traditional formulations of quantum fields on e.g.~Poincaré geometry.
\vs

The results of this work suggest a few further questions:
\begin{itemize}
\item Based on the absence of coordinate singularities, the description of space-time geometries as subgeometries of four-dimensional real projective geometry appears mathematically beneficial compared to the more common description of homogeneous space-time geometries and their deformations and limits via contractions. 
Arguing in favor of conjugacy limits, they appear naturally from canonical constructions in the framework of geometries, which is in contrast to contractions.
From the viewpoint of representation theory, the related Chabauty topology provides a natural setting to describe limits of subgroups and their representations.
Can one provide further, more physically motivated arguments in favor of conjugacy limits?
\item The well behavior of projective quantum fields under geometry limits also motivates the potential usability of projective quantum fields in holographic correspondences such as the AdS/CFT correspondence, which rest upon suitable maps between bulk and boundary fields.
It appears worthwhile to investigate in how far geometry limits of projective quantum fields can at least partly provide such maps as well as new insights into e.g.~the Poincaré limit of holographic correspondences.
In this regard, we note that projective geometry techniques have already been shown to facilitate the computation of boundary fields, see e.g.~\cite{Bekaert:2012vt, Bekaert:2013zya}.
\item Can the consideration of the asymptotic behavior of more common QFTs such as quantum electrodynamics yield insights into the physicality of the projective geometry setting explored in this work?
\item The definition of projective quantum fields did not implement global hyperbolicity of space-times, causality preservation or a variant of the spectrum condition.
How can these properties, which provide essential ingredients of the mathematical formulation of QFTs, be consistently incorporated into the general framework of this work?
\item The presented framework rests upon the homogeneity of space-time geometries.
Yet, continuity under geometry deformations implies that small geometry deformations alter projective correlators only little.
We thus expect our results to hold approximately also for inhomogeneous space-times, if deformations are small.
Proximity can be defined via the structure groups with respect to available metrics on the space of closed subgroups of $\PGL_5\rr$~\cite{biringer2018metrizing}.
What is a more general, suitable description of projective quantum fields on inhomogeneous, curved space-time geometries generalizing the one given in this work?
\item For composite projective quantum fields, \Cref{ThmSpinStatIrredPoincareIrred} indicates that the spin-statistics relation can at least partly be understood based on representation theory.
How far can a generalization of this go?
\end{itemize}
\vs

Addressing these questions could shed further light onto non-trivial physical implications of the mathematical constructions put forward in this work.

% ----------------------- new section ---------------------
\section*{Acknowledgments}
I acknowledge fruitful exchange on the present work with S.~Flörchinger, L.~Hahn, D.~Roggenkamp, M.~Salmhofer, S.~Schmidt and A.~Wienhard.
This work is funded by the Deutsche Forschungsgemeinschaft (DFG, German Research Foundation) under Germany’s Excellence Strategy EXC 2181/1–390900948 (the Heidelberg STRUCTURES Excellence Cluster) and the Collaborative Research Centre, Project-ID No.~273811115, SFB 1225 ISOQUANT.

% ----------------------- Appendices -----------------------
\section*{Appendices}

\appendix
\begin{appendices}

% ----------------------- new section ---------------------

\section{Proofs for the relation between conjugacy limits and contractions}\label{AppendixContractionsProofs}
This appendix provides the proofs of \Cref{LemmaNoContractionO3toO111} and \Cref{ThmMultipleContrations}.
We often omit the explicit notation of projective equivalence classes.

\subsection{Proof of \Cref{LemmaNoContractionO3toO111}}\label{AppendixPfLemmaNoContractionO3toO111}
To prove \Cref{LemmaNoContractionO3toO111} we first show two propositions.
\vs

\begin{prop}\label{PropMaxLieSubalgebraContractionInv}
The maximal Lie subalgebra of a Lie algebra $\hfrak$ with commutator $[\cdot,\cdot]$, which is invariant under contraction along a subalgebra $\tfrak$ of $\hfrak$, is isomorphic to a sum of $\tfrak$ and an Abelian subalgebra $\sfrak$ of $\hfrak$ with $[\tfrak,\sfrak]\subset \sfrak$.
The sum is not necessarily direct.
\end{prop}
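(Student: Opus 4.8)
The plan is to identify explicitly, in terms of the contraction data, which elements of $\hfrak$ survive as a Lie subalgebra after contraction along $\tfrak$, and then to read off its structure. Recall the setup: write $X = X_\tfrak + X_{\tfrak^c}$ for the unique decomposition relative to the complementary subspace $\tfrak^c$, set $\phi_\varepsilon(X) = X_\tfrak + \varepsilon X_{\tfrak^c}$, and let $[\cdot,\cdot]' = \lim_{\varepsilon\to 0}\phi_\varepsilon^{-1}[\phi_\varepsilon(\cdot),\phi_\varepsilon(\cdot)]$ be the contracted bracket. First I would compute $[X,Y]'$ blockwise: for $X,Y\in\tfrak$ one gets $[X,Y]' = [X,Y]\in\tfrak$ (since $\tfrak$ is a subalgebra, no rescaling occurs); for $X\in\tfrak$, $S\in\tfrak^c$ one gets $[X,S]' = ([X,S])_{\tfrak^c}$ (the $\tfrak$-component is killed by the factor $\varepsilon^{-1}\cdot\varepsilon$ mismatch — more precisely $\phi_\varepsilon^{-1}$ multiplies the $\tfrak$-part of $[\phi_\varepsilon X,\phi_\varepsilon S]=\varepsilon[X,S]$ by $1$ and the $\tfrak^c$-part by $\varepsilon^{-1}$, giving $\varepsilon[X,S]_\tfrak + [X,S]_{\tfrak^c}\to [X,S]_{\tfrak^c}$); and for $S,T\in\tfrak^c$ one gets $[S,T]'=0$ (the leading $\varepsilon^2$ beats the $\varepsilon^{-1}$).

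Next I would let $\mathfrak{m}$ be any Lie subalgebra of $\hfrak$ that is invariant under the contraction map, meaning $\phi_\varepsilon(\mathfrak m)=\mathfrak m$ for all $\varepsilon$ — equivalently $\mathfrak m = (\mathfrak m\cap\tfrak)\oplus(\mathfrak m\cap\tfrak^c)$ is graded by the decomposition. Write $\mathfrak t_0 := \mathfrak m\cap\tfrak$ and $\mathfrak s := \mathfrak m\cap\tfrak^c$. From the blockwise formulas above, for $\mathfrak m$ to be closed under $[\cdot,\cdot]'$ we need $[\mathfrak t_0,\mathfrak s]_{\tfrak^c}\subset\mathfrak s$ and automatically $[\mathfrak s,\mathfrak s]'=0$. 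To make $\mathfrak m$ as large as possible one takes $\mathfrak t_0 = \tfrak$ (all of $\tfrak$ is always contraction-invariant and closed), and $\mathfrak s$ a maximal subspace of $\tfrak^c$ on which the original bracket restricts so that $\mathfrak s$ is $\tfrak$-stable modulo... — here I would argue that the maximal such $\mathfrak m$ has $\mathfrak s$ equal to a subspace with $[\tfrak,\mathfrak s]\subset\mathfrak s + \tfrak$ and $[\mathfrak s,\mathfrak s]\subset \mathfrak s + \tfrak$, but then passing to the contracted bracket collapses the $\tfrak$-components, leaving $\mathfrak s$ Abelian and $[\tfrak,\mathfrak s]'\subset\mathfrak s$ inside the contracted Lie algebra $(\vect(\hfrak),[\cdot,\cdot]')$. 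Thus the maximal contraction-invariant subalgebra is, as an abstract Lie algebra, the semidirect-type sum $\tfrak \ltimes \mathfrak s$ with $\mathfrak s$ Abelian and $[\tfrak,\mathfrak s]\subset\mathfrak s$; and the sum need not be direct precisely because $[\tfrak,\mathfrak s]$ may be nonzero. I would close by noting that uniqueness/maximality of $\mathfrak s$ follows because any larger $\tfrak^c$-subspace would fail $\tfrak$-stability under $[\cdot,\cdot]'$, contradicting invariance.

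The main obstacle I anticipate is pinning down exactly what "maximal" should mean and proving genuine maximality rather than just exhibiting one large contraction-invariant subalgebra: one must show that \emph{every} contraction-invariant Lie subalgebra is contained in (or abstractly embeds into) one of this form, which requires carefully using that invariance under $\phi_\varepsilon$ forces the grading $\mathfrak m=(\mathfrak m\cap\tfrak)\oplus(\mathfrak m\cap\tfrak^c)$ and that closure under $[\cdot,\cdot]'$ then pins down the cross-bracket condition $[\tfrak,\mathfrak s]\subset\mathfrak s$. A secondary subtlety is the projective setting (everything is modulo scalars), but since scalars are central this does not affect the bracket computations; I would remark on this only in passing.
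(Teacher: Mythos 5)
Your blockwise formulas for the contracted bracket are correct and are exactly the computations that underlie the paper's argument, but the proof founders on the meaning of ``invariant under contraction.'' You take it to mean $\phi_\varepsilon(\mathfrak{m})=\mathfrak{m}$ for all $\varepsilon$ together with closure under $[\cdot,\cdot]'$. That condition is essentially vacuous for the maximality question: $\phi_\varepsilon$ is a bijection of $\vect(\hfrak)$, so $\hfrak$ itself satisfies it and is closed under $[\cdot,\cdot]'$ by definition of the contraction, whence the ``maximal invariant subalgebra'' in your sense would be all of $\hfrak$ --- contradicting the statement. The intended meaning, which the paper's proof and its use in the subsequent proposition on conjugacy limits make explicit, is that the contracted bracket \emph{agrees with the original one} on the subalgebra, i.e.\ $[X,Y]'=[X,Y]$ for all $X,Y$ in it, so that the subalgebra carries literally the same Lie algebra structure before and after the contraction (mirroring the pointwise-fixed subalgebra $\zfrak$ on the conjugacy-limit side).

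With that definition your own formulas finish the proof in two lines and force the stronger conclusions your version misses. For $X\in\tfrak$, $Y\in\sfrak\subset\tfrak^c$, the identity $[X,Y]'=[X,Y]_{\tfrak^c}$ equals $[X,Y]$ only if $[X,Y]\in\tfrak^c$, and closure of $\tfrak+\sfrak$ then gives $[X,Y]\in(\tfrak+\sfrak)\cap\tfrak^c=\sfrak$; so $[\tfrak,\sfrak]\subset\sfrak$ holds for the \emph{original} bracket, not merely ``after collapsing the $\tfrak$-components.'' For $X,Y\in\sfrak$, the identity $[X,Y]'=0$ equals $[X,Y]$ only if $[X,Y]=0$, so $\sfrak$ is Abelian \emph{in $\hfrak$}. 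Your argument only delivers $[\sfrak,\sfrak]'=0$, i.e.\ Abelian in the contracted algebra, which is automatic and too weak: the application (identifying $\tfrak$ with $\zfrak$ via the radical of the contracted algebra) needs $\sfrak$ to be an honest Abelian subalgebra of $\hfrak$ on which the bracket is unchanged. So the obstacle you flagged yourself --- pinning down what ``invariant'' and ``maximal'' mean --- is precisely where the proof breaks, and resolving it in the paper's sense removes the need for the inconclusive middle portion of your argument.
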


\begin{proof}
Let $\hfrak'$ with commutator $[\cdot,\cdot]'$ denote the contraction of $\hfrak$ along $\tfrak$, and $\tfrak^c$ be the vector space complement of $\tfrak$: $\vect(\hfrak) = \vect(\tfrak) \oplus \vect(\tfrak^c)$.
Clearly, $\tfrak$ is a subalgebra of $\hfrak$ and $\hfrak'$, which remains invariant under the contraction along itself.

Assume there exists a vector subspace $\sfrak\subset \tfrak^c$, such that the sum $\tfrak + \sfrak = (\vect(\tfrak)\oplus \vect(\sfrak),[\cdot,\cdot])$ is invariant under the contraction.
By the definition of a contraction, we have for $X\in \tfrak, Y\in \sfrak$: $[X,Y]'=[X,Y]$ if and only if $[X,Y]\in \tfrak^c$ or $[X,Y]=0$. 
Thus, for $\tfrak + \sfrak$ to be a subalgebra we need $[X,Y]\in \sfrak$ for $X\in\tfrak, Y\in \sfrak$, since $(\tfrak + \sfrak)\cap \tfrak^c=\sfrak$.
For this we also require that for all $X,Y\in \sfrak$: $[X,Y]'=[X,Y]$ by the demanded invariance under the contraction.
By the contraction definition, we find for such $X,Y$: $[X,Y]'=0$, i.e., $[X,Y]=0$, such that $\sfrak$ must be an Abelian subalgebra.
\end{proof}

Let $H$ denote a Lie group with Lie algebra $\hfrak$ and restrict to the indefinite orthogonal groups or deformations or conjugacy limits of such.
For the $KBH$ decomposition of $\PGL_m\rr$ this implies that $B$ is the subgroup of diagonal projective matrices, $K$ a maximal compact subgroup~\cite{cooper2018limits}.
Let $b_n = \exp( n X_b)\in B$ and $\lfrak$ be the conjugacy limit of $\hfrak$ via $b_n$ for $n\to\infty$.
We have the limit decomposition $\lfrak = \zfrak\oplus \nfrak_+$~\cite{cooper2018limits} with subalgebras
\begin{equation*}
\zfrak = \{X\in\hfrak\,|\, [X_b,X]=0\}\,,\quad \nfrak_+=\{X\in\pgl_m\rr\,|\, \lim_{n\to\infty} \exp(- n X_b) X \exp( n X_b)=0\}\,.
\end{equation*}
By definition, $\zfrak$ is the maximal subalgebra of $\hfrak$, which is invariant under the conjugacy limit via $b_n$.
\vs

\begin{prop}\label{PropSubalgebraContraction}
We use the notation from the proof of \Cref{PropMaxLieSubalgebraContractionInv}. 
Assume the contracted algebra $\hfrak'$ and the conjugacy limit algebra $\lfrak$ are isomorphic.
If there exists no non-trivial Lie algebra morphism from the radical of $\hfrak'$ into $\hfrak$, then the subalgebra $\tfrak$ of the contraction $\hfrak\to\hfrak'$ must be isomorphic to $\zfrak$.
\end{prop}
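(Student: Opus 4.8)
The plan is to compare the Levi decompositions of $\hfrak'$ and $\lfrak$, using the hypothesis to identify the radicals of both algebras with the ``obvious'' ideals $\tfrak^c$ and $\nfrak_+$, respectively, after which the isomorphism $\hfrak'\cong\lfrak$ descends to an isomorphism of Levi quotients.

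First I would record the internal structure of the contracted algebra $\hfrak'$. As in the proof of Proposition~\ref{PropMaxLieSubalgebraContractionInv}, evaluating the contracted bracket shows that $\tfrak$ remains a subalgebra of $\hfrak'$ carrying its original bracket, while $[X,Y]'=0$ for $X,Y\in\tfrak^c$ and $[X,Y]'\in\tfrak^c$ for $X\in\tfrak$, $Y\in\tfrak^c$. Hence $\tfrak^c$ is an abelian ideal of $\hfrak'$, the quotient map $\pi\colon\hfrak'\to\hfrak'/\tfrak^c$ is a Lie algebra isomorphism onto $\tfrak$, and because $\tfrak^c$ is solvable we get $\tfrak^c\subseteq\rad(\hfrak')$ and $\rad(\hfrak')=\pi^{-1}(\rad(\tfrak))$. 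In particular $\pi$ restricts to a surjective Lie algebra morphism $\rad(\hfrak')\to\rad(\tfrak)\hookrightarrow\hfrak$; by the hypothesis this morphism is zero, so $\rad(\tfrak)=0$, i.e.\ $\tfrak$ is semisimple, and $\rad(\hfrak')=\tfrak^c$. Therefore $\hfrak'/\rad(\hfrak')\cong\tfrak$.

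Next I would treat $\lfrak$ via the decomposition $\lfrak=\zfrak\oplus\nfrak_+$. Here $\nfrak_+$ is a nilpotent subalgebra of $\lfrak$ (a sum of positive $\ad X_b$-eigenspaces, hence closed under brackets producing only higher eigenvalues), and it is an ideal of $\lfrak$ since $[\zfrak,\nfrak_+]\subseteq\nfrak_+$ (elements of $\zfrak$ commute with $X_b$ and so preserve its eigenspaces). Thus $\nfrak_+\subseteq\rad(\lfrak)$ and the projection $\lfrak\twoheadrightarrow\lfrak/\nfrak_+\cong\zfrak$ is a Lie algebra morphism with kernel $\nfrak_+$. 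Fix an isomorphism $\psi\colon\hfrak'\xrightarrow{\ \sim\ }\lfrak$, which exists by assumption; then $\psi(\rad(\hfrak'))=\rad(\lfrak)$. Composing $\rad(\hfrak')\xrightarrow{\ \psi\ }\rad(\lfrak)\hookrightarrow\lfrak\twoheadrightarrow\zfrak\hookrightarrow\hfrak$ produces another Lie algebra morphism $\rad(\hfrak')\to\hfrak$, which the hypothesis again forces to vanish. Hence $\psi(\rad(\hfrak'))\subseteq\nfrak_+$, and since $\psi(\rad(\hfrak'))=\rad(\lfrak)\supseteq\nfrak_+$ this gives $\rad(\lfrak)=\nfrak_+$, so $\lfrak/\rad(\lfrak)\cong\zfrak$.

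To finish, $\psi$ induces an isomorphism $\hfrak'/\rad(\hfrak')\cong\lfrak/\rad(\lfrak)$, and combining this with the two identifications $\hfrak'/\rad(\hfrak')\cong\tfrak$ and $\lfrak/\rad(\lfrak)\cong\zfrak$ yields $\tfrak\cong\zfrak$. The step I expect to be the crux is recognizing that the ``no nonzero morphism $\rad(\hfrak')\to\hfrak$'' hypothesis must be used twice — once through the projection $\hfrak'\to\tfrak$ to collapse $\rad(\tfrak)$ (equivalently to show $\rad(\hfrak')=\tfrak^c$), and once through $\psi$ together with the projection $\lfrak\to\zfrak$ to show $\rad(\lfrak)=\nfrak_+$ — after which only routine manipulation of radicals and quotients remains; a minor point to verify along the way is that $\nfrak_+$ is genuinely a nilpotent ideal of $\lfrak$ and that the projections involved are honest Lie algebra homomorphisms.
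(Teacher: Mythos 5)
Your proof is correct, but it takes a genuinely different route from the paper's. The paper works with the \emph{maximal invariant subalgebras}: by \Cref{PropMaxLieSubalgebraContractionInv} the one attached to the contraction is $\tfrak+\sfrak$, the hypothesis forces $\sfrak=\{0\}$ (since $\sfrak$ is an abelian ideal of $\hfrak'$ sitting inside $\rad(\hfrak')$ and inside $\hfrak$), the one attached to the conjugacy limit is $\zfrak$, and the isomorphism $\hfrak'\cong\lfrak$ is then asserted to match these two sub-objects. You instead pin down the radicals on both sides --- $\rad(\hfrak')=\tfrak^c$ after using the hypothesis once through the projection $\hfrak'\to\hfrak'/\tfrak^c\cong\tfrak$ to kill $\rad(\tfrak)$, and $\rad(\lfrak)=\nfrak_+$ after using it a second time through $\psi$ and the projection $\lfrak\to\lfrak/\nfrak_+\cong\zfrak$ --- and then compare the semisimple quotients. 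What your version buys is that the transfer across the abstract isomorphism $\psi$ happens through the radical, which is intrinsic and automatically preserved, whereas the paper's final step relies on $\psi$ matching ``maximal invariant subalgebras,'' objects defined relative to the limiting processes rather than to $\hfrak'$ and $\lfrak$ themselves; your argument is therefore tighter at exactly the point where the paper is tersest. The price is the extra bookkeeping (checking $\rad(\hfrak')=\pi^{-1}(\rad(\tfrak))$, that $\nfrak_+$ is a nilpotent ideal with $\lfrak/\nfrak_+\cong\zfrak$, and that $\zfrak$ embeds in $\hfrak$ with its original bracket), all of which you verify correctly; you also obtain the byproduct that $\tfrak$ must be semisimple, which the paper does not state but which is consistent with its application in \Cref{LemmaNoContractionO3toO111}.
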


\begin{proof}
By \Cref{PropMaxLieSubalgebraContractionInv}, the maximal subalgebra of $\hfrak$, which is invariant under the contraction along $\tfrak$, is isomorphic to $\tfrak +\sfrak$ for an Abelian subalgebra $\sfrak\subset \tfrak^c$ with $[\tfrak,\sfrak] \subset \sfrak$.
We have $[\tfrak^c,\sfrak]'=0$ due to the contraction definition, such that $[\hfrak',\sfrak]'\subset \sfrak$ and $\sfrak$ is an Abelian ideal in $\hfrak'$.
We can Levi-decompose $\hfrak'=\mathrm{rad}(\hfrak')\oplus \hfrak'_{\mathrm{ss}}$, where $\mathrm{rad}(\hfrak')$ denotes the radical of $\hfrak'$ and $\hfrak'_{\mathrm{ss}}$ is semi-simple.
Thus, $\sfrak\subset \mathrm{rad}(\hfrak')$.
If there exists no non-trivial Lie algebra morphism $\mathrm{rad}(\hfrak')\to \hfrak$, we have $\sfrak = \{0\}$.

For $\hfrak'$ and $\lfrak$ to be isomorphic, the corresponding maximal invariant subalgebras of $\hfrak$ must be isomorphic (invariance once under the contraction, once under the conjugacy limit), such that $\tfrak\cong \zfrak$.
\end{proof}

\begin{proof}[Proof of \Cref{LemmaNoContractionO3toO111}]
The invariant subalgebra of $\pofrak(m)$ for the conjugacy limit via $b_n=\Pp\,\mathrm{diag}(1,\exp(n),\dots,\exp((m-1)n))$ is trivial: $\zfrak = \{0\}$.
We assume there exists a Lie algebra $\hfrak'\cong \pofrak((1),\dots,(1))$ which arises as a contraction of $\pofrak(m)$ along some subalgebra $\tfrak\subset \pofrak(m)$.
Due to $\hfrak' = \mathrm{rad}(\hfrak')$, there is an isomorphism $\eta:\mathrm{rad}(\hfrak') \to \pofrak((1),\dots,(1))$.
As vector spaces, $\pofrak((1),\dots,(1))$ and $\pofrak(m)$ are isomorphic, but not as Lie algebras for $m\geq 3$: the former is nilpotent, the latter simple.
There exists no non-trivial Lie algebra morphism $\pofrak((1),\dots,(1))\to \pofrak(m)$, thus also none of the form $\hfrak'\to \pofrak(m)$.
The subalgebra $\tfrak$ must be isomorphic to $\zfrak$ by \Cref{PropSubalgebraContraction}, hence trivial.
Therefore, for all $X,Y\in \hfrak'$ the contracted commutators are trivial: $[X,Y]' = 0$.
The Lie algebra $\hfrak'$ is Abelian, while $\pofrak((1),\dots,(1))$ is non-Abelian for $m\geq 3$, which is a contradiction.
No isomorphism between them exists.
\end{proof}

\subsection{Proof of \Cref{ThmMultipleContrations}}\label{AppendixPfThmMultipleContrations}

We now turn to the proof of \Cref{ThmMultipleContrations}.

\begin{proof}[Proof of \Cref{ThmMultipleContrations}]
Instead of $\pofrak(p,q)$ we first consider $\pofrak(m)$ for $m=p+q$, which is formed by the fixed points of the Cartan involution, i.e., those $X\in\pgl_m\rr$ with $X = \theta(X)  = -  X^T$.
Let $b_n\in B$ be of the form $b_{n,ii}/b_{n,(i+1)(i+1)} = 1$ for all $n$ if $i\notin S$ and $b_{n,ii}/b_{n,(i+1)(i+1)}\to 0$ for $n\to\infty$ if $i\in S$, where $S\subset \{1,\dots,m-1\}$.
The claim for $\pofrak(p,q)$ and general sequences $b_n\in \PGL_m\rr$ follows as detailed at the end of the proof.
We set $s=\# S$ and denote the $l$-th smallest integer in $S$ by $i_l$.
Then there exist matrices $b_n^{(l)}\in B$, such that
\begin{equation*}
b_n = b_n^{(s)}\cdot b_n^{(s-1)}\cdot \ldots \cdot b_n^{(1)}
\end{equation*}
with $b_{n,ii}^{(l)}/b_{n,(i+1)(i+1)}^{(l)} = 1$ for all $n$ if $i\neq i_l$ and $b_{n,ii}^{(l)}/b_{n,(i+1)(i+1)}^{(l)}\to 0$ for $n\to \infty$ if $i=i_l$.
We denote the conjugacy limit of the Lie algebra $\pofrak(m)\subset \pgl_m\rr$ via the sequence $b_n^{(l)}\cdot b_n^{(l-1)}\cdot \ldots \cdot b_n^{(1)}$ for $n\to \infty$ by $\ofrak_l$ and note that $\ofrak_s = \ofrak'$ is the conjugacy limit of $\pofrak(m)$ via $b_n$.

We show by induction in $l$ that each $\ofrak_l$ is isomorphic to a composition of $l$ contractions of $\pofrak(m)$.
We begin with $l=1$ and note that $\ofrak_1 = \zfrak_1\oplus \nfrak_{+,1}$~\cite{cooper2018limits} for the invariant subalgebra $\zfrak_1 = \pfrak(\ofrak(i_1)\oplus \ofrak(m-i_1))$ and
\begin{equation*}
\nfrak_{+,1} = \pfrak\left(\begin{matrix}
0 & \\
\rr^{(m-i_1)\times i_1} & 0
\end{matrix}\right)\,,
\end{equation*}
which is an Abelian ideal in $\ofrak_1$, since commutators of $\zfrak_1$- with $\nfrak_{+,1}$-elements read
\begin{equation*}
\Pp \left[\left(\begin{matrix}
X & 0\\
0 & X'
\end{matrix}\right), \left(\begin{matrix}
0 & 0\\
Y & 0
\end{matrix}\right)\right] = \Pp \left(\begin{matrix}
0 & 0\\
X'Y - YX & 0
\end{matrix}\right)
\end{equation*}
and are again in $\nfrak_{+,1}$.
We contract $\pofrak(m)$ along $\zfrak_1$, which yields the contracted Lie algebra $\hfrak_1$ with commutator $[\cdot, \cdot]_1$.
Explicitly, we find for $X_1,X_2\in \ofrak(i_1)$, $X_1',X_2'\in \ofrak(m-i_1)$ and $Y_1,Y_2 \in \rr^{(m-i_1)\times i_1}$:
\begin{subequations}
\begin{align}
\Pp\left[\left(\begin{matrix}
X_1 & 0\\
0 & X_1'
\end{matrix}\right),
\left(\begin{matrix}
X_2 & 0\\
0 & X_2'
\end{matrix}\right)\right]_1 = &\; \Pp\left[\left(\begin{matrix}
X_1 & 0\\
0 & X_1'
\end{matrix}\right),
\left(\begin{matrix}
X_2 & 0\\
0 & X_2'
\end{matrix}\right)\right]\nonumber\\
=&\;\Pp\left(\begin{matrix}
[X_1,X_2] & 0\\
0 & [X_1',X_2']
\end{matrix}\right)\,,\label{EqContractedCommA}\\
\Pp\left[\left(\begin{matrix}
X_1 & 0\\
0 & X_1'
\end{matrix}\right),
\left(\begin{matrix}
0 & - Y_1^T \\
Y_1 & 0
\end{matrix}\right)\right]_1 = &\; \Pp\left[\left(\begin{matrix}
X_1 & 0\\
0 & X_1'
\end{matrix}\right),
\left(\begin{matrix}
0 & - Y_1^T\\
Y_1 & 0
\end{matrix}\right)\right]\nonumber\\
=&\; \Pp\left(\begin{matrix}
0 & Y_1^T X_1' - X_1 Y_1^T  \\
X_1' Y_1 - Y_1 X_1 & 0
\end{matrix}\right)\,,\label{EqContractedCommB}\\
\Pp\left[\left(\begin{matrix}
0 & -Y_1^T\\
Y_1 & 0
\end{matrix}\right),
\left(\begin{matrix}
0 & -Y_2^T \\
Y_2 & 0
\end{matrix}\right)\right]_1 = &\; 0\,,\label{EqContractedCommC}
\end{align}
\end{subequations}
where $X_i = \theta(X_i), X_i' = \theta(X_i')$.
We define the map $\sigma_1:\hfrak_1\to \ofrak_1$,
\begin{equation*}
\sigma_1\left( \Pp\left(\begin{matrix}
X_1 & -Y_1^T\\
Y_1 & X_1'
\end{matrix}\right)\right) := \Pp\left(\begin{matrix}
X_1 & 0\\
Y_1 & X_1'
\end{matrix}\right)\,,
\end{equation*}
which is a vector space isomorphism with inverse
\begin{equation}\label{EqSigma1Inverse}
\sigma_1^{-1}\left( \Pp\left(\begin{matrix}
X_1 & 0\\
Y_1 & X_1'
\end{matrix}\right)\right) = \Pp\left(\begin{matrix}
X_1 & - Y_1^T\\
Y_1 & X_1'
\end{matrix}\right)\,.
\end{equation}
Comparing the contracted commutators \eqref{EqContractedCommA} to \eqref{EqContractedCommC} with the matrix commutators of $\ofrak_1$, we see that $\sigma_1$ is a Lie algebra morphism, so that $\ofrak_1$ and $\hfrak_1$ are isomorphic as Lie algebras.

For the induction step, suppose we have an isomorphism $\sigma_l:\hfrak_l\to \ofrak_l$. 
Using $\sigma_l$, we construct an isomorphism $\sigma_{l+1}:\hfrak_{l+1}\to \ofrak_{l+1}$.
Denote the commutator of $\hfrak_l$ by $[\cdot,\cdot]_l$ and set $j_l=i_l-i_{l-1}$, $j_1=i_1$.
The Lie algebra $\ofrak_{l+1}$ is the conjugacy limit of $\ofrak_l$ via the sequence $b_n^{(l+1)}$ for $n\to\infty$ and therefore has the form
\begin{equation*}
\ofrak_{l+1} = \pfrak \left(\begin{matrix}
\ofrak(j_1) & & & & \\
\rr^{j_2\times j_1} & \ofrak(j_2) &&&\\
\vdots & \vdots & \ddots &&\\
\rr^{j_{l+1}\times j_1} & \rr^{j_{l+1}\times j_2} & \cdots & \ofrak(j_{l+1}) &0 \\
\rr^{(m-i_{l+1})\times j_1} & \rr^{(m-i_{l+1})\times j_2} & \cdots & \rr^{(m-i_{l+1})\times j_{l+1}} & \ofrak(m-i_{l+1})
\end{matrix}\right)\,.
\end{equation*}
We define a subalgebra of both $\ofrak_{l}$ and $\ofrak_{l+1}$ as
\begin{equation*}
\lfrak_{l+1} = \pfrak \left(\begin{matrix}
\ofrak(j_1) & & & & \\
\rr^{j_2\times j_1} & \ofrak(j_2) &&&\\
\vdots & \vdots & \ddots &&\\
\rr^{j_{l+1}\times j_1} & \rr^{j_{l+1}\times j_2} & \cdots & \ofrak(j_{l+1}) &0 \\
0 & 0 & \cdots & 0 & \ofrak(m-i_{l+1})
\end{matrix}\right)\,.
\end{equation*}
Restricting the isomorphism $\sigma_l$ to the preimage of $\lfrak_{l+1}$ yields an isomorphism onto $\lfrak_{l+1}$.
We contract $\hfrak_l$ along
\begin{equation*}
\tfrak_l = \{X\in\hfrak_l\,|\, \Ad_{b_n^{(l+1)}} X = X\,\forall n\}\,,
\end{equation*}
which we write as the Lie algebra $\hfrak_{l+1}$ with commutator $[\cdot,\cdot]_{l+1}$.
In the image of $\sigma_l$, the map $\Ad_{b_n^{(l+1)}}$ leaves exactly the elements of $\lfrak_{l+1}$ invariant, such that the preimage of $\lfrak_{l+1}$ is $\sigma_l^{-1}(\lfrak_{l+1}) = \tfrak_l$.
We define a map $\sigma_{l+1}:\hfrak_{l+1}\to \ofrak_{l+1}$ as follows: set $\sigma_{l+1}|_{\tfrak_l} = \sigma_l|_{\tfrak_l}$ and for $X\in \rr^{(m-i_{l+1})\times i_{l}}$, $Y\in \rr^{(m-i_{l+1})\times j_{l+1}}$:
\begin{equation}\label{EqMatricesLatterType}
\sigma_{l+1}\left(\Pp\left(\begin{matrix}
0 &&  \\
0 & 0 & -Y^T\\
X & Y & 0
\end{matrix}\right)\right):= \Pp\left(\begin{matrix}
0 &&\\
0 & 0 & 0\\
X & Y & 0
\end{matrix}\right)\,,
\end{equation}
which is a vector space isomorphism with inverse similar to \Cref{EqSigma1Inverse}.
The complement $\lfrak_{l+1}^c$ of $\lfrak_{l+1}$ in $\ofrak_{l+1}$ is formed by matrices of the right-hand side form of \Cref{EqMatricesLatterType}.
Before the conjugacy limit of $\ofrak_l$ via $b_n^{(l+1)}$, note that matrix commutators of two matrices in $\lfrak_{l+1}\subset \ofrak_l$ are invariant under $\Ad_{b_n^{(l+1)}}$ for all $n$.
Considering the limit algebra $\ofrak_{l+1}$, the matrix commutator of two matrices in $\lfrak_{l+1}^c$ equates to zero, and those of $\ofrak_{l+1}$- with $\lfrak_{l+1}^c$-elements are in $\lfrak_{l+1}^c$.
Comparing with the construction of $\hfrak_{l+1}$, $\sigma_{l+1}$ is thus a Lie algebra morphism, such that $\hfrak_{l+1}\cong \ofrak_{l+1}$.

To summarize the inductive argument, we have the commutative diagram
\begin{equation*}
\begin{tikzcd}
\pofrak(m)\arrow[equal]{d}\arrow{r}{\begin{subarray}{c} \textrm{contr.} \\ \textrm{along }\zfrak_1 \end{subarray}} &[3em]  \hfrak_1\arrow{r}{\begin{subarray}{c} \textrm{contr.} \\ \textrm{along }\tfrak_1 \end{subarray}}\arrow[swap, "\sigma_1", "\sim"' labl]{d} &[3em]  \hfrak_2\arrow{r}{\begin{subarray}{c} \textrm{contr.} \\ \textrm{along }\tfrak_2 \end{subarray}}\arrow[swap, "\sigma_2", "\sim"' labl]{d} &[3em] \cdots\arrow{r}{\begin{subarray}{c} \textrm{contr.} \\ \textrm{along }\tfrak_{s-1} \end{subarray}} &[3em] \hfrak_s\arrow[swap, "\sigma_s", "\sim"' labl]{d}\\
\pofrak(m) \arrow[swap]{r}{\lim_{n\to\infty}\Ad_{b_n^{(1)}}} & \ofrak_1\arrow[swap]{r}{\lim_{n\to\infty}\Ad_{b_n^{(2)}}} & \ofrak_2\arrow[swap]{r}{\lim_{n\to\infty}\Ad_{b_n^{(3)}}} & \cdots\arrow[swap]{r}{\lim_{n\to\infty}\Ad_{b_n^{(s)}}} & \ofrak_s = \ofrak'
\end{tikzcd}\,,
\end{equation*}
where all vertical arrows are isomorphisms of Lie algebras.

Conjugation of both $\hfrak_s$ and $\ofrak_s = \ofrak'$ by elements in the maximal compact subgroup $K<\PGL_m\rr$ is an isomorphism.
Thus, $\hfrak_s$ and $\ofrak_s$ are isomorphic for all sequences $b_n\in \PGL_m\rr = KB\,\PO(p,q)$, since sequences in $B$ are up to left multiplication with sequences in $K$ and coordinate permutations of the form specified at the beginning of the proof, see e.g.~the proof of Thm.~1.1 in~\cite{cooper2018limits}.
All conjugacy limits of $\pofrak(m)$ via sequences in $\PGL_m\rr$ are therefore isomorphic to the composition of a finite number of contractions.

Lie algebras of the indefinite orthogonal groups are of the form
\begin{equation*}
\pofrak(p,q) = \{X\in\pgl_m\rr\,|\, \sigma(\theta(X)) = X\}\,,
\end{equation*}
where $\sigma(Y) = J Y J^{-1}$ with $J=-1_{p\times p}\oplus 1_{q\times q}$.
There exists a set of generators of $\pofrak(m)$, which after suitable sign changes generate $\pofrak(p,q)$; similar sign changes occur for their commutators.
Still, for $Y\in\pofrak(p,q)$ the off-diagonal elements $Y_{ij}$, $i\neq j$, are uniquely determined by the transposed elements $Y_{ji}$.
Hence, analogously to the $\sigma_l$ for conjugacy limits of $\pofrak(m)$, isomorphisms $\sigma_l'$ can be constructed, which respect the necessary sign changes as we exchange $\pofrak(m)$ for $\pofrak(p,q)$.
The proof of the claim for $\pofrak(m)$ hence applies also to the more general $\pofrak(p,q)$.
\end{proof}

\section{Projective representations and conjugacy limits}\label{AppendixConjugacyLimitsInAmbientRep}

\subsection{A commutative diagram for conjugacy limits}\label{AppendixCommutativeDiagramReps}

Let $G$ be a Lie group and $H<G$ a closed Lie subgroup. 
While $G=\PGL_{m}\rr$ for $m=5$ suffices for our purposes, we keep the statements of this appendix general.
We denote the space of invertible linear operators on the Hilbert space $\Hcal$ by $\GL(\Hcal)$ and let $U:G\to \GL(\Hcal)$ be a projective complex representation of $G$ on $\Hcal$.
We have the following lemma.
\vs

\begin{lem}\label{LemmaRepresentationLimits}
For $L$ a conjugacy limit of $H$ in $G$ and $U$ a projective complex representation of $G$, the diagram
\begin{equation*}
\begin{tikzcd}
\lim_{n\to\infty}\Ad_{b_n}: & H\arrow{r}\arrow{d}{U} & L\arrow{d}{[U]}\\
\left[ \lim_{n\to\infty}\Ad_{U(b_n)} \right]: & U(H) \arrow{r} & \left[U(L)\right]
\end{tikzcd} 
\end{equation*}
commutes, where $[U(L)]$ denotes equivalence classes modulo $U(1)$ prefactors.
\end{lem}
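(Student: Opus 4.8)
The plan is to reduce the commutativity asserted in \Cref{LemmaRepresentationLimits} to a single statement about the honest continuous homomorphism $\bar U\colon G\to\Ucal(\Hcal)/\Uu(1)$ underlying the projective unitary representation $U$ (for the general statement, replace $\Ucal(\Hcal)/\Uu(1)$ by $\GL(\Hcal)/\cc^\times$). Since inner automorphisms are insensitive to central scalars, $\Ad_{\bar U(b_n)}$ is well defined on the target and restricts $\bar U(H)$ to $\bar U(\Ad_{b_n}H)$; the maps labelled $[U]$ and $[\lim_{n\to\infty}\Ad_{U(b_n)}]$ in the diagram are exactly $\bar U$ and $\lim_{n\to\infty}\Ad_{\bar U(b_n)}$; and for each $h\in H$ and each $n$ one has the square identity $\bar U(\Ad_{b_n}h)=\Ad_{\bar U(b_n)}\bar U(h)$ simply because $\bar U$ is a homomorphism. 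Lifted back to $U$ this identity holds only up to a $\Uu(1)$-valued multiplier factor, which is precisely the source of the square brackets on the lower horizontal map and is of the same type as the phase manipulations already carried out in the proof of \Cref{LemmaQuantumFieldsTransformNaturallySmoothly}. So the diagram commutes if and only if the conjugated subgroups $\Ad_{\bar U(b_n)}\bar U(H)$ converge geometrically to $\bar U(L)$, i.e.\ (a) every element of $\bar U(L)$ is a limit of elements of $\Ad_{\bar U(b_n)}\bar U(H)$, and (b) every accumulation point of any sequence $\Ad_{\bar U(b_n)}\bar U(h_n)$, $h_n\in H$, lies in $\overline{\bar U(L)}$.

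First I would prove (a), which is immediate: given $\ell\in L$, the definition of the conjugacy limit furnishes $h_n\in H$ with $\Ad_{b_n}h_n\to\ell$ in $G$, and applying the continuous map $\bar U$ gives $\Ad_{\bar U(b_n)}\bar U(h_n)=\bar U(\Ad_{b_n}h_n)\to\bar U(\ell)$ with $\bar U(h_n)\in\bar U(H)$. This already supplies everything used by the lemma in the body: for $\ell=\lim_{n\to\infty}\Ad_{b_n}h_n\in\Oo'$ the operators $U(\ell)$ are, up to a residual $\Uu(1)$ phase coming from the limiting multiplier, the limits of $\Ad_{U(b_n)}U(h_n)$, and those phases are limits of phases occurring in the multiplier of $U|_\Oo$, hence can degenerate to the trivial one, exactly as claimed in the proof of \Cref{LemmaQuantumFieldsTransformNaturallySmoothly}.

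For (b) I would argue as follows. Let $m$ be an accumulation point of $\bar U(g_n)$ with $g_n:=\Ad_{b_n}h_n\in G=\PGL_5\rr$; after passing to a subsequence, $\bar U(g_n)\to m$. Using a $KAK$ decomposition of $\PGL_5\rr$ with $K=\PO(5)$ and $A$ the diagonal subgroup, write $g_n=k_na_nl_n$ and pass to a further subsequence so that $k_n\to k$, $l_n\to l$ in the compact group $\PO(5)$. If the diagonal parts $a_n$ stay in a compact subset of $A$, then $g_n$ converges in $\PGL_5\rr$, its limit lies in $L$ by the defining accumulation-point property of the conjugacy limit, and $m=\bar U(\lim_ng_n)\in\bar U(L)$ by continuity. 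The delicate case is when the $a_n$ leave every compact set while $\bar U(g_n)$ still converges. Here I would invoke the reduction used in the proof of \Cref{ThmMultipleContrations} (following \cite{cooper2018limits}): up to left multiplication by a compact sequence and a fixed coordinate permutation one may take $b_n=\exp(nX_b)$ a one-parameter diagonal sequence, for which $L$ carries the explicit decomposition $\Oo'=Z\cdot N_+$ with $\mathrm{Lie}(Z)=\zfrak$ and $\mathrm{Lie}(N_+)=\nfrak_+$ recalled in \Cref{AppendixPfLemmaNoContractionO3toO111}; one then checks directly that the diagonal degeneration of $g_n$ is forced to be absorbed into the conjugating factor $\Ad_{b_n}$, so that every $\bar U$-limit of such $g_n$ equals $\bar U$ of an element assembled from $\zfrak$ and $\nfrak_+$, hence of $L$, modulo $\Uu(1)$.

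\textbf{Expected main obstacle.} Part (b) in the regime $g_n\to\infty$ in $\PGL_5\rr$ is the hard point: $\bar U$ need not be proper (it may have a large kernel and need not be a homeomorphism onto its image), so convergence of $\bar U(g_n)$ in $\Ucal(\Hcal)/\Uu(1)$ cannot be lifted directly to convergence of $g_n$. The route above—one-parameter reduction plus the explicit description of conjugacy limits of $\PO(p,q)$ from \Cref{AppendixPfLemmaNoContractionO3toO111} and \Cref{AppendixPfThmMultipleContrations}—is how I would control it; it is also worth recording that only part (a) is needed for the applications of the lemma in the body, so if one prefers a short self-contained argument the lemma can be stated as commutativity after replacing $\bar U(L)$ by its closure.
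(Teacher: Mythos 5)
Your proposal follows essentially the same two-step structure as the paper's own argument, which splits the lemma into \Cref{PropConjugacyRepresentation1} (every element of $U(L)$ is, modulo multiplier phases, a limit of elements of $\Ad_{U(b_n)}U(H)$ --- your part (a), proved identically by applying continuity of $U$ to $\tilde h_n=b_nh_nb_n^{-1}\to g$) and \Cref{PropConjugacyRepresentation2} (every conjugacy limit of $U(H)$ in $U(G)$ is of the form $[U(L)]$). Where you diverge is in how the accumulation-point half is handled. The paper disposes of it in one line in \Cref{PropConjugacyRepresentation1} (``every accumulation point of $\tilde h_n$ lies in $L$, so every accumulation point of $U(\tilde h_n)$ lies in $U(L)$'') and, in \Cref{PropConjugacyRepresentation2}, by assuming $U$ faithful and asserting that $U(h_n')\to U(g)$ forces $\tilde h_n\to g$, with the non-faithful case reduced to the faithful one by quotienting $G$ by $\ker U$. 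Both steps implicitly use that $U$ (or $\bar U$ on $G/\ker U$) is proper, i.e.\ a homeomorphism onto a closed image --- exactly the obstruction you flag: if $\tilde h_n$ leaves every compact set of $\PGL_5\rr$ while $\bar U(\tilde h_n)$ still accumulates, neither sentence applies. So your diagnosis of the ``hard point'' is accurate, and it is in fact a gap in the paper's own proof rather than something you have introduced.

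That said, your proposed repair of part (b) is only a sketch. The $KAK$ reduction and the $\zfrak\oplus\nfrak_+$ description of the limit algebra control where $\Ad_{b_n}H$ can accumulate in $G$, but they do not by themselves control where $\bar U(\Ad_{b_n}h_n)$ can accumulate in $\Ucal(\Hcal)/\Uu(1)$ when the group elements escape to infinity; the phrase ``the diagonal degeneration of $g_n$ is forced to be absorbed into the conjugating factor'' is the step that actually needs an argument, and it would have to use some properness or closed-range property of $\bar U$ (automatic, e.g., for finite-dimensional $\rho$ with closed image, but not for a general projective unitary representation on $\Hcal$). Your fallback --- stating the lemma with $\overline{\bar U(L)}$ in place of $\bar U(L)$, since only part (a) is used downstream (in \Cref{LemmaQuantumFieldsTransformNaturallySmoothly}) --- is a legitimate and arguably cleaner resolution, and is more honest than the paper's treatment. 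In short: same route, correct easy half, and a correctly identified but not fully closed hard half that the paper itself also leaves open.
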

\vs

By virtue of \Cref{LemmaRepresentationLimits}, limits of projective representations within an ambient projective representation are the same as projective representations of limits up to limits of multipliers.
For its proof we first show two propositions.
\vs

\begin{prop}\label{PropConjugacyRepresentation1}
Let $L$ be a conjugacy limit of $H$ in $G$. 
Then $U(L)$ is up to multiplication by the multiplier of $U$ a conjugacy limit of $U(H)$ in $U(G)$.
\end{prop}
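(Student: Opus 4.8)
The plan is to push the two Chabauty conditions defining the conjugacy limit $L=\lim_{n\to\infty}\Ad_{b_n}H$ forward through $U$, keeping track of the $\Uu(1)$-valued scalars that the phrase ``up to the multiplier of $U$'' is designed to absorb. First I would record the bookkeeping identity that drives the whole argument: normalizing $U(e)=\mathrm{id}$ and writing $\omega$ for the (unitary) Schur multiplier of $U$, so that $U(g)U(g')=\omega(g,g')\,U(gg')$, one gets for every $h\in H$ and every $b_n\in G$
\begin{equation*}
U(b_n)\,U(h)\,U(b_n)^{-1}=\mu_n(h)\,U(b_n h b_n^{-1})\,,\qquad \mu_n(h)=\frac{\omega(b_n,h)\,\omega(b_n h,b_n^{-1})}{\omega(b_n,b_n^{-1})}\in\Uu(1)\,.
\end{equation*}
Consequently $\Ad_{U(b_n)}U(H)$ and $U(\Ad_{b_n}H)$ agree up to the $\Uu(1)$-factors $\mu_n(\cdot)$, and the assertion reduces to: the genuine continuous homomorphism $[U]\colon G\to \GL(\Hcal)/\Uu(1)$ induced by $U$ sends $L$ to the conjugacy limit of $[U(H)]$ via $[U(b_n)]$, i.e. $[U(\Ad_{b_n}H)]\to[U(L)]$ in the Chabauty sense.

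The ``hit'' half is immediate from continuity. Given $l\in L$, the definition of the conjugacy limit supplies $h_n=\Ad_{b_n}\tilde h_n\in\Ad_{b_n}H$ with $h_n\to l$ in $G$; applying $[U]$ gives $[U(h_n)]\to[U(l)]$, and each $[U(h_n)]=[\Ad_{U(b_n)}U(\tilde h_n)]$ lies in $[\Ad_{U(b_n)}U(H)]$. Hence $[U(L)]\subseteq\liminf_n[\Ad_{U(b_n)}U(H)]$.

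The ``miss'' half is the crux and the place I expect the real work. One must show that every accumulation point $A$ of a sequence $A_n\in[\Ad_{U(b_n)}U(H)]$ lies in $[U(L)]$. Writing $A_{n_j}=[U(h_{n_j})]$ with $h_{n_j}\in\Ad_{b_{n_j}}H\subset G$ and $A_{n_j}\to A$, the comfortable case is that $(h_{n_j})$ has a subsequence converging in $G$ to some $h$: then $h\in L$ by the ``miss'' property of the limit $\Ad_{b_n}H\to L$, and $A=[U(h)]\in[U(L)]$ by continuity. The obstacle is the case in which $(h_{n_j})$ escapes every compact subset of $G$ while $[U(h_{n_j})]$ nonetheless converges, which is a priori possible for a non-proper $U$. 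To eliminate it I would first reduce to $[U]$ nontrivial, hence injective, using that $G=\PGL_5\rr\cong\SL_5\rr$ is simple as an abstract group (the trivial case makes $[U(G)]$ a single point and the statement vacuous); and then invoke that a faithful continuous (projective) unitary representation of a simple Lie group is a homeomorphism onto a closed subgroup of the ambient group --- equivalently that $[U]$ is proper --- so that an escaping $(h_{n_j})$ cannot have convergent image, a contradiction. A variant better adapted to the present setting is to use that the conjugacy limits of $\PO(p,q)$ inside $\PGL_5\rr$ form the explicitly classified finite list recalled in \Cref{SecGeometriesLimits}: pass to a subsequence along which $[U(\Ad_{b_n}H)]$ Chabauty-converges to some closed subgroup $M$ (using compactness of the relevant subgroup space), note $[U(L)]\subseteq M$ from the ``hit'' half, and then match $M$ to $[U(L)]$ by a dimension/structure count. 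Either way, the two inclusions combine to $\lim_n[\Ad_{U(b_n)}U(H)]=[U(L)]$, which is the claimed statement that $U(L)$ is, up to the multiplier of $U$, the conjugacy limit of $U(H)$ in $U(G)$.
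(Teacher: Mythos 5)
Your proof follows essentially the same route as the paper's: push the two Chabauty conditions through $U$, using continuity for the ``hit'' half and the multiplier identity $U(b_n)U(h)U(b_n)^{-1}=\mu_n(h)\,U(b_nhb_n^{-1})$ to identify $\Ad_{U(b_n)}U(H)$ with $U(\Ad_{b_n}H)$ up to $\Uu(1)$ factors. Where you go beyond the paper is the ``miss'' half. The paper disposes of it in one line --- ``every accumulation point of $\tilde h_n$ lies in $L$, so every accumulation point of $U(\tilde h_n)$ lies in $U(L)$'' --- which only accounts for accumulation points of the image sequence that arise from accumulation points of $\tilde h_n$ in $G$; it is silent on exactly the scenario you isolate, namely a subsequence $h_{n_j}$ escaping every compact set of $G$ while $[U(h_{n_j})]$ nevertheless converges. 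Your repair (reduce to $[U]$ injective via simplicity of $\PGL_5\rr\cong\SL_5\rr$, then argue $[U]$ is proper, or alternatively exploit compactness of the Chabauty space plus the classification of limits) is the right kind of fix; note that the properness claim for unitary representations of a noncompact simple group is not free --- it is essentially the Howe--Moore vanishing of matrix coefficients at infinity, so if you use it you should cite or prove it, and you should also say which topology on $U(G)$ the conjugacy limit refers to, since the paper leaves this implicit (the same properness assumption is silently used again in the converse statement, \Cref{PropConjugacyRepresentation2}, where $U(h_n')\to U(g)$ is upgraded to $\tilde h_n\to g$). In short: same approach, but your version makes explicit, and sketches how to close, a gap that the paper's own one-paragraph proof leaves open.
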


\begin{proof}
Let $b_n\in G$ be a sequence, such that $b_n H b_n^{-1}$ converges geometrically to the closed subgroup $L<G$. 
Consider $g\in L$ and $\tilde{h}_n=b_n h_n b_n^{-1}\in b_n H b_n^{-1}$ with $\tilde{h}_n\to g$ for $n\to \infty$.
Then $[U(\tilde{h}_n)] =  [U(b_n) U (h_n) U(b_n)^{-1}]$ for all $n$ and $U(\tilde{h}_n)\to U(g)\in U(L)$ by continuity of $U$. 
Every accumulation point of $\tilde{h}_n$ lies in $L$, so every accumulation point of $U(\tilde{h}_n)$ lies in $U(L)$. 
Indeed, $U(L)$ is a conjugacy limit of $U(H)$ in $U(G)$ up to (limits of) multipliers of $U$.
\end{proof}

Chabauty topology is the subspace topology on the set $\mathcal{C}(G)$ of closed subgroups of $G$. 
$\mathcal{C}(G)$ with the Chabauty topology is a compact Hausdorff topological space, see e.g.~\cite{leitner2016limits,cooper2018limits,trettel2019families}. 
In general, a subgroup $L<G$ is a conjugacy limit of $H<G$ via a sequence $b_n\in G$, if and only if $H_n=b_n H b_n^{-1}\to L$ in the Chabauty topology~\cite{cooper2018limits}. 
This leads to the following proposition.
\vs

\begin{prop}\label{PropConjugacyRepresentation2}
Every conjugacy limit of $U(H)$ in $U(G)$ is of the form $U(L)$ for a unique conjugacy limit $L$ of $H$ in $G$, up to limits of multipliers of $U$.
\end{prop}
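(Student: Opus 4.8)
The plan is to run the construction of \Cref{PropConjugacyRepresentation1} backwards. Let $M$ be a conjugacy limit of $U(H)$ in $U(G)$; by definition it comes with a sequence $c_n\in U(G)$ such that $c_nU(H)c_n^{-1}\to M$ in the Chabauty topology of the space of closed subgroups of $U(G)$. Since every element of $U(G)$ is of the form $U(g)$ for some $g\in G$, I would first pick $b_n\in G$ with $U(b_n)=c_n$; the freedom in this choice is only by scalars, which commute past the conjugation $h\mapsto c_nhc_n^{-1}$ and are therefore immaterial. The goal is then to exhibit a closed subgroup $L<G$ which is a conjugacy limit of $H$ via (a subsequence of) the $b_n$ and satisfies $[U(L)]=M$ up to the limiting multiplier of $U$.

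For existence I would invoke compactness. The space $\mathcal{C}(G)$ of closed subgroups of $G$ with the Chabauty topology is compact and Hausdorff, so the sequence $b_nHb_n^{-1}\in\mathcal{C}(G)$ has a subsequence $b_{n_j}Hb_{n_j}^{-1}$ converging to some $L\in\mathcal{C}(G)$, which by definition is a conjugacy limit of $H$ in $G$. Applying \Cref{PropConjugacyRepresentation1} along this subsequence, $[U(L)]$ is the conjugacy limit of $U(H)$ realized by $U(b_{n_j})$, i.e. $U(b_{n_j})U(H)U(b_{n_j})^{-1}\to[U(L)]$ up to multipliers. But the full sequence converges to $M$, hence so does every subsequence, so $M=[U(L)]$. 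This settles the ``every conjugacy limit is of the form $U(L)$'' half, which is the main positive content and simply dualizes \Cref{PropConjugacyRepresentation1}.

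For uniqueness I would argue that $L$ is already pinned down on the $G$-side. Suppose two subsequences of $b_nHb_n^{-1}$ converge, to $L$ and $L'$; by the previous step $[U(L)]=M=[U(L')]$, so it suffices to show that $[U(L)]=[U(L')]$ forces $L=L'$, after which Hausdorffness of $\mathcal{C}(G)$ makes $b_nHb_n^{-1}$ itself converge to this common value, the unique conjugacy limit of $H$ corresponding to $M$. The implication $[U(L)]=[U(L')]\Rightarrow L=L'$ is where faithfulness enters: $\ker U$ is normal in $G$, hence invariant under every $\Ad_{b_n}$, so it lies in every $b_nHb_n^{-1}$ whenever $\ker U\subset H$ and therefore in both $L$ and $L'$; the lattice isomorphism between closed subgroups of $G$ containing $\ker U$ and closed subgroups of $U(G)$ then upgrades $[U(L)]=[U(L')]$ to $L=L'$.

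I expect this last point — obtaining \emph{genuine} uniqueness of the preimage $L$ rather than uniqueness only modulo $\ker U$ — to be the main obstacle, since it is the place where one must control the non-faithfulness of the projective representation $U$; for the structure groups relevant in this paper $\ker U$ is contained in the subgroups in question and the argument closes, while in full generality ``unique'' should be read as ``unique up to $\ker U$''. Everything else is a routine combination of \Cref{PropConjugacyRepresentation1}, compactness and Hausdorffness of the Chabauty space, and the elementary fact that every subsequence of a convergent sequence inherits its limit.
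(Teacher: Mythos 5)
Your proof is correct, but it reaches the existence half by a genuinely different mechanism than the paper. The paper fixes a faithful $U$ and pulls convergence \emph{back} through the representation: from $U(h_n')\to U(g)$ it infers $\tilde h_n\to g$ and hence geometric convergence of the full sequence $b_nHb_n^{-1}$ to $L$, with uniqueness from Hausdorffness of $\mathcal{C}(G)$; the non-faithful case is then reduced to the faithful one by quotienting by $\ker U$. You instead produce the candidate $L$ \emph{softly}, as a subsequential Chabauty limit guaranteed by compactness of $\mathcal{C}(G)$, and then push \emph{forward} along the subsequence via \Cref{PropConjugacyRepresentation1}, identifying $[U(L)]$ with $M$ by uniqueness of limits on the $U(G)$ side. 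Your route has the advantage of never needing to invert $U$ on sequences — the step ``$U(\tilde h_n)\to U(g)$ implies $\tilde h_n\to g$'' in the paper tacitly uses that a faithful $U$ is a homeomorphism onto its image, which your compactness argument sidesteps entirely; the price is that you only get convergence of the full sequence $b_nHb_n^{-1}$ a posteriori, by showing all subsequential limits agree. Your uniqueness discussion is also slightly sharper than the paper's: you correctly isolate that genuine uniqueness of $L$ (rather than uniqueness of $\ker U\cdot L$) requires the projective kernel to be absorbed into the subgroups under consideration, a point the paper's non-faithful case passes over by writing $[U(\ker(U)\cdot L)]=[U(L)]$ without commenting on uniqueness of $L$ itself. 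For the structure groups actually used in the paper this caveat is harmless, and your reading of ``unique'' as ``unique up to $\ker U$'' in general is the honest statement.
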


\begin{proof}
Denote the multiplier of $U$ by $\omega$.
Let $U'$ be a conjugacy limit of $U(H)$ in $U(G)$. 
Then, any $U(g)\in U'$ is the limit of some sequence $U(h_n')\in U(b_n) U(H) U(b_n)^{-1}$ with 
\begin{equation*}
U(h_n')=U(b_n)U(h_n)U(b_n)^{-1} = \omega(b_n,h_n)\omega(b_n h_n,b_n^{-1})U(\tilde{h}_n)
\end{equation*}
for $b_n\in G, h_n\in H$, $\tilde{h}_n:= b_n h_n b_n^{-1}$.
Assume that $U$ is faithful. 
Then, given that $U(h_n')\to U(g)$ in $U(G)$, we obtain $\tilde{h}_n\to g$. 
The same argument applies to accumulation points of $U(h_n')\in U(G)$, which come from accumulation points of the sequence $\tilde{h}_n\in G$. 
Thus, $b_n H b_n^{-1}$ converges to a conjugacy limit $L < G$. 
The set $\mathcal{C}(G)$ of closed subgroups equipped with the Chabauty topology being Hausdorff, limits are unique. 
Therefore, $[U'] = [U(L)]$ for a limit $L$ of $H$ in $G$.

The case of a non-faithful projective representation $U$ remains. 
Let $\ker(U)\neq \{1\}$ be the non-trivial kernel of $U$, which is a normal closed subgroup of $G$, such that we obtain the exact sequence of groups
\begin{equation*}
1 \to \ker(U)< G \to G /\ker(U)\to 1\, .
\end{equation*}
A limit $U'$ of $U(b_n) U(H)U(b_n)^{-1} = \omega(b_n,h)\omega(b_n h,b_n^{-1}) U(b_n H b_n^{-1}) < U(G)$ for a sequence $b_n\in G$ induces the same limit $[U']$ within $[U(G/\ker(U))]$, since conjugation preserves $\ker(U)$. 
$U$ acting on $G/\ker(U)$ is a faithful projective representation, such that $[U'] = [U(\ker(U)\cdot L )] = [U(L)]$ for a limit $L$ of $H<G$. 
Conjugacy limits of subgroups of $U(G)$ and $U(G/\ker(U))$ coincide trivially. 
Indeed, for a general representation $U$ we find $[U'] = [U(L)]$ for a limit $L$ of $H<G$.
\end{proof}

The proof of \Cref{LemmaRepresentationLimits} is now immediate.

\begin{proof}[Proof of \Cref{LemmaRepresentationLimits}]
The implied equality of morphism concatenations follows from Propositions \ref{PropConjugacyRepresentation1} \mbox{and \ref{PropConjugacyRepresentation2}} and their proofs.
\end{proof}

\subsection{Projective representations and projective quantum fields}\label{AppendixAspectsProjectivity}
The following proposition shows that requiring $\rho$ non-projective for projective quantum fields as in \Cref{DefFieldOperator} is no restriction.
\vs

\begin{prop}\label{PropRhoHasTrivialMultiplier}
Assume that for a projective quantum field $(U,\rho,\{[\hat{\Ocal}([x])]\})$ the representation $\rho$ is projective instead of non-projective. 
Then $\rho$ comes with trivial multiplier independent from the multiplier of $U$.
\end{prop}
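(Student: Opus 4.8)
The plan is to analyze the constraints that \Cref{EqGlobalCovariance} places on a projective $\rho$, together with the fact that $U$ is a genuine projective representation with some multiplier $\omega$. Write $\rho$ with multiplier $\mu$, i.e.\ $\rho([g])\rho([h]) = \mu([g],[h])\,\rho([gh])$ for all $[g],[h]\in\overline{\PGL_5\rr}$ (equivalently $\PGL_5\rr$, since all finite-dimensional representations factor through $\PGL_5\rr$). First I would apply \Cref{EqGlobalCovariance} twice: transforming $\hat{\Ocal}_\alpha([x])$ first by $[h]$ and then by $[g]$ gives, using the multiplier of $U$ on the left-hand side, a relation of the form
\begin{equation*}
U([gh])\hat{\Ocal}_\alpha([x])U^\dagger([gh]) = \sum_\beta \big(\rho([h^{-1}])\rho([g^{-1}])\big)_{\alpha\beta}\,\hat{\Ocal}_\beta([gh\cdot x])\,,
\end{equation*}
where the $\omega$-factors from $U([g])U([h])$ and from $U^\dagger([h])U^\dagger([g]) = \overline{\omega}\cdots U^\dagger([gh])$ cancel (the multiplier has modulus one), so no $\omega$ survives. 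Comparing with the direct application of \Cref{EqGlobalCovariance} for the single element $[gh]$ forces $\rho([h^{-1}])\rho([g^{-1}]) = \rho([h^{-1}g^{-1}]) = \rho([(gh)^{-1}])$ as operators on the span of the field operator components — hence $\mu([h^{-1}],[g^{-1}])$ times the identity equals the identity on that span. Since $\rho$ acts on the finite-dimensional module, and the field operator tuple is non-zero, this pins $\mu \equiv 1$.

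The key steps in order: (1) fix representatives and write out the double-transformation identity, carefully tracking the $\omega$-factors from $U$ and noting they cancel in pairs regardless of what $\omega$ is; (2) conclude $\rho([h^{-1}])\rho([g^{-1}]) = \rho([(gh)^{-1}])$ must hold when both sides act via the coefficients $\rho_{\alpha\beta}$ on the $\hat{\Ocal}_\beta$; (3) since $[\hat{\Ocal}([x])]\neq 0$ and the distributional/operator components are not all trivial, deduce that the scalar $\mu([h^{-1}],[g^{-1}])$ must equal $1$ for all $[g],[h]$, i.e.\ $\mu$ is trivial; (4) observe this conclusion used nothing about $\omega$ beyond $|\omega|=1$, establishing the claimed independence from the multiplier of $U$. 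One should also remark that a priori $\rho$ carries a multiplier valued in $\cc^\times$ rather than $\Uu(1)$, but finite-dimensionality and the standard rescaling argument reduce it to a $\Uu(1)$-valued one, or one can just argue directly with $\cc^\times$; either way the cancellation argument is unaffected.

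The main obstacle I anticipate is step (3): making rigorous the passage from ``$\sum_\beta \mu\cdot\rho_{\alpha\beta}([(gh)^{-1}])\hat{\Ocal}_\beta = \sum_\beta \rho_{\alpha\beta}([(gh)^{-1}])\hat{\Ocal}_\beta$'' to ``$\mu = 1$''. This is clean if the components $\hat{\Ocal}_1,\dots,\hat{\Ocal}_{\dim\rho}$ are linearly independent as operator-valued distributions (then comparing coefficients of a nonzero $\hat{\Ocal}_\beta$ immediately gives $\mu=1$); if they are not, one has to argue on the quotient — the field operators span a subspace of dimension $d' \le \dim\rho$, and $\rho$ restricted there is still a (sub)representation with the same multiplier, and since $d'\ge 1$ (the tuple is non-zero), the scalar $\mu$ still has to act as the identity on a nonzero space, forcing $\mu=1$. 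I would phrase the argument to cover this case, e.g.\ by passing to the subrepresentation generated by the actual field operator components, so that no genericity assumption on $\rho$ is needed.
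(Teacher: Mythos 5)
Your proposal is correct and takes essentially the same route as the paper: apply the covariance relation \eqref{EqGlobalCovariance} twice, observe that the $U$-multiplier factors drop out (the paper invokes $\omega_U([g],[h])^2=1$ from its classification of multipliers, you invoke $|\omega_U|=1$ via the adjoint, which yields the same cancellation), and conclude that $\omega_\rho\equiv 1$ from the non-vanishing of the field operators. Your anticipated obstacle in step (3) is in fact harmless: since $\rho([(gh)^{-1}])$ is invertible, $(\omega_\rho-1)\sum_\beta\rho_{\alpha\beta}([(gh)^{-1}])\hat{\Ocal}_\beta=0$ for all $\alpha$ together with $[\hat{\Ocal}([gh\cdot x])]\neq 0$ already forces $\omega_\rho=1$ without any linear-independence or quotient argument.
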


\begin{proof}
Assume $U$ and $\rho$ have multipliers $\omega_U$ and $\omega_\rho$, respectively.
The generalized unitary transformation behavior \eqref{EqGlobalCovariance} of representatives $\hat{\Ocal}([x])$ of the field operators $[\hat{\Ocal}([x])]$, $[x]\in \RP^4$, implies for ${[g],[h]\in\PGL_5\rr}$:
\begin{align}\label{EqPropRhoTrivialMultiplierProof}
\sum_{\beta,\gamma}\rho_{\alpha\beta}([h^{-1}])\rho_{\beta\gamma}([g^{-1}]) \hat{\Ocal}_\gamma([gh\cdot x]) =&\;  \sum_\beta\rho_{\alpha\beta}([h^{-1}])U([g])\hat{\Ocal}_\beta([h\cdot x]) U^\dagger([g]) \nonumber\\
=&\; U([g])U([h]) \hat{\Ocal}_\alpha([x]) U^\dagger([h]) U^\dagger([g])\nonumber\\
=&\; \omega_U([g],[h])^2 U([gh]) \hat{\Ocal}_\alpha([x]) U^\dagger([gh])\nonumber\\
=&\; U([gh]) \hat{\Ocal}_\alpha([x]) U^\dagger([gh])\nonumber\\
=&\; \sum_\beta \rho_{\alpha\beta}([h^{-1} g^{-1}]) \hat{\Ocal}_\beta([gh\cdot x])\,,
\end{align}
where we employed that $\omega_U([g],[h])^2 = 1$ for all $[g],[h]\in\PGL_5\rr$ by \Cref{PropCohomologyGeneratorsPGL5}.
By \Cref{EqPropRhoTrivialMultiplierProof}:
\begin{equation*}
\hat{\Ocal}([gh\cdot x]) = \rho([g]) \rho([h]) \rho([h^{-1}g^{-1}]) \hat{\Ocal}([gh\cdot x]) = \omega_\rho([g],[h]) \hat{\Ocal}([gh\cdot x])\,,
\end{equation*}
i.e., $\omega_\rho([g],[h])=+1$.
\end{proof}

% ----------------------- new section ---------------------

\section{Dependence of projective quantum fields on the ambient geometry}\label{AppendixAmbientGeometryChoice}

The construction of projective quantum fields depends on the non-unique choice of an ambient projective geometry, which can incorporate four-dimensional geometries as subgeometries, here taken to be $(\RP^4,\PGL_5\rr)$. 
We show that this is at least partly without loss of generality.
\vs

\begin{prop}\label{PropDeformationsLimitsLargeAmbientGeometry}
Let $\Gg=(\Xx,\Oo)<(\RP^4,\PGL_5\rr)$ be a four-dimensional geometry. 
If $\Gg$ is viewed as a subgeometry of $(\RP^{m-1},\PGL_{m})$, $m\geq 5$, all deformations and limits preserving the subgeometry $(\RP^4,\PGL_5\rr)<(\RP^{m-1},\PGL_{m})$, which in turn contains $\Gg$, arise up to conjugation within $(\RP^{m-1},\PGL_{m})$ from deformations and limits of $\Gg$ in $(\RP^4,\PGL_5\rr)$.
\end{prop}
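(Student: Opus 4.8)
\textbf{Proof proposal for \Cref{PropDeformationsLimitsLargeAmbientGeometry}.}

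The plan is to exploit the block-triangular structure that the embedding $(\RP^4,\PGL_5\rr)<(\RP^{m-1},\PGL_m\rr)$ imposes, and then reduce everything to statements already established for conjugacy limits in $\PGL_5\rr$. First I would fix the embedding explicitly: identify $\PGL_5\rr$ with the subgroup of $\PGL_m\rr$ acting on the first five homogeneous coordinates and fixing (projectively) the last $m-5$, so that elements of $\PGL_5\rr\hookrightarrow\PGL_m\rr$ are projective classes of block matrices $\Pp\begin{pmatrix} g & 0 \\ 0 & 1_{(m-5)\times(m-5)}\end{pmatrix}$ with $g\in\GL_5\rr$. Correspondingly $\RP^4\hookrightarrow\RP^{m-1}$ as $\{[x_0,\dots,x_4,0,\dots,0]\}$. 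The hypothesis ``preserving the subgeometry $(\RP^4,\PGL_5\rr)$'' means we restrict to sequences $[b_n]\in\PGL_m\rr$ whose deformations $[b_n]_*$ map this fixed subgeometry into itself (in the limit), i.e.\ the limiting model space stays inside $\RP^4$ and the limiting structure group inside $\PGL_5\rr$ up to the ambient conjugation.

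The key step is a normal-form argument for such $[b_n]$. Using that the stabilizer in $\PGL_m\rr$ of the flag $(\RP^4\subset\RP^{m-1})$ together with the compact and diagonal factors of a $KAK$ or $KBH$ decomposition (as in the proof of \Cref{PropPointwiseConvergenceModelSpacePoints} and \Cref{ThmMultipleContrations}) exhausts the relevant sequences up to ambient conjugation, I would show that any $[b_n]$ preserving the subgeometry can, after conjugating by a fixed element of $\PGL_m\rr$ and absorbing bounded factors, be taken of block form $\Pp\begin{pmatrix} c_n & 0 \\ 0 & d_n\end{pmatrix}$ with $[c_n]\in\PGL_5\rr$ and $[d_n]$ acting on the complementary coordinates. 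Since the deformation must preserve $(\RP^4,\PGL_5\rr)$ and leave the complementary block essentially inert in the limit, one can rescale so that $[d_n]$ converges to the identity (or is constant); the whole limiting behavior of $\Gg$ is then governed by $[c_n]$ alone. Concretely: $\Ad_{[b_n]}$ restricted to (the image of) $\ofrak=\Lie(\Oo)\subset\pgl_5\rr$ equals $\Ad_{[c_n]}$, and the action on $\Xx\subset\RP^4$ equals the $[c_n]$-action, so $\lim_{n\to\infty}[b_n]_*\Gg$ inside $(\RP^{m-1},\PGL_m\rr)$ coincides, up to the ambient conjugation already performed, with $\lim_{n\to\infty}[c_n]_*\Gg$ computed in $(\RP^4,\PGL_5\rr)$. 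The convergence statements for Lie algebras and model spaces carry over verbatim because, as noted in the main text, Jacobi determinants are trivial projectively and the relevant limits are taken in the compact Chabauty and model-space topologies, which are insensitive to the inert complementary block.

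I expect the main obstacle to be the normal-form reduction itself: showing rigorously that ``preserving the subgeometry'' forces $[b_n]$ into block form up to ambient conjugation, rather than allowing sequences that mix the $\RP^4$ coordinates with the complementary ones in a way that only collapses onto $\RP^4$ in the limit. The cleanest way around this is to argue at the level of limiting flags: the condition that $\lim[b_n]_*\Gg$ stays inside the fixed copy of $(\RP^4,\PGL_5\rr)$ constrains $\lim[b_n]\cdot\RP^4$ and $\lim\Ad_{[b_n]}\PGL_5\rr$ to lie in that copy, and a standard argument using the $KBH$ decomposition relative to the parabolic stabilizing the flag lets one replace $[b_n]$ by a conjugate of block-diagonal type with bounded error absorbed into the compact factor $K$ — exactly the device used at the end of the proof of \Cref{ThmMultipleContrations}. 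Once the block form is in hand, the remainder is bookkeeping, and the proposition follows from \Cref{LemmaQuantumFieldsTransformNaturallySmoothly} and the classification of limits in \cite{cooper2018limits} applied inside $(\RP^4,\PGL_5\rr)$.
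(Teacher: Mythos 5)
Your overall strategy matches the paper's: embed $\PGL_5\rr$ block-diagonally in $\PGL_m\rr$, force the deforming sequence $[b_n]$ into block form, and observe that conjugation then acts purely through the upper-left $5\times 5$ block, so every deformation and limit of $\Gg$ preserving the embedded copy of $(\RP^4,\PGL_5\rr)$ is, up to ambient conjugation, one computed inside $(\RP^4,\PGL_5\rr)$. The difference lies in how the block form is obtained, and here you make the step considerably heavier than the paper does. The paper reads the hypothesis as each $[b_n]$ preserving the subgeometry under the adjoint action; writing $[b_n]=\Pp\left(\begin{smallmatrix} A_n & B_n\\ C_n & D_n\end{smallmatrix}\right)$, preservation of the embedded $\RP^4$ and of the normalized subgroup $\Pp(((\Pp^{-1}\Oo)\cap\SL_5\rr)\times 1)$ forces $B_n=0$ and $C_n=0$ outright (the block subgroup's invariant subspace decomposition $\rr^5\oplus\rr^{m-5}$ must be respected), after which $[b_n]\,\Pp(S\times 1)\,[b_n^{-1}]=\Pp(A_nSA_n^{-1}\times 1)$ and the claim is immediate, with the model-space statement following analogously. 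No $KAK$ or $KBH$ decomposition, parabolic stabilizer of a flag, or absorption of bounded factors is invoked. Your worry about sequences that mix the $\RP^4$ coordinates with the complementary ones and only collapse onto the subgeometry in the limit is a legitimate concern about the weaker "preserved in the limit" reading of the statement, but under the paper's reading it simply does not arise; if you do want to handle that weaker reading, the $KBH$-relative-to-a-parabolic device you sketch is plausible but would need to be carried out, since as written it is the acknowledged gap in your argument.
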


\begin{proof}
With $\Oo< \PGL_{m}\rr$ a subgroup, it is conjugate within $\PGL_{m}\rr$ for $m\geq 5$ to the canonical form
\begin{equation*}
 \Pp(((\Pp^{-1}\Oo)\cap \SL_5\rr)\times 1):= \Pp \left(\begin{matrix}
(\Pp^{-1}\Oo)\cap \SL_5\rr & \\
 & 1_{(m-5)\times (m-5)}
\end{matrix}\right)\,.
\end{equation*}
Let $[b_n]\in \PGL_{m}\rr$. 
We write
\begin{equation*}
[b_n] = \Pp\left(\begin{matrix}
A_n & B_n\\
C_n & D_n
\end{matrix}\right)\,,
\end{equation*}
with $A_n\in \GL_5\rr$ and all other submatrices of matching sizes.
Preservation of $(\RP^4,\allowbreak \PGL_5\rr)< (\RP^{m-1},\PGL_{m})$ under the adjoint action of $[b_n]$, where $\Gg < (\RP^4,\allowbreak \PGL_5\rr)< (\RP^{m-1},\allowbreak \PGL_{m})$, yields $B_n = 0$ and $C_n=0$.
Then, 
\begin{equation*}
[b_n] \Pp(((\Pp^{-1}\Oo)\cap \SL_5\rr)\times 1) [b_n^{-1}] = \Pp(A_n ((\Pp^{-1}\Oo)\cap \SL_5\rr) A_n^{-1} \times 1)\,.
\end{equation*}
Therefore, deformations and limits of $\Oo<\PGL_{m}\rr$ are up to conjugation within $\PGL_{m}\rr$ the same as within $\PGL_5\rr$, if the subgroup $\PGL_5\rr< \PGL_m\rr$ containing $\Oo$ is preserved. 
The claim for the related model spaces follows analogously.
\end{proof}

\begin{lem}
Consider a projective quantum field $\hat{\Ocal} = (U,\rho,\{[\hat{\Ocal}([x])]\,|\, [x]\in\RP^{m-1}\})$ for the ambient geometry $(\RP^{m-1},\PGL_{m}\rr)$, $m\geq 5$, i.e., $U:\PGL_{m}\rr\to \Ucal(\Hcal)$ is a projective unitary representation and $\rho$ is a finite-dimensional complex $\overline{\PGL_{m}\rr}$ representation, so that the generalized unitary transformation behavior \eqref{EqGlobalCovariance} holds on all $(\RP^{m-1},\PGL_m\rr)$.
Restricted to a four-dimensional geometry $\Gg = (\Xx,\Oo)<(\RP^{m-1},\allowbreak \PGL_{m}\rr)$, deformations and limits of $\hat{\Ocal}|_{(\Xx,\Oo)}$, which preserve the second embedding of $(\Xx,\Oo) < (\RP^4,\PGL_5\rr) < (\RP^{m-1},\PGL_{m}\rr)$, agree up to conjugation in $\PGL_{m}\rr$ with those of $\hat{\Ocal}$ for the ambient geometry $(\RP^4,\PGL_5\rr)$.
\end{lem}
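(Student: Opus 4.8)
The plan is to obtain the statement by chaining together two results already in hand: \Cref{PropDeformationsLimitsLargeAmbientGeometry}, which settles the question at the level of the underlying geometry, and \Cref{LemmaQuantumFieldsTransformNaturallySmoothly}, which records how the representation $\rho$ and the field operators are dragged along a deformation or a limit. First I would note that restriction of projective quantum fields is transitive: from $(\Xx,\Oo) < (\RP^4,\PGL_5\rr) < (\RP^{m-1},\PGL_m\rr)$ one has $\hat{\Ocal}|_{(\Xx,\Oo)} = \big(\hat{\Ocal}|_{(\RP^4,\PGL_5\rr)}\big)|_{(\Xx,\Oo)}$, and the intermediate object $\hat{\Ocal}|_{(\RP^4,\PGL_5\rr)}$ is itself a projective quantum field in the sense of \Cref{DefFieldOperator} for the ambient geometry $(\RP^4,\PGL_5\rr)$: its unitary part is $U|_{\PGL_5\rr}$, its finite-dimensional part is $\rho$ restricted to a copy of $\PGL_5\rr$ inside $\PGL_m\rr$ (equivalently, to $\overline{\PGL_5\rr}$ via the induced map of covers), the index set shrinks to $\RP^4\subset\RP^{m-1}$, and \Cref{EqGlobalCovariance} continues to hold because it already holds on all of $\PGL_m\rr$ and $\RP^{m-1}$. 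It therefore suffices to compare, for embedding-preserving sequences $[b_n]\in\PGL_m\rr$, the deformation $[b_n]_*\big(\hat{\Ocal}|_{(\Xx,\Oo)}\big)$ with the deformations of $\hat{\Ocal}|_{(\RP^4,\PGL_5\rr)}$ induced by sequences in $\PGL_5\rr$.

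Next, following the proof of \Cref{PropDeformationsLimitsLargeAmbientGeometry}, I would conjugate once within $\PGL_m\rr$ so that $\Oo$ is in block-diagonal canonical form and $[b_n]$ is block-diagonal with blocks $A_n\in\GL_5\rr$ and $D_n\in\GL_{m-5}\rr$, the vanishing of the off-diagonal blocks being exactly what preservation of the embedding $(\RP^4,\PGL_5\rr)<(\RP^{m-1},\PGL_m\rr)$ forces. On the subgeometry $[b_n]$ acts on $\Xx\subset\RP^4$ and by $\Ad$ on $\Oo$ through its $\GL_5$-block only, so $[b_n]\cdot\Xx = [A_n]\cdot\Xx$ and $\Ad_{[b_n]}\Oo = \Ad_{[A_n]}\Oo$. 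Feeding this into \Cref{LemmaQuantumFieldsTransformNaturallySmoothly} gives
\begin{equation*}
[b_n]_*\big(\hat{\Ocal}|_{(\Xx,\Oo)}\big) = \Big( U|_{\Ad_{[A_n]}\Oo},\ \rho,\ \big\{\,[\rho([b_n^{-1}])\hat{\Ocal}([x])]\ \big|\ [x]\in[A_n]\cdot\Xx\,\big\}\Big)\,,
\end{equation*}
and likewise for the $n\to\infty$ limit with $[\rho_\infty] = \lim_n [\rho([b_n^{-1}])]$. Everything on the right-hand side — the conjugated structure group, the model-space index set, and the representation $\rho$ (which is literally unchanged) — is exactly what one obtains by deforming $\hat{\Ocal}|_{(\RP^4,\PGL_5\rr)}$ by $[A_n]\in\PGL_5\rr$, the only remnant of the larger ambient dimension being the block-diagonal factor $\rho([b_n^{-1}]) = \rho\big(\Pp\,\diag(A_n^{-1},D_n^{-1})\big)$ whose $D_n$-part acts only on the components of $\rho$ transverse to those seen by $\PGL_5\rr$.

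The main obstacle is to show that this residual $D_n$-dependence is harmless and to pin down the clause ``up to conjugation in $\PGL_m\rr$''. I would argue that among the embedding-preserving sequences the $\GL_{m-5}$-block is a free parameter — it enters neither $[b_n]\cdot\Xx$ nor $\Ad_{[b_n]}\Oo$ nor the restriction of $\rho$ to the $\PGL_5\rr$-subgeometry — so that the choice $D_n\equiv 1_{(m-5)\times(m-5)}$ reproduces verbatim the $\PGL_5\rr$-deformation of $\hat{\Ocal}|_{(\RP^4,\PGL_5\rr)}$, while for any other choice the two deformed fields are intertwined by the single $\PGL_m\rr$-conjugation already performed together with the commuting transverse block, which is the intended meaning of the assertion. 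For the limit statement one first normalizes $[b_n]$ by a $KAK$-type canonical form, exactly as in the proofs of \Cref{LemmaQuantumFieldsTransformNaturallySmoothly,PropPointwiseConvergenceModelSpacePoints}, so that the $D_n$-block may be taken to converge; then $[b_\infty]$ exists as a projective matrix if and only if $[A_\infty]$ does, and \Cref{LemmaQuantumFieldsTransformNaturallySmoothly} identifies $\lim_n [b_n]_*\big(\hat{\Ocal}|_{(\Xx,\Oo)}\big)$ with the corresponding $\PGL_5\rr$-limit up to the same conjugation. If one additionally wants the claim for the induced correlator algebras, \Cref{PropCanonicalTensorProductRepOperatorProducts} propagates the argument from the field to $\Afrak(\Xx)$ without change.
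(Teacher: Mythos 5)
Your proposal is correct and follows essentially the same route as the paper, which simply cites \Cref{PropDeformationsLimitsLargeAmbientGeometry} together with the transformation properties of projective quantum fields (\Cref{LemmaQuantumFieldsTransformNaturallySmoothly}); you have merely filled in the details the paper leaves implicit, in particular the block-diagonal form forced by embedding preservation and the harmlessness of the residual $D_n$-block, which is exactly what the ``up to conjugation in $\PGL_m\rr$'' clause absorbs.
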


\begin{proof}
The claim is a direct consequence of \Cref{PropDeformationsLimitsLargeAmbientGeometry} together with projective quantum field properties.
\end{proof}

This does not include ambient geometries which are not real projective geometries. 
Yet, in three dimensions one can mathematically heuristically argue in favor of the projective geometry setting.
Namely, all eight Thurston geometries included in Thurston's geometrization program (almost) admit a representation in real projective geometry~\cite{molnar1997projective, thiel1997einheitliche}. 

\end{appendices}

\bibliographystyle{apsrev4-1_custom}
\bibliography{literature}

\end{document}